\let\pref=\prettyref
\newcommand{\savehyperref}[2]{\texorpdfstring{\hyperref[#1]{#2}}{#2}}
\newtheorem{theorem}{Theorem}[section]
\newtheorem{lemma}[theorem]{Lemma}
\newtheorem{proposition}[theorem]{Proposition}
\newtheorem{definition}{Definition}
\theoremstyle{definition}
\newcommand{\Z}{\mathbb Z}
\newcommand{\eqdef}{\stackrel{\textit{\rm\tiny def}}{=}}
\newcommand{\abs}[1]{\left\lvert #1 \right\rvert}
\newcommand{\pwlmon}[1]{\mathcal{P}^{(#1)}}
\newcommand{\pprice}[1]{\mathcal{P}^{(#1)}}
\newcommand{\peq}[1]{\mathcal{P}^{(#1)}_1}
\newcommand{\palloc}[1]{\mathcal{P}^{(#1)}}
\newcommand{\pallocseq}[1]{\mathcal{P}^{(#1)}_2}
\newcommand{\pallocseqtie}[1]{\mathcal{P}^{(#1)}}
\newcommand{\ppricemon}[1]{\mathcal{P}^{(#1)}_3}
\newcommand{\putilmon}[1]{\mathcal{P}^{(#1)}_4}
\newcommand{\putilcont}[1]{\mathcal{P}^{(#1)}_5}
\newcommand{\price}[2]{p^{(#1)}_{#2}}
\begin{document}

\begin{titlepage}
\thispagestyle{empty}

\title{Sequential Auctions of Identical Items with\\
Budget-Constrained Bidders}

\author{Zhiyi Huang\thanks{University of Pennsylvania. Supported in part by an ONR MURI Grant N000140710907. Part of this work was done while the author was an intern at Microsoft Research, Redmond. Email: {\tt hzhiyi@cis.upenn.edu}.}
\and
Nikhil R. Devanur\thanks{Microsoft Research, Redmond. Email: {\tt nikdev@microsoft.com}.}
\and
David Malec\thanks{University of Wisconsin-Madison. Part of this work was done while the author was an intern at Microsoft Research, Redmond. Email: {\tt  dmalec@cs.wisc.edu}.}}

\date{}

\maketitle

\begin{abstract}
    \thispagestyle{empty}
    In this paper, we study sequential auctions with two budget constrained bidders and any number of identical items. 
    All prior results on such auctions consider only two items. 
 We construct a canonical outcome of the auction that is the only natural equilibrium and is unique under a refinement of subgame perfect equilibria. 
 We show  certain interesting properties of  this equilibrium; for instance, we show that the prices decrease as the auction progresses. This phenomenon has been observed in many experiments and 
 previous theoretic work attributed it to features such as uncertainty in the supply or  risk averse bidders. 
 We show that such features are not needed for this phenomenon and that it arises purely from the most essential features: budget constraints and the sequential nature of the auction.
A little surprisingly we also show that in this equilibrium one agent wins all his items in the beginning and then the other agent wins the rest. The major difficulty in analyzing such sequential auctions has been in understanding how the selling prices of the first few rounds affect the utilities of the agents in the later rounds. We tackle this difficulty by identifying certain key properties of the auction and the proof is via a joint induction on all of them. 
\end{abstract}

\end{titlepage}

\section{Introduction} 
\label{sec:intro}


Currently, there is a rich and well developed theory of auctions, most of it focused on single-round auctions, where all the items are auctioned off at once. 
Yet, many real world instances of auctions for multiple items are sequential in nature and the theory behind such auctions is still in a nascent stage. 
Another commonly occurring real-world aspect of auctions is budget constraints and incorporating budget constraints into traditional auction theory has often been quite challenging. 
The combination of sequential auctions and budget constraints is very natural and occurs commonly. 
This has been well recognized by the economics community. 
A modern-day instance of  this combination is the much studied ad-auction; there have been many theoretical results capturing different aspects of  ad-auctions. 
However there has been little to no theoretical results on the sequential nature of these auctions, even though there have been 
empirical studies that suggest bidders in these auctions do strategize to exhaust their competitors' budgets  
and take advantage of an emptier playing field \citet{ZhouL07}.  
The goal of this paper is to characterize and extend our understanding of sequential auctions with budget constrained bidders.

Our results are based on a simple yet rich model that preserves the most essential features we wish to understand: the sequential nature combined with budget constraints. 
The model is simple because the items are identical, the number of items is fixed and known and agents have complete information.\footnote{Understanding idealized models such as one with complete information case provides a benchmark with which we can compare more realistic settings. }
Yet, the model is rich enough that the equilibrium outcomes display a rather complicated pattern (See Tables \ref{tab:twoitem} and \ref{tab:threeitem}). 
Our results need substantial work and the difficulties we face and the techniques we employ are summarized in  \pref{sec:difficulties}.

Our model is as follows: there are multiple identical items and
two agents interested in obtaining them. Each agent wants to acquire as
many items as possible but is subject to a {\em budget} constraint. The
items are auctioned off {\em sequentially}, that is, the items are
auctioned one after the other, each auction starting after the previous
one is completed.  The agents have {\em complete information}, that is
each agent knows the budget of the other and the total  number of items.
This scenario can be thought of as a $2$-player extensive game. The
question we seek to answer in this paper is, what are the subgame
perfect equilibria of this game. 

The above question is one of the most basic that can be asked about
sequential auctions with budget constrained bidders. The intuition for
what should happen in the above game is also fairly straight forward:
each agent tries to exhaust the budget of the other as soon as possible,
so that he can have the rest of the items for himself at a low price.
This intuition as well as the importance of this question was recognized
as early as \citet{pitchik1986budget}. However so far it has not been
possible to analyze such games with more than two items and these games
have come to be regarded as quite difficult to analyze. The difficulty
arises from the fact that as the number of rounds increases, the effects
of an outcome in the first round on those in the later rounds become
extremely complex. At least part of this difficulty arises due to the
budget constraint as opposed to, say, a unit-demand constraint. (We
present a detailed discussion of these aspects in
\pref{sec:difficulties}).  

In each round we consider a first-price sealed-bid auction. One could
also consider alternatives such as a second-price auction
(See \pref{sec:secondprice} for a comparison) or an ascending price
auction, but for technical reasons a first-price auction is most
appropriate. One issue with 
subgame perfect equilibria as such is that there are many unnatural equilibria, for instance, even for a single-item auction. 
We need to consider a stronger notion of equilibrium in 
order to rule out these unnatural equilibria, and the one we use is a variant of the
trembling hand perfect equilibrium. (See \pref{sec:prelim} for a precise definition.) 
Under this refinement, we show that there is a unique subgame perfect equilibrium of this game, {\em for any number of items}.
(All earlier results only consider two items.) 
This unique equilibrium is indeed the natural equilibrium of the game and we refer to this as the canonical equilibrium.
We show several properties of this equilibrium,  some surprising and some expected. 
\begin{itemize}
\item The number of items won by an agent is approximately proportional to his budget. 
\item The prices are monotonically non-increasing as the auction progresses. 
\item One agent wins all of his items in the beginning and then the other agent wins the remaining items. 
\end{itemize} 

We now discuss each of these properties in more detail. 

\paragraph{Number of Items Won} This is perhaps the least surprising of all the properties. It is natural to expect that the number
of items won is approximately proportional to the budget. However, an interesting contrast surfaces when we compare the number
of items won here with the number of items won in the adaptive clinching
auction of \citet{dobzinski2011multi}. It turns out that
the number of items won in the sequential auction is more equitable, and closer to the proportion of the budget than in the 
adaptive clinching auction. A detailed comparison is presented in
\pref{sec:clinching}. 

\paragraph{Monotonicity of Prices} \citet{pitchik1988perfect} ran lab
experiments of sequential auctions that showed that prices decrease as
the auction progresses. Since then such experiments have been repeated
by others and the monotonic decrease of prices has been reaffirmed
\citet{ashenfelter1989auctions}. Almost all subsequent theoretical
results have tried to capture this phenomenon. Actually the first among
these, e.g., \citet{weber1981multiple} and \citet{milgrom1982theory}
showed the opposite, that prices increase. 
Later, different results attributed the decreasing price phenomenon to different features of the model: \citet{jeitschko1999equilibrium} showed
that decreasing prices could occur due to uncertainty in
the number of items, \citet{black1992systematic} attributed it to
decreasing marginal utilities and \citet{mcafee1993declining} to risk aversion. The game we
consider does not have any of these features: the number of items  is
fixed and known, the marginal utilities are constant and the agents are
risk neutral. Our result shows that the decreasing prices phenomenon
occurs purely from the most essential properties: the budget constraint
and the sequential nature of the game. 

\paragraph{The Order of Sale} As one would expect, the number of items
won by an agent is approximately proportional to his budget. The real
surprise is in the  order in which the items are won. Although the
intuition we mentioned earlier says that an agent tries to exhaust the
other agent's budget, it is conceivable that they win alternately to
maintain the ratio of items won along the way. But the equilibrium
outcome is at the other extreme. The ratio of the budgets not only
determines the number of items each agent wins, but also determines
which agent can force the other one to win all of his items in the
beginning and then win the rest of the items. Suppose that the  budget
of agent 1 is fixed and the budget of agent 2 is monotonically
increasing. The outcome changes as follows. Say agent 2 is winning the
first few items initially. At some point his budget becomes large enough
that he can force agent 1 to win the items first, while keeping the {\em
number}  of items won the same. As his budget increases further, there
is once again a transition point where agent 2 can win one more item,
but now has to move back to winning items first. The pattern then
continues until he can win all the items.  There may be some
discontinuity at the transition points since tie breaking rules come
into effect.

\subsection{Related Work} 
Although many variants of sequential auctions with budget constrained bidders have been considered, they are all limited to just two rounds. 
   The seminal paper of \citet{pitchik1986budget} considered sequential auctions with two (non-identical) items and two budgeted constrained bidders with additive valuation over the items. They characterize the equilibria of this game in the complete information setting and prove certain properties. 
   They followed it up with an experimental validation of the decreasing prices phenomenon in \citet{pitchik1988perfect}. 
   Additional empirical evidence of decreasing prices was provided in \citet{ashenfelter1989auctions}.  
   In \citet{benoit2001multiple} the agents have a common value for 2 non identical items with combinatorial valuations.    
   \citet{rodriguez2009sequential} constructs equilibrium with
        declining prices for sequential auction with multiple agents
        under the assumption of decreasing marginal utility. 
   An isolated example of sequential auctions in the context of ad-auctions is \citet{iyengar2006bidding}. They also consider two rounds and their results are dominating strategies for bidders participating in such auctions. 
   Some examples of sequential auctions in settings other than budget constrained bidders are 
        \citet{weber1981multiple,black1992systematic,mcafee1993declining,jeitschko1999equilibrium,hassidim2011non,paes2011sequential}.   
   Budget constraints have also been considered in the setting of simultaneous (vs. sequential) auction design by
        \citet{dobzinski2011multi,borgs2005multi,bhattacharya2010budget,che1998standard,che2000optimal}. 

Our model is also similar to some of the literature on contests such as \cite{Robson}, but there are crucial differences. 
For instance, the problem considered by \cite{Robson} is as follows: for one item, if the two bidders bid $x$ and $y$, then they {\em both pay their respective bids}  and 
the result is that bidder 1 wins with probability $x^r/(x^r + y^r)$ where $0<r\leq 1$ is some parameter, and 
bidder 2 wins with the remaining probability. Plus there is no {\em sequentiality} in his model. 
These make a huge difference; in his model the equilibrium can be easily solved for and a closed form expression for equilibrium bids is derived. 
In fact, if the items are identical (as in our model), then his solution is simply to divide the budgets equally among all items.
This is very different from the equilibrium behavior in our model as can be seen from Tables \ref{tab:twoitem} and \ref{tab:threeitem}. 
 
\subsection{Techniques and Difficulties} \label{sec:difficulties}

The simplest of all the properties we prove is that the number of items won is approximately proportional to the budgets. Let the total number of rounds be $k$ and the budgets of the agents be $B_1$ and $B_2$. Suppose  agent 1 bids $B_2/k_2$ in every round, for some integer $k_2 \in [0,k]$. Agent 2 can afford to win at most $k_2$ items with this strategy. Therefore agent 1 could ensure $k_1 := \min \{ k - k_2, B_1k_2/B_2 \} $ items.   Optimizing the choice of $k_1$, agent 1 can ensure he wins $k_1 $ items for the highest integer $k_1$ for which $k_1 + k_2 = k$ and 
$B_1/k_1 \geq B_2/k_2$.

The major difficulty comes (naturally) from understanding the interplay
between individual auctions resulting from the shared budget across
them. The central question is what the outcome of the current auction
means to an agent in terms of later rounds.  Intuitively, it seems
sensible that a winning agent always prefers to pay as little as
possible, while a losing agent prefers the winner's payment to be as
high as possible. 
%
Thus, in order to understand the outcome of our auctions, we need to
understand how the utility from a sequence of $k$ auctions
behaves as a function of the remaining budgets.  
Unfortunately the structure of the utility function is rather complicated.  
While an agent's utility from the current round is still linear in the payment for the current
round, there is no guarantee that the role the payment plays in
reducing the budget for future rounds is also linear -- and in fact,
analyzing even cases with few items makes it clear that it is not.  
%
Analyzing small cases, i.e.~$k=1,2,3$ shows that while it does start out simple, adding more
items quickly makes the function very  complicated.  This can be seen
in \pref{tab:twoitem} and \pref{tab:threeitem}.

It is interesting to compare sequential auctions with unit-demand
bidders and budget constrained bidders. Unit-demand bidders only want
one copy of the item, as long as the price is less than their valuation
of the item. In one round of a sequential auction with unit-demand
bidders, only the identity of the winner in that round matters to the
remaining bidders. The subgame only depends on the remaining bidders. This
can be interpreted as bidders having externalities on each other
\citet{paes2011sequential}.  In contrast, in a sequential auction with budget constrained bidders, not only the identity of the winner but also the price paid by her matters to the other bidder. 
The subgame and in particular the utility function depends on the residual budgets.  

%

 First of all, in order to even talk about  {\em the} utility,  you need that the equilibrium is unique, which seems circular. 
We get around this by first defining a {\em canonical outcome} and consider the utilities of the agents in this outcome. 
We identify certain key properties of the canonical outcome such as the monotonicity of prices,
the sequence of wins and the monotonicity and continuity of the utility
function that help us show that this outcome is indeed an equilibrium. 
We finally argue that this equilibrium is essentially unique. 
The proof that the canonical outcome is an equilibrium is by a joint
induction on all the properties mentioned. The proof of each of these properties depends intricately on
the others, inductively. This is summarized in \pref{sec:claims}. 

The essence of all this is that any one round looks like a single item auction in the sense that for each agent there is a critical price above which he prefers to lose that item and below which he prefers to win it. (At the critical price she could either have a strict preference for winning over losing or be indifferent.) Monotonicity of prices as the auction progresses is shown by supposing for the sake of contradiction that the price of the first item is less than that of the second\footnote{This is the only case that we have to consider, by induction.} and showing that this leads to a profitable deviation for the loser of the first item. This leads to a contradiction since (inductively) we have shown that the outcome must be an equilibrium. 
The construction of such a profitable deviation itself relies inductively on other properties. 
The sequence of wins, that is that one agent wins all the items in the beginning followed by the other agent winning the rest of the items, 
is also proved by contradiction via a profitable deviation. 
The utility function is discontinuous at the points of transition where an agent can afford to win one extra item. 
We need to show that this discontinuity propagates in a controlled manner, all of which makes the whole proof 
quite technically challenging.


\paragraph{Organization} 
In \pref{sec:prelim} we give formal definitions of the game, the notions of equilibria we consider and analyze the cases with a few items. 
We give the construction of the canonical outcome, the properties on which we do the joint induction to prove that this outcome is an equilibrium
and their dependence on each other in \pref{sec:claims}. In
\pref{sec:sketch} we give proof sketches of some of the important steps in
the induction. \pref{sec:clinching} contains a comparison of the outcome of the sequential auction and that of the adaptive clinching auction. 
\pref{sec:future} discusses extensions and future work. The full proof
is in \pref{app:proof}.

\section{Preliminaries} 
\label{sec:prelim}
\paragraph{Problem definition}
Suppose a central principal wants to auction $k$ identical items to two
budget constrained agents by running $k$ first-price sealed-bid auctions
in a sequential manner. We seek to understand the equilibrium allocation
and prices of this sequential auction. Here, we define an agent's
utility to be the valuation of the items she gets {\em plus her
remaining budget} at the end of the auction. If the budget is exceeded
then the utility is negative infinity. Note that this is not the
standard definition of quasi-linear utilities. Using this definition of
utilities is essential for obtaining the natural monotonicity of
utilities with respect to the budgets, which will play an important role
in our proofs.

Further, we let $B_i$ denote the
budget of agent $i$ and let $v_i$ denote the value of agent $i$ for
winning each item. We will assume that the values are the
dominant factor in the sense that the agents' primary goal is to
maximize the number of items that they get in the sequential auction. 
Minimizing the total price paid for the items is only a secondary
goal, conditioned on getting the same number of items. More precisely, we
will enforce this assumption by assuming $v_1, v_2 > 2^k B_1, 2^k B_2$.
In this case, any change in the prices will be overcompensated by even a
tiny chance of getting an extra item. We consider the complete information case, 
that is, we assume that each agent knows all the valuations and the budgets. 

Formally, this is an instance of an {\em extensive game with complete
information and simultaneous moves}. (See \cite{mas1995microeconomic} or
\cite{osborne1994course} for a formal definition.) 
In each round, both the agents move simultaneously and submit a bid.  
The outcome is a first-price auction: an item is awarded to the agent with the higher bid and the amount of his bid is deducted from his budget. 
For technical reason, we will allow each agent $i$ to bid $b$ and $b+$
for each $0 \le b < B_i$, where bidding $b+$ means bidding
infinitesimally larger than $b$. If one agent bids $b$ and the other
agent bids $b+$ then the agent bidding $b+$ wins the item and is charged
$b$. This treatment is essential for
the existence of Nash equilibrium in the first price auction and
therefore essential for the sequential auction. See
\cite{paes2011sequential}, for example, for other use of this treatment
in the literature. 
Finally if both agents bid the same, then the item is awarded to an agent uniformly at random. 

The natural solution concept for such games is a {\em subgame perfect equilibrium}. 
Given any partial history of an extensive game, the remainder of the game is yet another extensive game, 
called a subgame of the original game. A subgame perfect equilibrium is a Nash equilibrium of the extensive game such that for any partial history of the game, the induced strategies of the Nash equilibrium is a Nash equilibrium of the induced subgame. Intuitively, this requires that given what has already occurred, each player acts rationally with respect to the remainder of the game. This rules out agents playing threats that are not credible. 

\paragraph{Equilibrium refinement}
The sequential auction we consider has unnatural subgame perfect
equilibria. So we consider refinements of the solution concept that rule
out such equilibria. For instance, in a first-price sealed-bid auction
of a single item where the agents' values are $v_1 > v_2$, agent $1$
bidding $b+$ and agent $2$ bidding $b$ is an equilibrium for any $v_1 >
b \ge v_2$. However, we want to rule out the unnatural equilibria where
$b > v_2$ because in these equilibria agent $2$ bids above her value.

A widely-used refinement for finite games is the {\em
trembling-hand-perfection} originally proposed in
\cite{selten1975reexamination}, which is defined as follows:

\begin{definition}[Trembling-Hand-Perfection] \label{def:thp}
    An equilibrium strategy profile $(\sigma, \tau)$ of a two-player
    finite game $G$ is {\em trembling-hand-perfect} if there exists a
    sequence $\{(\sigma_j, \tau_j)\}_j$ of completely mixed strategy
    profiles such that $(\sigma_j, \tau_j)$ converges $(\sigma, \tau)$
    as $j$ goes to infinity, and $\sigma$ is a best reply to every
    $\tau_j$, and $\tau$ is a best reply to every $\sigma_j$ in the
    sequence.
\end{definition}

Here, a completely mixed strategy profile is one in which every strategy
of every player is played with some positive probability. The
positive, and presumably tiny, probability of playing strategies that
are not in the equilibrium profile corresponds to the fact that players
make small mistakes from time to time (trembling hand). So what
trembling-hand-perfection indicates is that the equilibrium strategy
profile is robust to such small mistakes in the sense that using the
equilibrium strategy maximizes the agent's utility even if we assume the
other agent makes small mistakes. 

Note that this is not the standard definition of
trembling-hand-perfection in the literature. We choose this definition
because it is more convenient for our discussion.  See
\cite{selten1975reexamination} for the standard definition of
trembling-hand-perfection and the equivalence of the different
definitions.

Despite the many appealing properties of trembling-hand-perfection in
finite games, there is no well-accepted definition of
trembling-hand-perfection in the literature for games with continuous
strategy space such as the sequential auctions in this paper. Moreover,
if we use the definition of finite games directly, then {\em
trembling-hand-perfect equilibria may not exist}.
Indeed, a trembling-hand-perfect equilibrium may not exists even for the
first-price auction with a single item and two agents. Suppose the
values are $v_1 > v_2$. Then, consider the equilibrium agent $1$ bidding
$v_2 +$ and agent $2$ bidding $v_2$ (it is easy to rule out the other
equilibria). But bidding $v_2$ is dominated by bidding $v_2 - \epsilon$
for agent $2$: if both cases lose, then they yield the same utility; if
bidding $v_2$ wins and bidding $v_2 - \epsilon$ loses, then winning at
$v_2$ is still dominated because it yield utility zero; finally if both
cases win, then bidding $v_2 - \epsilon$ is strictly better because it
pays less. Therefore, this cannot be a trembling-hand-perfect
equilibrium because folklore result asserts the strategy of each agent
in a trembling-hand-perfect equilibrium must be non-dominated.

A commonly used remedy to the absence of a theory for
trembling-hand-perfection in continuous games is to consider a sequence
of discretized version of the continuous game that converges to it, and
then analyze the limit of the trembling-hand-perfect equilibria of the
discretized games (e.g.~\cite{bagnoli1989provision,broecker1990credit}).
This approach, however, is very difficult to apply to
games with complicated structures such as the sequential auction with
arbitrary number of items.

In order to settle this problem, we will use a slightly weaker
equilibrium refinement, which we refer to as the {\em
semi-trembling-hand-perfect equilibrium}.

\begin{definition}[Semi-Trembling-Hand-Perfection]
    \label{def:sthp}
    An equilibrium strategy profile $(\sigma, \tau)$ of a two-player
    game $G$ is {\em semi-trembling-hand-perfect} if there exists a
    sequence $\{(\sigma_j, \tau_j)\}_j$ of completely mixed strategies
    such that $(\sigma_j, \tau_j)$ converges to $(\sigma, \tau)$, and
    the best reply to $\sigma_j$ converges to $\tau$, and the best reply
    to $\tau_j$ converges to $\sigma$ as $j$ goes to infinity.
\end{definition}

As we can see by comparing \pref{def:sthp} and \pref{def:thp}, the
notion of semi-trembling-hand-perfection is still trying to model the
robustness of the equilibrium strategies with respect to  small
mistakes made by the agents, but in a weaker sense: when the other agent
makes small mistakes, we no longer require the best reply to be exactly
the equilibrium strategy; however, the best reply has to  converge to the
equilibrium strategy as the other agent makes fewer and fewer mistakes.

We will construct a subgame-perfect equilibrium that is unique after the
refinement of semi-trembling-hand-perfection.

\subsection{Warm-up: Few-item cases} \label{sec:fewitem}

Let us first examine the cases with only $1$, $2$ or $3$ items in order
to build our intuition for the problem.

\paragraph{$k = 1$} 
First of all, let us consider the simplest case of a single item. In
this case, the problem becomes the classic first price auction.
Therefore, suppose the agent values are $B_1 < B_2$, then the unique
Nash equilibrium that survives the iterated elimination of dominated
strategies is where agent $1$ bids $B_1$ and agent $2$ bids $B_1^+$ and
wins the item. In the case of $B_1 = B_2$, both agent will bid their
budgets and we have a tie.

\paragraph{$k = 2$} 
Next, let us move on to the more interesting case of a two-item
sequential auction. In order to examine how the allocation of items
changes as the ratio between the budgets changes, let us fix $B_2 = 1$
and gradually increase $B_1$ starting from $0$. 

If $B_1 < \frac{1}{2}$, then agent $2$ has enough budget to get both
items.  Since the value of a single item dominates any changes in the
payment, agent $2$ wins both items by bidding $B^+_1$ in both rounds.

If $\frac{1}{2} < B_1 < 1$, then agent $1$ can guarantee herself an item
by bidding $B_1$ in both rounds. So the best strategy for agent $2$ is
to let agent $1$ win the first item and pay the highest possible price.
There are two types of credible threats that agent $2$ could use to set
the price for the first item. The first threat is to offer agent $1$ the
first item at price $B_2 - B_1$, threatening that if agent $2$ wins the
first item at this price then she has sufficient budget remaining to win
the second item as well. The second threat is to offer the first item at
price $\frac{B_1}{2}$, threatening that if agent $1$ do not take the
first item at this price then she would need to pay a higher price in
order to win the second item. 

Agent $2$ uses the larger of these two threats to set the price in round 1. 
If $\frac{1}{2} < B_1 <\frac{2}{3}$, then agent $2$ uses the first threat. 
Agent $1$ wins the first item at price $B_2 - B_1$ and 
agent $2$ wins the second item paying agent $1$'s remaining budget, $2 B_1 - B_2$. 
If $\frac{2}{3} < B_1 < 1$, then agent $2$ exploits the second threat. 
Agent $1$ wins the first item at price $\frac{1}{2} B_1$ and
agent $2$ wins the second item also at price $\frac{1}{2} B_1$.

By symmetry the case of $B_1 > 1$ is identical with the roles of the agents swapped. 
The equilibrium allocations in various cases are summarized in \pref{tab:twoitem}.

\begin{table}
    \centering
    \caption{Equilibrium strategies for $k = 2$}
    \medskip
    \label{tab:twoitem}
	\begin{tabular}{cccc}
		\hline \\ [-2ex]
		~~Phase~~ & Budget ratio ($B_1/B_2$) & Round 1 & Round 2 \\
        [.5ex]
		\hline \\ [-2ex]
		$1$ & $(0, \frac{1}{2})$ & 2 wins at $B_1$ & 2 wins at
        $B_1$ \\ [.5ex] 
        $2$ & $(\frac{1}{2}, \frac{2}{3})$ & 1 wins at $B_2 - B_1$
        & 2 wins at $2 B_1 - B_2$ \\ [.5ex] 
        $3$ & $(\frac{2}{3}, 1)$ & 1 wins at $\frac{1}{2} B_1$ & 2
        wins at $\frac{1}{2} B_1$ \\ [.5ex]
		\hline
    \end{tabular}
\end{table}

\paragraph{$k = 3$} 
The case with three items is much more complicated 
in the sense that there are more possibilities of allocation sequences. 
Here we briefly demonstrate these different
allocation sequences and the intuition behind them. Again, we fix
the budget of agent $2$ to be $B_2 = 1$ and gradually increase agent
$1$'s budget starting from $0$.

The first phase is when $B_1 < \frac{1}{3}$, where agent $2$ has
enough budget to win all three items by bidding $B_1^+$ in all three rounds. 

The second phase is when $\frac{1}{3} < B_1 < \frac{3}{8}$, where
agent $1$ has enough budget to obtain one item.  In this phase, agent
$2$ forces agent $1$ to get the first item at price $B_1 - 2 B_2$
and win the next two items cheaply paying agent $1$'s remaining budget. 
The threat used by agent $2$ is that she could win the remaining two
items as well if agent $1$ does not accept this offer. If agent $2$ gets
the first item at this price, then the induced subgame falls into phase
$1$ of the two-item case.

The third phase is when $\frac{3}{8} < B_1 < \frac{1}{2}$, where
agent $2$ forces agent $1$ to get the first item at price
$\frac{1}{4} B_2$ via a different threat: if agent $2$ wins the first
item at this price, then the induced subgame falls into phase $2$ of 
the two-item case. 
In fact the threshold at which the induced subgame falls into phase $2$
of the two-item case is $B_2 - \frac{3}{2} B_1$.  
However for this to be a credible threat by agent $2$, she must weakly
prefer winning the item at this price to losing it. 
On the one hand, the utility of agent $2$ for winning the first item  at some price $p$, 
assuming that the induced subgame falls into phase $2$ of the two-item case, 
is $2 v_2 + B_2 - p - (2B_1 - (B_2 - p)) = 2v_2 + 2 B_2 - 2 B_1 - 2p$. 
On the other hand, the utility of agent $2$ for losing the first item at price $p$ is
$2 v_2 + B_2 - 2 (B_1 - p) = 2 v_2 + B_2 - 2 B_1 + 2p$. 
The price $p = \frac{1}{4} B_2$ is obtained by equating the two and solving $2v_2 + 2 B_2 - 2 B_1 - 2p = 2 v_2 + B_2 - 2 B_1 + 2p$. 

The rest of the phases are similar in spirit to the above. In the
fourth phase, agent $2$ forces agent $1$ to get the first item at
price $\frac{1}{2} B_1$, threatening that if she wins the item at this
price then the induced subgame falls into phase $3$ of the two-item
case. In the fifth to the seventh phases, it is still the case that
agent $1$ gets one item and agent $2$ gets two. However, agent $1$ now
has enough budget to be in the dominant position in the price
competition. So agent $1$ forces agent $2$ to pay higher prices for
the first two items and then wins the last item, paying agent $2$'s remaining budget. 
Depending on the budget of agent $1$, 
the prices she can set are determined by the phases in the two-item subgame they would end up in, 
if she wins the first item at those prices. 
We do not discuss further details of these threats but summarize the
outcomes in \pref{tab:threeitem}. 

\begin{table}
    \centering
    \caption{Equilibrium strategies for $k = 3$}
    \label{tab:threeitem}
    \medskip
	\begin{tabular}{cccc}
		\hline \\ [-1.5ex]
		Budget ratio ($B_1/B_2$) & Round 1 & Round 2 & Round 3\\ [1ex]
		\hline \\ [-1.5ex]
		$(0, \frac{1}{3})$ & 2 wins at $B_1$ & 2 wins at $B_1$ & 2 wins
        at $B_1$ \\ [1ex]
		$(\frac{1}{3}, \frac{3}{8})$ & 1 wins at $B_2 - 2 B_1$ & 2 wins
        at $3 B_1 - B_2$ & 2 wins at $3 B_1 - B_2$ \\ [1ex]
		$(\frac{3}{8}, \frac{1}{2})$ & 1 wins at $\frac{1}{4} B_2$ & 2
        wins at $B_1 - \frac{1}{4} B_2$ & 2 wins at $B_1 - \frac{1}{4}
        B_2$ \\ [1ex]
        $(\frac{1}{2}, \frac{2}{3})$\footnotemark[1] & 1 wins at
        $\frac{1}{2} B_1$ & 2 wins at $\frac{1}{2} B_1$ & 2 wins at
        $\frac{1}{2} B_1$ \\ [1ex]
		$(\frac{2}{3}, \frac{5}{6})$ & 2 wins at $\frac{1}{3} B_2$ & 2
        wins at $\frac{1}{3} B_2$ & 1 wins at $\frac{1}{3} B_2$ \\
        [1ex]
		$(\frac{5}{6}, \frac{9}{10})$ & 2 wins at $B_1 - \frac{1}{2}
        B_2$ & 2 wins at $\frac{3}{4} B_2 - \frac{1}{2} B_1$ & 1 wins at
        $\frac{3}{4} B_2 - \frac{1}{2} B_1$ \\ [1ex]
		$(\frac{9}{10}, 1)$ & 2 wins at $B_1 - \frac{1}{2} B_2$ & 2 wins
        at $2 B_1 - \frac{3}{2} B_2$ & 1 wins at $3 B_2 - 3 B_1$ \\
        [1ex]
		\hline
    \end{tabular}
    \begin{flushleft}
        {\footnotesize\footnotemark[1] In this case, due to a tie in the
        first round, it could also be that agent $2$ wins the first item
        at price $\frac{1}{2} B_1$, agent $1$ wins the second item at
        price $\frac{1}{2} B_1$, and agent $2$ wins the last item at price
        $\frac{1}{2} B_1$. Since the agents get the same number of items
        at the same prices in both the outcomes, we will break ties in
        some particular way in order to get more consistent structures,
        which are explained in more details in the next section.}
    \end{flushleft}
\end{table}

\paragraph{Observations} We end this section with a few
observations regarding the equilibrium outcomes discussed above. 

\begin{enumerate}
    \item First of all, agent $1$ and agent $2$ get $k_1$ and $k_2$
        items respectively ($k_1 + k_2 = k$)  if and only if the
        budgets satisfy $\frac{k_1}{k_2 + 1} < \frac{B_1}{B_2} <
        \frac{k_1 + 1}{k_2}$.\footnote{We omit the boundary cases
        $\frac{B_1}{B_2} = \frac{k_1}{k_2 + 1}$ here. As we will see
        in the next section, in these boundary cases the agents 
        keep making the same bids until one of them runs out of budget.
        We handle these cases separately.} Intuitively, these conditions
        can be interpreted as follows: if the average price an agent,
        say, agent $1$, can afford for $k_1$ items 
        is strictly greater than the average price agent $2$
        can afford for $k_2 +1 $  items, then agent $1$ can guarantee 
        winning $k_1 $ items. For instance, agent
        $1$ could keep bidding $\frac{B_1}{k_1}$. 
    \item It is not difficult to verify that in each of the
        phases of different allocation sequences, the prices of the
        items are non-increasing in the number of rounds. 
    \item Intriguingly, it is always the case that one of
        the agents wins her share of the items and then the other
        agent wins the rest of the items. In other words, there are no
        interleaving of winners as one might imagine.
    \item Finally, we can easily examine that the utility of an agent
        in the equilibrium is monotonically  increasing in the agent's
        budget and non-increasing in the budget of the other agent. 
        Although this observation seems intuitive, an example in
        \cite{benoit2001multiple} indicates that the utilities might not
        be monotone in the agents' budgets. However, in
        \cite{benoit2001multiple} the utility is defined as the total
        value of the items an agent gets minus the total price that she
        pays.  Indeed, with this definition, an agent's utility could
        counter-intuitively decrease as her budget increases. For
        instance, suppose we fix agent $2$'s budget to be
        $1$ and gradually increases agent $1$'s budget from
        $\frac{2}{3}$ to $1$ in the two-item case. Agent $1$ always gets one item 
        but agent $2$ forces agent $1$ to pay higher and higher price for the item as
        agent $1$'s budget increases. In this paper, we use a different
        notion, an agent's utility is defined to be
        the total value of the items she gets plus the remaining budget
        that she has at the end. With this definition, we can show the desired monotonicity,
        which plays a crucial role in our analysis. 
\end{enumerate}

\section{General Case}
\label{sec:claims}
In this section, we consider the general case of arbitrary number of items and outline the proof structure
and the high level ideas. The easiest part of the whole proof is showing  how the number 
of items won by each agent in any equilibrium depends on the budgets. 
The intuition is that the number of items an agent wins should be approximately proportional to his budget. 
Consider $\frac{k_1}{k - k_1 + 1}$ as a function of $k_1$ and observe that it is monotonically increasing. 
So there must exists a unique $k_1 \in \Z_{\ge 0}$ such that $\frac{k_1}{k
- k_1 + 1} \le \frac{B_1}{B_2} < \frac{k_1 + 1}{k - k_1}$.  If we 
let $k_2 = k - k_1$, then either $\frac{k_1}{k_2 + 1} < \frac{B_1}{B_2}
< \frac{k_1 + 1}{k_2}$, or $\frac{k_1}{k_2 + 1} = \frac{B_1}{B_2}$. We
will handle these two cases separately as follows.

\begin{proposition} \label{prop:alloco}
    Suppose $k_1, k_2 \in \Z_{\ge 0}$ satisfy $k_1 + k_2 = k$ and
    $\frac{k_1}{k_2 + 1} < \frac{B_1}{B_2} < \frac{k_1 + 1}{k_2}$. Then,
    in any subgame perfect equilibrium of the sequential auction, 
    agent $1$ gets $k_1$ items and agent $2$ gets $k_2$ items.
\end{proposition}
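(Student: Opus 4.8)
The plan is to deduce the proposition from two symmetric one-sided \emph{guarantees}: that agent $1$ can force herself to win at least $k_1$ items whatever agent $2$ does, and that agent $2$ can force himself at least $k_2$ items. First I would reinterpret the hypotheses in the ``average price'' language of Observation~1. Assuming $k_1 \ge 1$, the left inequality $\frac{k_1}{k_2+1} < \frac{B_1}{B_2}$ is equivalent to $\frac{B_1}{k_1} > \frac{B_2}{k_2+1}$, i.e.\ the per-item price agent $1$ can sustain while aiming for $k_1$ items strictly exceeds the per-item price agent $2$ can sustain while aiming for $k_2+1$ items; symmetrically, assuming $k_2 \ge 1$, the right inequality $\frac{B_1}{B_2} < \frac{k_1+1}{k_2}$ is equivalent to $\frac{B_2}{k_2} > \frac{B_1}{k_1+1}$. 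The degenerate cases $k_1 = 0$ or $k_2 = 0$ are vacuous (a guarantee of $0$ items is automatic), so I would assume $k_1, k_2 \ge 1$ for the main argument.

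Next I would establish the guarantee for agent $1$ via the explicit strategy: in each round bid $\frac{B_1}{k_1}$ until she has won $k_1$ items, and bid $0$ thereafter. This strategy is budget-feasible, since in a first-price auction each win costs exactly the agent's own bid $\frac{B_1}{k_1}$, and she stops after $k_1$ wins, spending at most $B_1$. The crux is the observation that, facing a bid of $\frac{B_1}{k_1}$, agent $2$ can win a round only by paying at least $\frac{B_1}{k_1}$: whether he outbids strictly, bids $\left(\frac{B_1}{k_1}\right)^+$, or ties and wins the coin flip, he is charged $\ge \frac{B_1}{k_1}$. Now suppose for contradiction that some realization leaves agent $1$ with at most $k_1 - 1$ items; then she never reached the cap, so she bid $\frac{B_1}{k_1}$ in all $k$ rounds, agent $2$ won at least $k - (k_1 - 1) = k_2 + 1$ of them, and hence agent $2$ spent at least $(k_2+1)\frac{B_1}{k_1} > B_2$, exceeding his budget. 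Thus agent $1$ wins at least (in fact exactly) $k_1$ items with probability $1$, for any play of agent $2$ and any tie-breaking; the guarantee for agent $2$ follows by the mirror-image argument using $\frac{B_2}{k_2} > \frac{B_1}{k_1+1}$.

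Finally I would close using the equilibrium condition together with the domination assumption $v_i > 2^k B_i$, which makes the number of items the lexicographically dominant objective. Deviating to the guaranteeing strategy secures agent $1$ exactly $k_1$ items while spending her whole budget, hence a utility of exactly $k_1 v_1$; so in any subgame perfect equilibrium her payoff satisfies $U_1 \ge k_1 v_1$. Writing her realized utility as $X_1 v_1 + R_1$ with remaining budget $R_1 \le B_1 < v_1 / 2^k$, the bound $U_1 \ge k_1 v_1$ forces $X_1 \ge k_1 - \frac{B_1}{v_1} > k_1 - 1$, and integrality gives $X_1 \ge k_1$; symmetrically $X_2 \ge k_2$. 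Since $X_1 + X_2 = k = k_1 + k_2$, both bounds are tight and agent $1$ gets exactly $k_1$ items and agent $2$ exactly $k_2$. I expect the only genuinely delicate point to be the bookkeeping around the $b^+$ convention and random tie-breaking in the guarantee — ensuring the ``at least $\frac{B_1}{k_1}$ per contested win'' bound really holds in the worst case and that the guarantee is a sure (not merely in-expectation) lower bound; the remaining steps are routine, which is consistent with this being the easiest of the paper's properties.
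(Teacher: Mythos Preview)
Your proposal is correct and follows essentially the same route as the paper's proof: exhibit the guaranteeing strategy of bidding $\frac{B_1}{k_1}$ repeatedly, argue that agent~$2$ cannot afford $k_2+1$ wins at that price, invoke the value-dominance assumption to convert the guarantee into an equilibrium lower bound, and finish by symmetry plus $k_1+k_2=k$. The paper's version is terser and omits the careful bookkeeping you do around $b^+$, tie-breaking, and the utility arithmetic, but the core argument is identical.
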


\begin{proof} 
    Regardless of the strategy of agent $2$, agent $1$ can guarantee
    $k_1$ items for herself by bidding $\frac{B_1}{k_1}$ until her
    budget is exhausted. This is because at this price, agent 2 can
    afford to win at most $k_2$ items after which her remaining budget
    is strictly less than $\frac{B_1}{k_1}$. Therefore any outcome in
    which agent 1 gets less than $k_1$ items cannot be an equilibrium
    regardless of the price paid, since we assumed that $v_1 > B_1$. By
    symmetry agent 2 gets at least $k_2$ items in any equilibrium. Since
    $k_1 + k_2 = k$, it must be the case that agent $1$ gets exactly
    $k_1$ items and agent $2$ gets exactly $k_2$ items. 
%
%
\end{proof}

\begin{proposition}
    Suppose $k_1, k_2 \in \Z_{\ge 0}$ satisfy $k_1 + k_2 = k$ and
    $\frac{k_1}{k_2 + 1} = \frac{B_1}{B_2}$. Then, in any equilibrium, 
    agent 1 gets at least $k_1- 1$ items and agent 2 gets at least $k_2$ items. 
    (The tie-breaking comes into effect for the extra item.) 
 \end{proposition}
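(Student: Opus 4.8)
The plan is to re-run both guarantee arguments of \pref{prop:alloco}, tracking carefully how the boundary equality $\frac{k_1}{k_2+1} = \frac{B_1}{B_2}$ changes each of them. First I would rewrite the hypothesis as $B_1(k_2+1) = B_2 k_1$, equivalently $\frac{B_1}{k_1} = \frac{B_2}{k_2+1}$, so that the relevant per-item prices can be compared directly. We may assume $k_1, k_2 \ge 1$, since for $k_1 = 0$ (forcing $B_1 = 0$) or $k_2 = 0$ the corresponding lower bound is vacuous.

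For agent $2$'s guarantee I would have agent $2$ bid $\frac{B_2}{k_2}$ in every round while affordable. To win a round against this, agent $1$ must pay at least $\frac{B_2}{k_2}$, so the number of items she can win is the largest integer $m$ with $m \frac{B_2}{k_2} \le B_1$, i.e.\ $m \le \frac{B_1 k_2}{B_2}$. The boundary equality together with $k_1, k_2 \ge 1$ gives the strict inequality $\frac{B_1}{B_2} < \frac{k_1+1}{k_2}$ (equivalently $(k_1+1)(k_2+1) > k_1 k_2$), hence $\frac{B_1 k_2}{B_2} < k_1 + 1$ and agent $1$ wins at most $k_1$ items. Thus agent $2$ secures at least $k - k_1 = k_2$ items. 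Since $v_2 > B_2$, any equilibrium in which agent $2$ obtains fewer than $k_2$ items would admit this as a profitable deviation (one extra item outweighs any budget savings), so agent $2$ gets at least $k_2$ items in every equilibrium. For agent $1$ I would use the same strategy as in \pref{prop:alloco}, bidding $\frac{B_1}{k_1}$ each round while affordable; since each win costs exactly $\frac{B_1}{k_1}$ and she starts with $B_1 = k_1 \cdot \frac{B_1}{k_1}$, she remains present in every round until she has won $k_1$ items. Each item agent $2$ takes now costs her at least $\frac{B_1}{k_1}$, and because her total budget is \emph{exactly} $(k_2+1)\frac{B_1}{k_1}$, she can win at most $k_2+1$ items. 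Consequently agent $1$ wins at least $k - (k_2+1) = k_1 - 1$ items, and $v_1 > B_1$ again rules out any equilibrium in which she wins fewer. Combining the two bounds with $k_1 + k_2 = k$ yields the claim.

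The one place that genuinely differs from \pref{prop:alloco}, and the point I would be most careful about, is the drop in agent $1$'s guarantee from $k_1$ to $k_1 - 1$. Under the strict inequality of \pref{prop:alloco}, after $k_2$ wins agent $2$'s residual budget is strictly below $\frac{B_1}{k_1}$, so she is priced out of a $(k_2+1)$-st item; under the present equality her residual after $k_2$ wins is exactly $\frac{B_1}{k_1}$, so she can match agent $1$'s bid on one further item and take it through tie-breaking. I would make explicit that this tie is the sole source of the single contested item, so that the bounds $k_1 - 1$ and $k_2$ are the best that either strategy can force and the extra item may legitimately fall to either agent depending on the tie-breaking rule.
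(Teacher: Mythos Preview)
Your proposal is correct and follows essentially the same approach as the paper, which simply states that the proof is ``essentially the same as that of \pref{prop:alloco}'' without giving further details. You have carefully filled in exactly the point where the boundary equality weakens agent~$1$'s guarantee from $k_1$ to $k_1-1$, which is the only nontrivial adaptation needed.
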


The proof of this proposition is essentially the same as that of
\pref{prop:alloco}. 

\subsection{Canonical Outcome}

We now construct a canonical outcome of the game. The heart of the proof
is showing that this is a subgame perfect equilibrium. This involves
exploiting several other properties of this outcome. We then show that
under the refinement of semi-trembling-hand-perfection this is the only
equilibrium. 

The canonical outcome is defined recursively. We start with the
single-item case and assume w.l.o.g.~that $B_1 \ge B_2$. In this case,
the sequential auction is simply the first-price sealed-bid auction. 

\begin{quote}
    {\em Canonical Outcome (base case):~} The canonical outcome 
    for a single-item is agent $1$ bidding $B_2+$ and agent $2$ bidding
    $B_2$ if $B_1 > B_2$, and both agents bidding $B_1 = B_2$ otherwise.
\end{quote}

Suppose we have defined the canonical outcome for $k-1$ or fewer items.
In order to recursively define the canonical outcome for $k$ items, we
need some definitions. We use $i$ and $-i$ to denote agents. If $i$ is
$1$ then $-i$ is $2$ and vice versa. 

\begin{definition}[Utility]
    Let $U^{(k)}_i(B_i, B_{-i})$ denote the utility (expected utility in
    case tie-breaking comes into effect) of agent $i$ in the canonical
    outcome when the budgets are $B_i$ and $B_{-i}$. 
\end{definition}

\begin{definition}[Winning Utility]
    Let $W^{(k)}_i (B_i, B_{-i}, p) \eqdef v_i + U_i^{(k-1)}(B_i -
    p, B_{-i})$ denote the {\em winning utility} of agent $i$, that is,
    the utility when agent $i$ wins the first item at price $p$ and both
    agents follow the canonical outcome in the remaining sequential
    auction with $k-1$ items.
\end{definition}

\begin{definition}[Losing Utility]
    Let $L^{(k)}_i (B_i, B_{-i}, p) \eqdef U^{(k-1)}(B_i, B_{-i} - p)$
    denote the {\em losing utility} of agent $i$, that is, the utility
    when agent $i$ wins the first item at price $p$ and both agents
    follow the canonical outcome in the remaining sequential auction
    with $k-1$ items.
\end{definition}

We now state a monotonicity property of these functions which is used in
the recursive definition of the canonical outcome. This property is
proved along with other properties later. 

\begin{proposition} \label{prop:utilmon}
    $U^{(k)}_i(B_i, B_{-i})$ is increasing in $B_i$, and non-increasing
    in $B_{-i}$.
\end{proposition}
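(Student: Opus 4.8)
The plan is to prove the two monotonicity statements by induction on the number of items $k$, carried out as part of the joint induction outlined in \pref{sec:claims}: at stage $k$ I would assume \pref{prop:utilmon}, together with the structural properties of the canonical outcome (price monotonicity and the order of sale), for every instance with at most $k-1$ items. For the base case $k=1$ the claim is a direct computation. Assuming w.l.o.g.\ $B_i \ge B_{-i}$, agent $i$ wins at price $B_{-i}$, so $U_i^{(1)}(B_i, B_{-i}) = v_i + B_i - B_{-i}$, which is strictly increasing in $B_i$ and strictly decreasing in $B_{-i}$, while the loser keeps her whole budget and therefore has utility equal to her own budget --- increasing in it and independent of the winner's. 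One then checks that the two pieces agree and stay monotone across the switch at $B_i = B_{-i}$.

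For the inductive step, recall that by the recursive construction $U_i^{(k)}(B_i, B_{-i})$ is realized by a first-round winner and a first-round price $p$, after which both agents play the canonical outcome on $k-1$ items. Hence $U_i^{(k)}$ equals either the winning utility $W_i^{(k)}(B_i, B_{-i}, p) = v_i + U_i^{(k-1)}(B_i - p, B_{-i})$ or the losing utility $L_i^{(k)}(B_i, B_{-i}, p) = U_i^{(k-1)}(B_i, B_{-i} - p)$. If the first-round outcome (the winner and the price $p$) were held fixed, the desired monotonicity would be immediate from the inductive hypothesis applied to $U_i^{(k-1)}$: in $W_i$ the argument $B_i - p$ increases with $B_i$ while $B_{-i}$ is untouched, and symmetrically in $L_i$ the argument $B_{-i} - p$ increases with $B_{-i}$. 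The real difficulty is that neither the first-round winner nor the price $p$ is fixed; both are endogenous functions of $(B_i, B_{-i})$.

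The main obstacle, and the step I expect to be hardest, is controlling this endogeneity. The price $p$ is set by the opponent's threat, i.e.\ by the critical price at which the opponent is indifferent between winning and losing the first item --- an indifference condition phrased through $U^{(k-1)}$ --- and the identity of the first-round winner changes at the phase transitions where an agent can just afford one more item. To push the monotonicity through I would (i) use the structural properties supplied by the joint induction to pin down, within each phase, which agent wins the first item and which indifference condition determines $p$; (ii) inside a phase, exploit the fact that at the critical price the winning and losing utilities coincide, so an envelope-type cancellation lets me evaluate the change in $U_i^{(k)}$ without the troublesome $\partial p/\partial B$ terms working against themselves; and (iii) at each phase boundary verify that the one-sided limits are ordered correctly, so the jumps created by an extra affordable item never reverse monotonicity. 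Step (iii) --- showing that the discontinuities of the utility function propagate across transition points in the right direction --- is the crux and is the part most tightly entangled with the other properties of the joint induction. Finally, the weakness of the second claim (non-increasing rather than strictly decreasing in $B_{-i}$) is accounted for precisely by the phases in which the first-round price is governed by agent $i$'s own budget, where $U_i^{(k)}$ is locally independent of $B_{-i}$.
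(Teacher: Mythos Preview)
Your overall plan---joint induction on $k$, direct computation for $k=1$, and the recognition that the endogeneity of the first-round price $p$ and of the winner's identity is the central obstacle---matches the paper's architecture. Where your proposal goes off track is step (ii): the ``envelope-type cancellation'' is not the mechanism that works here, and as stated it does not go through.

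The envelope theorem would help if $p$ were chosen by agent $i$ to optimize her own utility; then first-order movements of $p$ would not affect $U_i^{(k)}$. But in the canonical outcome the first-round price is the \emph{loser's} critical price $\price{k}{-i}$, pinned down by the loser's indifference $W_{-i}^{(k)}=L_{-i}^{(k)}$, while you must control the \emph{winner's} utility $U_i^{(k)}=W_i^{(k)}(B_i,B_{-i},\price{k}{-i})$. No cancellation of $\partial p/\partial B$ occurs for $U_i$. What the paper does instead is bound this sensitivity directly: when the winner's budget $B_i$ increases by $\epsilon$, it shows $\price{k}{-i}$ increases by strictly less than $\epsilon$ (\pref{lem:13}), by evaluating $W_{-i}^{(k)}$ and $L_{-i}^{(k)}$ at the shifted price $\price{k}{-i}+\epsilon$ and using the loser's indifference at the original $\price{k}{-i}$ together with the inductive monotonicity of $U^{(k-1)}$. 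Because $W_i^{(k)}$ and $L_{-i}^{(k)}$ depend on $(B_i,p)$ only through $B_i-p$, this Lipschitz bound immediately yields both $U_i^{(k)}(B_i+\epsilon,\cdot)>U_i^{(k)}(B_i,\cdot)$ and $U_{-i}^{(k)}(\cdot,B_i+\epsilon)\le U_{-i}^{(k)}(\cdot,B_i)$ (\pref{lem:14}).

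For variations in the loser's budget $B_{-i}$ the paper uses two further devices that your plan does not anticipate. Strict increase of $U_{-i}^{(k)}$ in $B_{-i}$ comes from a deviation argument (\pref{lem:20}): with budget $B_{-i}+\epsilon$, agent $-i$ can simply bid the old $\price{k}{-i}$ and, whether she wins or loses, is at least $\epsilon$ better off in the $(k-1)$-item subgame by induction; since the canonical outcome is an equilibrium ($\peq{k}$), this lower-bounds her equilibrium utility. The non-increase of $U_i^{(k)}$ in $B_{-i}$ (\pref{lem:21}) is then read off from the two-phase structure $\pallocseq{k}$: agent $-i$ pays exactly $k_{-i}$ times agent $i$'s residual budget, so a lower total payment by $-i$ forces a lower residual for $i$. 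You should replace the envelope heuristic by these concrete ingredients---the Lipschitz bound on the critical price, the deviation argument, and the payment-structure argument---and also treat separately the case where $W_{-i}^{(k)}$ is discontinuous at $\price{k}{-i}$, which the paper handles by a direct computation rather than by the indifference condition.
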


\begin{proposition} \label{prop:wlmon}
    $W^{(k)}_i(B_i, B_{-i}, p)$ is decreasing in $p$ while
    $L^{(k)}_i(B_i, B_{-i}, p)$ is non-decreasing in $p$, and both are
    increasing in $B_i$ and non-increasing in $B_{-i}$.
\end{proposition}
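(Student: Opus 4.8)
The plan is to read off all six monotonicity claims directly from the defining identities
\[
W^{(k)}_i(B_i, B_{-i}, p) = v_i + U_i^{(k-1)}(B_i - p, B_{-i}), \qquad
L^{(k)}_i(B_i, B_{-i}, p) = U_i^{(k-1)}(B_i, B_{-i} - p),
\]
exploiting that each of $W$ and $L$ is nothing but $U^{(k-1)}$ precomposed with an affine reparametrization of its two arguments. Although \pref{prop:wlmon} is a statement at level $k$, both functions reference only $U^{(k-1)}$, so I would invoke \pref{prop:utilmon} at level $k-1$ — available as the induction hypothesis of the joint induction — which gives that $U_i^{(k-1)}(x,y)$ is strictly increasing in $x$ and non-increasing in $y$. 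Every claim then follows by composing a monotone coordinate shift with this monotone function and tracking the sign of the shift.

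Concretely, for $W$: increasing $p$ strictly decreases the first argument $B_i - p$ while leaving the second fixed, so strict monotonicity of $U^{(k-1)}$ in its first slot makes $W$ strictly decreasing in $p$; increasing $B_i$ strictly increases $B_i - p$, making $W$ strictly increasing in $B_i$; and increasing $B_{-i}$ increases the second slot, making $W$ non-increasing in $B_{-i}$. For $L$: increasing $p$ strictly decreases the second argument $B_{-i} - p$, so non-increasingness of $U^{(k-1)}$ in its second slot yields $L$ non-decreasing in $p$; increasing $B_i$ directly increases the first slot, making $L$ strictly increasing in $B_i$; and increasing $B_{-i}$ increases the second slot, making $L$ non-increasing in $B_{-i}$. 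In each case the strict-versus-weak distinction asserted in \pref{prop:wlmon} matches that in \pref{prop:utilmon} exactly, since strictness is inherited precisely when the relevant shift passes through the slot in which $U^{(k-1)}$ is strictly monotone.

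The only genuine care needed is domain bookkeeping: I would note that the arguments $B_i - p$ and $B_{-i} - p$ stay in the admissible nonnegative range (a bid never exceeds the available budget), so that $U^{(k-1)}$ is evaluated where \pref{prop:utilmon} actually applies. The step is therefore routine once the induction hypothesis is in place; the real work — establishing \pref{prop:utilmon} \emph{itself} at level $k$ — lives in the combined inductive argument, and I expect the main obstacle to sit there rather than here. This proposition serves as the clean bridge that converts level-$(k-1)$ utility monotonicity into the per-round winning/losing comparisons that drive the subsequent analysis of prices and the order of sale.
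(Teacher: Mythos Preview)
Your proposal is correct and matches the paper's own argument essentially verbatim: the paper likewise reads off all the monotonicity claims directly from the identities $W^{(k)}_i = v_i + U^{(k-1)}_i(B_i - p, B_{-i})$ and $L^{(k)}_i = U^{(k-1)}_i(B_i, B_{-i} - p)$ by invoking \pref{prop:utilmon} at level $k-1$. The paper even remarks explicitly that ``it is easy to see that \pref{prop:utilmon} implies \pref{prop:wlmon},'' so your treatment of this step as a routine bridge, with the real work deferred to establishing $\putilmon{k}$, is exactly the intended reading.
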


It is easy to see that \pref{prop:utilmon}  implies \pref{prop:wlmon}.
%
Given the monotonicity in \pref{prop:wlmon}, we define the {\em critical prices}  of the agents for the first round.

\begin{definition}[Critical Prices] \label{def:critical}
    There exists a unique price $\price{k}{i}$
    s.t.~for any $p > \price{k}{i}$, $W^{(k)}_i(B_i, B_{-i}, p)
    < L^{(k)}_i(B_i, B_{-i}, p)$, and for any $p < \price{k}{i}$,
    $W^{(k)}_i(B_i, B_{-i}, p) > L^{(k)}_i(B_i, B_{-i}, p)$. We will
    refer to $\price{k}{i}$ as the {\em critical price} of agent $i$.
    \footnote{We note that it might be the case that even if $p$
    achieves its maximum value $B_i$, we still have $W^{(k)}_i(B_i,
    B_{-i}, p) > L^{(k)}_i(B_i, B_{-i}, p)$. But it is not difficult to
    prove this can only happen when $B_i < \frac{B_{-i}}{k}$, in which
    case it is clear agent $-i$ will win all the items paying $B_i$ per
    item. We will omit this trivial case in our discussion and assume
    $p_i$ always exists.}
\end{definition}

The definition of critical prices is identical to the definition of {\em
critical values} in \cite{pitchik1986budget} for the two-item case.
Here, we extend the definition to arbitrary number of items. 
The critical price $\price{k}{i}$ is similar, although not identical, to
an agent's valuation in the single-item auction in the following sense:
agent $i$ is willing to get the first item at any price lower than
the critical price $\price{k}{i}$, while she has no interest in winning 
the first item at a price higher than $\price{k}{i}$.
Note that one or both of $W^{(k)}_i$ and $L^{(k)}_i$ could be
discontinuous at $\price{k}{i}$. So the monotonicity in
\pref{prop:wlmon} does not have further indication on whether agent $i$
prefers winning or losing the first item at price $\price{k}{i}$. 
In the later sections, we do show that an agent weakly prefers winning to losing at the critical price.  
In order to show this, we will need to utilize a few subtle structures of the canonical outcome. 
 
We are now ready to complete the construction of the canonical outcome.

\begin{quote}
    {\em Canonical Outcome (recursive step):~} Suppose that the
    canonical outcome is defined for $k-1$ items and we have computed
    the critical prices. Assume w.l.o.g that $\price{k}{i} \ge
    \price{k}{-i}$. Then, the canonical outcome of the $k$-item case is
    agent $i$ bidding $\price{k}{-i}+$ and agent $-i$ bidding
    $\price{k}{-i}$ and both agents following the canonical outcome in
    the subgame of $k-1$ items,  if $\price{k}{i} >  \price{k}{-i}$.
    Otherwise, the canonical outcome is both agents bidding
    $\price{k}{i} = \price{k}{-i}$ and following the canonical outcome
    in the resulting subgame of $k-1$ items.
\end{quote}

\subsection{Properties of the Canonical Equilibrium}

The main structural theorem in this paper is the following. 

\begin{theorem} \label{thm:canonicaleq} 
    The canonical outcome is a subgame perfect equilibrium of the
    sequential auction with $k$ items.  
\end{theorem}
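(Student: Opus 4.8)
The plan is to prove this by induction on the number of items $k$, carrying along not merely the equilibrium property but the entire bundle of structural facts on which the construction rests: monotonicity and continuity of $U^{(k)}_i$ (\pref{prop:utilmon}), monotonicity of $W^{(k)}_i$ and $L^{(k)}_i$ (\pref{prop:wlmon}), existence of the critical prices (\pref{def:critical}), the weak preference for winning at the critical price, the monotonic non-increase of prices across rounds, and the order-of-sale fact that one agent wins all of his items before the other. These statements are mutually dependent, so the induction must re-derive all of them simultaneously at level $k$ from their validity at levels $\le k-1$; the theorem as stated is one output of this joint induction. The base case $k=1$ is the ordinary first-price auction and is handled directly.

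For the inductive step I would first reduce subgame perfection to a single first-round condition via the one-shot deviation principle: the game has finite horizon $k$, and by the induction hypothesis canonical play is a subgame perfect equilibrium of every induced $(k-1)$-item subgame, so it suffices to show that neither agent can profitably deviate in round $1$ while conforming to the canonical outcome thereafter. This is precisely the setting in which the functions $W^{(k)}_i$ and $L^{(k)}_i$ — defined from $U^{(k-1)}$, which the induction hypothesis controls — encapsulate all the relevant continuation payoffs.

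I would then verify the round-$1$ Nash condition using the critical prices. Assume w.l.o.g.\ that $\price{k}{i} \ge \price{k}{-i}$, so canonical play has $i$ winning at $\price{k}{-i}$ and $-i$ bidding $\price{k}{-i}$. For the winner $i$: raising his bid only increases the price, and $W^{(k)}_i$ is decreasing in $p$ by \pref{prop:wlmon}; dropping below $\price{k}{-i}$ makes him lose at price $\price{k}{-i} \le \price{k}{i}$, where the weak preference for winning at or below the critical price gives $W^{(k)}_i \ge L^{(k)}_i$. For the loser $-i$: since in a first-price auction the winner pays his own bid, $-i$ cannot increase $i$'s payment while still losing, and overtaking forces $-i$ to win at a price strictly above her own critical price $\price{k}{-i}$, which is strictly worse by \pref{def:critical}. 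The tie case $\price{k}{i} = \price{k}{-i}$ is covered by the same weak-preference property applied to the randomized award. Hence no profitable first-round deviation exists.

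The main obstacle is not this equilibrium check but closing the induction, that is, re-establishing the monotonicity and continuity of $U^{(k)}_i$ together with the price ordering so that the argument can be repeated at level $k+1$. As flagged in \pref{sec:difficulties}, $U^{(k)}_i$ is discontinuous exactly at the budget thresholds where an agent can afford one additional item, and the delicate point is to show this discontinuity propagates in a controlled fashion rather than compounding across rounds. Price monotonicity I would establish by contradiction — assuming the round-$1$ price dips below the round-$2$ price and exhibiting a profitable deviation for the loser of round $1$, with the construction itself invoking the order-of-sale structure — and I would extract the weak preference for winning at the critical price from the same structural analysis. These interlocking arguments, rather than the Nash verification above, are the technical heart of the proof.
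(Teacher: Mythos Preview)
Your approach matches the paper's: a joint induction on the full bundle of structural properties, with the equilibrium step reduced to a first-round Nash check via one-shot deviation, and the explicit acknowledgment that the real work lies in propagating monotonicity, continuity, price ordering, and sale order to level $k$.

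One step in your first-round verification is incomplete. For the loser $-i$ you say overtaking forces a win at some $p > \price{k}{-i}$, ``which is strictly worse by \pref{def:critical}.'' But the definition only yields $W^{(k)}_{-i}(B_{-i},B_i,p) < L^{(k)}_{-i}(B_{-i},B_i,p)$, whereas what you must show is $W^{(k)}_{-i}(B_{-i},B_i,p) \le L^{(k)}_{-i}(B_{-i},B_i,\price{k}{-i})$, since in the canonical outcome $-i$ loses at price $\price{k}{-i}$, not at $p$. Because $L^{(k)}_{-i}$ is non-decreasing in the price, the comparison $L^{(k)}_{-i}(p) \ge L^{(k)}_{-i}(\price{k}{-i})$ points the wrong way and cannot close the gap by itself. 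The paper handles this with a separate lemma: when $\price{k}{i} > \price{k}{-i}$, the losing utility $L^{(k)}_{-i}$ is continuous at $\price{k}{-i}$ (a discontinuity there would, via the classification of where $W$ and $L$ can jump, force $\price{k}{1} = \price{k}{2}$). With that continuity in hand one chains
\[
W^{(k)}_{-i}(p) \;\le\; \lim_{q \downarrow \price{k}{-i}} W^{(k)}_{-i}(q) \;\le\; \lim_{q \downarrow \price{k}{-i}} L^{(k)}_{-i}(q) \;=\; L^{(k)}_{-i}(\price{k}{-i}).
\]
This is exactly an instance of the discontinuity control you flag as the main obstacle later in the induction, so it should be made explicit here rather than absorbed into the definition of the critical price.
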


The proof of this theorem is by a joint induction on \pref{prop:utilmon}
and several other properties of canonical outcomes. We now detail these
properties and specify how the proof of each of these properties
inductively depends on the others. 

The next two properties characterize which agent wins which round in the
canonical outcome of the sequential auctions and the monotonicity of the
prices paid in the canonical outcome.

\begin{proposition} \label{prop:alloc}
    Suppose $k_1, k_2 \in \Z_{\ge 0}$ satisfy $k_1 + k_2 = k$ and
    $\frac{k_1}{k_2 + 1} \leq  \frac{B_1}{B_2} < \frac{k_1 + 1}{k_2}$.
    Based on the different cases, the following happens in the canonical
    outcome: 
    \begin{description} 
        \item [Case 1 (Type I tie-breaking)] 
            $\frac{k_1}{k_2 + 1} = \frac{B_1}{B_2}$. In this case, both
            agents will keep bidding $p^* \eqdef \frac{B_1}{k_1} =
            \frac{B_2}{k_2 + 1}$ until one of the agents runs out of her
            budget, and then the other agent will win the rest of the
            items for free. 
        \item [Case 2] 
            $\frac{k_1}{k_2 + 1} < \frac{B_1}{B_2} < \frac{k_1 + 1}{k_2
            + 1}$. In this case, agent 1 wins the first $k_1$ items and
            then agent 2 wins the rest. \footnotemark[1]
        \item [Case 3 (Type II-A tie-breaking)]  
            $\frac{B_1}{B_2} =\frac{k_1 + 1}{k_2 + 1}$. 
            In this case, both agents will keep bidding $p^* \eqdef
            \frac{B_1}{k_1 + 1} = \frac{B_2}{k_2 + 1}$ until one of the
            agent's budget becomes $p^*$, and then the other agent will
            get the remaining items by bidding $p^*+$. \footnotemark[2] 
        \item [Case 4] 
            $\frac{k_1 + 1}{k_2 + 1} < \frac{B_1}{B_2} < \frac{k_1 +
            1}{k_2}$. In this case agent $2$ wins the first $k_2$ items
            and then agent $1$ wins the rest. \footnotemark[1]
    \end{description} 

    \noindent{\footnotesize \footnotemark[1] ~ When $\frac{B_1}{B_2}$
    is sufficiently close to $\frac{k_1 + 1}{k_2 + 1}$, we may end up in
    another type tie-breaking case. However, in this tie-breaking case,
    we can assume w.l.o.g.~the allocation is as asserted in
    \pref{prop:alloc} without changing the agents' utilities. We
    will explain in details in \pref{app:proof}. Such treatment will be
    convenient for our proofs.}

    \noindent{\footnotesize \footnotemark[2] ~ In this tie-breaking
    case, we can assume either agent $1$ gets the first $k_1$ items and
    then agent $2$ gets the remaining $k_2$ items, or the other way
    around, without changing the utilities of the agents. Such treatment
    will be convenient for our proofs.}
\end{proposition}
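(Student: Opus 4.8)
The plan is to prove \pref{prop:alloc} by induction on the number of items $k$, unfolding the recursive definition of the canonical outcome one round at a time. The base case $k=1$ is immediate from the construction. For the inductive step I would assume that \pref{prop:alloc}, together with the monotonicity facts (\pref{prop:utilmon}, \pref{prop:wlmon}) and the continuity/price-monotonicity properties carried along in the joint induction, hold for every instance with at most $k-1$ items. By the recursive step of the construction, the first round is won by the agent $i$ whose critical price $\price{k}{i}$ is the larger of the two, and she pays the smaller critical price $\price{k}{-i}$; the remaining $k-1$ items are then allocated by the canonical outcome on the residual budgets. Thus the whole statement reduces to two tasks: (a) deciding which of $\price{k}{1},\price{k}{2}$ is larger, and (b) checking that the residual budgets land in the sub-case of the $(k-1)$-item game predicted by \pref{prop:alloc}.

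The heart of the matter is the round-$1$ comparison, which I would isolate as a lemma. Writing $\rho = B_1/B_2$ and letting $\rho^* \eqdef \frac{k_1+1}{k_2+1}$ be the Case~2/Case~4 threshold, I claim $\sgn(\price{k}{1}-\price{k}{2}) = \sgn(\rho^*-\rho)$. At the threshold $\rho=\rho^*$ the common value $p^* \eqdef \frac{B_1}{k_1+1}=\frac{B_2}{k_2+1}$ is simultaneously the critical price of both agents, so $\price{k}{1}=\price{k}{2}=p^*$, which is precisely the Type~II-A tie of Case~3. Decreasing $\rho$ below $\rho^*$ (Case~2) raises agent~$1$'s critical price strictly above agent~$2$'s, so agent~$1$ wins round~$1$ at price $\price{k}{2}$; increasing $\rho$ above $\rho^*$ (Case~4) does the reverse. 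To establish this I would start from the defining indifference of \pref{def:critical}, namely $v_i + U^{(k-1)}_i(B_i-p,B_{-i}) = U^{(k-1)}_i(B_i,B_{-i}-p)$, use the induction hypothesis to describe $U^{(k-1)}_i$, and then invoke the budget-monotonicity of the utilities (\pref{prop:utilmon}) to transfer the equality at $\rho=\rho^*$ into a strict sign on either side of it.

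Once the round-$1$ winner is fixed, I would carry out the residual-ratio bookkeeping. In Case~2 agent~$1$ wins at price $\price{k}{2}$ and the budgets become $(B_1-\price{k}{2},\,B_2)$; the task is to verify that $\frac{B_1-\price{k}{2}}{B_2}$ again lies strictly below the threshold for the parameters $(k_1-1,k_2)$ of the $(k-1)$-item game, so that the induction hypothesis forces agent~$1$ to keep winning until she has taken all $k_1$ of her items, after which agent~$2$ sweeps the remaining $k_2$. Case~4 is symmetric. The two tie cases are handled directly: in Case~1 both agents bid $p^*=\frac{B_1}{k_1}=\frac{B_2}{k_2+1}$ and tie in every round until one budget is exhausted, whereupon the other wins the rest for free; Case~3 is the analogous computation with $p^*=\frac{B_1}{k_1+1}=\frac{B_2}{k_2+1}$, leaving the winner's opponent holding exactly $p^*$ before the sweep. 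Each of these reduces to tracking remaining budgets against the tie-breaking conventions.

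The main obstacle, as the introduction already flags, is task~(a): pinning down the sign of $\price{k}{1}-\price{k}{2}$. The utility functions $U^{(k-1)}_i$ are complicated and may be discontinuous at the critical prices, so $W^{(k)}_i$ and $L^{(k)}_i$ need not be continuous there either (as noted after \pref{def:critical}); consequently the indifference equation cannot be solved in closed form, and the comparison must be driven entirely through the monotonicity and continuity properties of the joint induction rather than by direct calculation. One must also dispose of the degenerate case in the footnote to \pref{def:critical} where a critical price fails to exist in the interior (i.e.\ $k_1=0$ or $k_2=0$, where one agent simply wins everything). Managing this interplay, and showing that the residual ratios never escape the predicted sub-case across all of the subsequent rounds, is where essentially all of the technical effort lies.
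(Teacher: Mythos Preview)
Your plan diverges from the paper's at the central step, and the divergence is where the real difficulty lies. You propose to prove the allocation sequence by (a) directly comparing the critical prices $\price{k}{1}$ and $\price{k}{2}$ via the sign lemma $\sgn(\price{k}{1}-\price{k}{2})=\sgn(\rho^*-\rho)$, and then (b) tracking the residual ratio through the induction. The paper does neither. Instead it first establishes that the canonical outcome is an equilibrium ($\peq{k}$), and then argues \emph{by contradiction through a profitable deviation}: assuming the winner sequence is not two-phase (say agent~$1$ wins round~$1$, then agent~$2$ wins $k_2$ items, then agent~$1$ wins the rest), it shows the first price $\price{k}{2}$ must be strictly below the later prices, and from this constructs an explicit deviation for agent~$2$ --- win item~$1$ cheaply, then bid $\price{k}{2}$ until agent~$1$ takes an item, then follow the canonical outcome --- that strictly improves her utility, contradicting $\peq{k}$. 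A separate short argument using the average-price bound (\pref{prop:defaultprice}) then pins down \emph{which} agent goes first. The tie-breaking cases (your Cases~1 and~3) are handled essentially as you describe.

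The gap in your proposal is that both (a) and (b) require quantitative control over $\price{k}{2}$ that monotonicity and continuity alone do not supply. For (b), to keep the residual ratio in the Case~2 window of the $(k-1)$-item game you need $B_1-\tfrac{k_1}{k_2+1}B_2 < \price{k}{2} < B_1-\tfrac{k_1-1}{k_2+1}B_2$; the paper does eventually prove exactly this (its \pref{lem:19}), but only \emph{after} the allocation sequence is known, using it as input. For (a), the indifference equation defining $\price{k}{i}$ involves $U^{(k-1)}_i$ evaluated at arguments that themselves straddle discontinuities, so ``transferring the equality at $\rho=\rho^*$ into a strict sign on either side'' is not a routine monotonicity argument --- you would effectively need the structural description of $U^{(k-1)}_i$ that you are trying to prove. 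The paper's deviation argument sidesteps both issues: it never compares $\price{k}{1}$ to $\price{k}{2}$ directly and never needs a priori bounds on the residual ratio, because any ``wrong'' sequence is eliminated by exhibiting a better response.
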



\begin{proposition} \label{prop:pricemon}
     The prices paid in each round of the canonical outcome is
     non-increasing as the auction proceeds.  
 \end{proposition}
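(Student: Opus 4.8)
The plan is to induct on the number of items $k$, using the inductive hypothesis that the canonical outcome on $k-1$ items is a subgame-perfect equilibrium with all the asserted properties. Since rounds $2,\dots,k$ of the $k$-item canonical outcome are exactly the canonical outcome of the $(k-1)$-item subgame, the prices $p_2\ge p_3\ge\cdots\ge p_k$ are already non-increasing by the inductive hypothesis. Hence it suffices to prove the single inequality $p_1\ge p_2$, where $p_1,p_2$ are the prices paid in the first two rounds. Write $w$ for the winner and $\ell$ for the loser of the first round; by construction $w$ pays $p_1=\price{k}{\ell}$, the loser's critical price from \pref{def:critical}. By \pref{prop:alloc} the sequence of winners is a block of wins by one agent followed by a block of wins by the other, so only two configurations of (round-$1$ winner, round-$2$ winner) can arise: $(w,w)$, when $w$'s block has length at least two, and $(w,\ell)$, the transition case in which $w$ wins only the first item.

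Assume for contradiction that $p_1<p_2$. Following the guiding intuition that a losing agent always prefers the winner to pay as much as possible, I would derive a profitable deviation for the loser $\ell$ of the first round, contradicting the fact---established inductively as part of \pref{thm:canonicaleq}---that the canonical outcome is an equilibrium. The deviation I would analyze is for $\ell$ to raise its first-round bid just above $p_1$ and seize the (underpriced, since $p_1<p_2$) first item, then continue according to the canonical outcome; by \pref{prop:wlmon} and the critical-price characterization this deviation is strictly profitable precisely when $W^{(k)}_\ell(B_\ell,B_w,p_1)>L^{(k)}_\ell(B_\ell,B_w,p_1)$. The heart of the argument is therefore to show that the strict inequality $p_1<p_2$ forces this gap, which in turn rests on comparing the loser's willingness to pay for an item across two consecutive rounds.

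In the $(w,w)$ configuration this comparison is between $p_1=\price{k}{\ell}$ evaluated at budgets $(B_\ell,B_w)$ and $p_2$, which is again $\ell$'s critical price but now in the subgame with the opponent's budget reduced to $B_w-p_1$ and one fewer item remaining. I would show the loser's critical price is non-increasing across the round: passing from the $k$-item game to the subgame both removes an item and weakens the opponent, and by the monotonicity of utilities in the opponent's budget (\pref{prop:utilmon}) this lowers $\ell$'s marginal value of securing an item. In the transition configuration $(w,\ell)$, the round-$2$ price $p_2$ is instead the \emph{other} agent's subgame critical price, so I would unfold $L^{(k)}_\ell$ using \pref{prop:alloc} (the loser wins the entire remaining suffix) and compare the resulting continuation utility against the deviation utility, again invoking \pref{prop:utilmon} to sign the difference.

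The main obstacle is making the deviation rigorous, i.e.~establishing the cross-round critical-price inequality that produces the gap $W^{(k)}_\ell(B_\ell,B_w,p_1)>L^{(k)}_\ell(B_\ell,B_w,p_1)$. This is where the proof genuinely becomes a joint induction: I would need the allocation structure of \pref{prop:alloc} to know the exact sequence of winners in both the full game and the subgame, the utility monotonicity of \pref{prop:utilmon} to sign the effect of the reduced budget $B_w-p_1$, and the equilibrium property of the $(k-1)$-item subgame to justify that continuation play is optimal. The discontinuities of $W^{(k)}_\ell$ and $L^{(k)}_\ell$ at the critical price, together with the Type-I and Type-II tie-breaking cases of \pref{prop:alloc}, are the delicate points that require separate, careful treatment.
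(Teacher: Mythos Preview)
Your overall architecture---induct on $k$, reduce to $p_1\ge p_2$, assume $p_1<p_2$ and exhibit a profitable deviation for the first-round loser~$\ell$---matches the paper's. The gap is in the deviation itself. You propose that $\ell$ ``raise its first-round bid just above $p_1$ and seize the item,'' and you claim this is profitable precisely when $W^{(k)}_\ell(B_\ell,B_w,p_1)>L^{(k)}_\ell(B_\ell,B_w,p_1)$. But $p_1=\price{k}{\ell}$ is \emph{exactly} $\ell$'s critical price, so by definition $W^{(k)}_\ell(B_\ell,B_w,p)<L^{(k)}_\ell(B_\ell,B_w,p)$ for every $p>p_1$. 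Overbidding by any positive amount puts $\ell$ strictly above her own critical price, and the resulting utility is $\lim_{p\to p_1^+}W^{(k)}_\ell$, which (by \pref{lem:3} and the case analysis in \pref{lem:price}) is at most $L^{(k)}_\ell(B_\ell,B_w,p_1)$. Thus the simple ``grab the cheap item'' deviation is never strictly profitable, regardless of whether $p_1<p_2$; the assumption $p_1<p_2$ gives you no leverage at round~$1$ itself. Your attempt to salvage this by showing a cross-round critical-price inequality is circular: the statement ``$\ell$'s critical price is non-increasing across the round'' is exactly the price monotonicity you are trying to prove.

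The paper's argument is substantially more indirect. In the main (indifferent) case it does not construct a round-$1$ deviation at all. Instead it compares the continuation utilities arising from three different first/second-round outcomes (equilibrium; agent~$2$ wins round~$1$; agent~$1$ wins round~$1$ and agent~$2$ wins round~$2$), and from the relation $U''_2\ge U_2=U'_2$ together with strict monotonicity of $U^{(k-2)}_2$ extracts that $q_1>q'_2$, where $q'_2,\dots,q'_{k_1+1}$ are the (inductively non-increasing) prices agent~$1$ pays in the first deviation. Combining this with $U_1\ge U'_1$ yields $\sum_{i=1}^{k_1}q_i\le\sum_{i=2}^{k_1+1}q'_i<k_1 q_1$, so some later round $j^*\ge 3$ has $q_{j^*}<q_1$. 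The profitable deviation is then for agent~$2$ to wait until round~$j^*$ and bid $q_1-\epsilon$ there; the comparison against the auxiliary deviation~$1$ (via monotonicity of $U^{(k-j^*)}_2$) certifies a strict gain. So the missing ingredient in your plan is this ``find a later cheap round via cross-deviation accounting'' step; the contradiction does not live at round~$1$.
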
   
   
Finally, we will also show the following property about the continuity
of the utility functions in the agents' budgets. 

\begin{proposition} \label{prop:utilcont}
    For $i = 1, 2$, $U^{(k)}_i$ is continuous in both $B_i$ and
    $B_{-i}$, except when $\frac{B_i}{B_{-i}} = \frac{k_i}{k_{-i} + 1}$
    for $k_i, k_{-i} \in \Z_{\ge 0}$ such that $k_i + k_{-i} = k$.
    Moreover, 
    $$\lim_{B_{-i} \rightarrow \frac{k_{-i} + 1}{k_i} B_i^-} U^{(k)}_i(B_i,
    B_{-i}) = k_i v_i ~,~ \lim_{B_{-i} \rightarrow \frac{k_{-i} + 1}{k_i}
    B_i^+} U^{(k)}_i(B_i, B_{-i}) = (k_i - 1) v_i + B_i$$
    $$\lim_{B_i \rightarrow \frac{k_i}{k_{-i} + 1} B_{-i}^-} U^{(k)}_i(B_i,
    B_{-i}) = (k_i - 1) v_i + \frac{k_i}{k_{-i} + 1} B_{-i} ~,~ \lim_{B_i
    \rightarrow \frac{k_i}{k_{-i} + 1} B_{-i}^-} U^{(k)}_i(B_i, B_{-i}) =
    k_i v_i$$
\end{proposition}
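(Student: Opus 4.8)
The plan is to run the joint induction, assuming \pref{prop:utilmon}, \pref{prop:alloc}, \pref{prop:pricemon} and \pref{prop:utilcont} for every instance with at most $k-1$ items and deducing \pref{prop:utilcont} for $k$ items. The engine is the recursive definition of the canonical outcome: in the first round the agent $w$ with the larger critical price wins, paying the smaller critical price $p \eqdef \min\{\price{k}{1}, \price{k}{2}\}$, after which both agents play the canonical outcome of a $(k-1)$-item subgame. This yields
\[
    U^{(k)}_w(B_w, B_{-w}) = v_w + U^{(k-1)}_w(B_w - p, B_{-w}), \qquad U^{(k)}_{-w}(B_{-w}, B_w) = U^{(k-1)}_{-w}(B_{-w}, B_w - p).
\]
Thus continuity of $U^{(k)}_i$ reduces to continuity of the price $p$ and of the identity of $w$, combined with the inductive continuity of $U^{(k-1)}_i$ at the recursed budgets.

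First I would show the critical prices $\price{k}{i}$ are continuous on the interior of each phase of \pref{prop:alloc}. By \pref{def:critical}, $\price{k}{i}$ is the unique price at which $W^{(k)}_i = v_i + U^{(k-1)}_i(B_i - p, B_{-i})$ meets $L^{(k)}_i = U^{(k-1)}_i(B_i, B_{-i} - p)$; since $W^{(k)}_i$ is strictly decreasing and $L^{(k)}_i$ non-decreasing in $p$ (\pref{prop:wlmon}), this crossing is unique, and wherever $U^{(k-1)}_i$ is continuous at the two recursed arguments it moves continuously with $(B_i, B_{-i})$. The \emph{sub-phase} boundaries inside a phase are exactly the budgets at which the $(k-1)$-subgame fixing $\price{k}{i}$ passes to a neighbouring phase of \pref{prop:alloc}; there the piecewise formula for $\price{k}{i}$ changes but its value does not, because the defining equation itself varies continuously. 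Hence $p$ is continuous and $w$ locally constant, so substituting into the recursion displays $U^{(k)}_i$ as a composition of continuous maps; the one remaining check, carried out via \pref{prop:alloc} for $k-1$, is that within a single $k$-phase the post-first-round budgets stay strictly between consecutive $(k-1)$-item Type~I lines and hence in a continuity region of $U^{(k-1)}_i$. At a Type~II-A boundary (Case~$3$) only the \emph{order} of the wins flips while the \emph{number} $k_i$ each agent gets is unchanged; the utility-neutrality of this tie-breaking (footnotes of \pref{prop:alloc}) makes the Case~$2$ and Case~$4$ one-sided limits agree, giving continuity there too. This covers everything off the Type~I lines.

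It remains to compute the one-sided limits at a Type~I line $\frac{B_i}{B_{-i}} = \frac{k_i}{k_{-i}+1}$, which is simultaneously where agent $i$ gains his $k_i$-th item and agent $-i$ loses one. Approaching with $\frac{B_i}{B_{-i}} \to (\frac{k_i}{k_{-i}+1})^+$ we sit at the bottom of agent $i$'s ``$k_i$ items'' phase, which is of Case~$2$ type, so agent $i$ takes his $k_i$ items first; since at the threshold $\frac{B_i}{k_i} = \frac{B_{-i}}{k_{-i}+1}$ he can only just afford them, his remaining budget tends to $0$ and $U^{(k)}_i \to k_i v_i$. Approaching from below we sit at the top of agent $i$'s ``$k_i - 1$ items'' phase, of Case~$4$ type, so agent $-i$ takes his $k_{-i}+1$ items first; now it is agent $-i$ who can only just afford his items, so he exhausts his budget and agent $i$ then wins the last $k_i - 1$ items at a price tending to $0$, keeping his whole budget, whence $U^{(k)}_i \to (k_i - 1) v_i + B_i = (k_i - 1) v_i + \frac{k_i}{k_{-i}+1} B_{-i}$. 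These are precisely the stated limits for the variable $B_i$; approaching the same line by moving $B_{-i}$ instead yields the pair for $B_{-i}$ by identical reasoning. As the two limits differ by $v_i - B_i > 0$ (recall $v_i > 2^k B_i$), the discontinuity is genuine.

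The main obstacle is the passage between the continuity and discontinuity regimes: making rigorous that $\price{k}{i}$ stays continuous through the sub-phase boundaries and that the post-round-$1$ budgets avoid the $(k-1)$-item Type~I lines, while showing that the single new discontinuity at level $k$ arises exactly when the first-round price drives a recursed budget onto a lower-order Type~I line. This requires substituting the inductive limit formulas of \pref{prop:utilcont} for $k-1$ into the implicit critical-price equation and tracking how the jumps of $U^{(k-1)}$ propagate; this bookkeeping, rather than any single estimate, is the technically delicate heart of the argument.
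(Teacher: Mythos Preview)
Your proposal follows essentially the same architecture as the paper: reduce $U^{(k)}_i$ to $U^{(k-1)}$ via the first-round recursion, establish continuity of the first-round price, and then compute the one-sided limits at the Type~I lines by exploiting the two-phase winner structure. Two points of comparison are worth making.

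First, for the continuity of the critical price you invoke an implicit-function style argument (``the crossing of a strictly decreasing and a non-decreasing function moves continuously''), whereas the paper instead proves explicit sandwich bounds: it shows $\price{k}{2} - \tfrac{k_2}{k_1}\epsilon < p'_2(\epsilon) < \price{k}{2} + \epsilon$ when $B_1$ increases by $\epsilon$ (their Lemmas~A.14 and~A.16), and an analogous bound when $B_2$ increases (Lemma~A.23). These quantitative bounds are what let the paper handle cleanly the case you flag as the ``main obstacle'', namely when $W^{(k)}_i$ itself is discontinuous at $\price{k}{i}$; there the implicit-function reasoning does not apply directly, but the explicit formula $\price{k}{i} = B_i - \tfrac{k_i}{k_{-i}}B_{-i}$ (from the characterization of the discontinuity loci) gives continuity immediately. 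Your proposal is right that this case is the crux, and the paper's explicit bounds are the mechanism it uses to dispatch it.

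Second, you assume only the $(k-1)$-level propositions, but your boundary-limit computation silently uses the two-phase winner sequence for the full $k$-item game (``agent $i$ takes his $k_i$ items first''). This is $\pallocseq{k}$, not $\pallocseq{k-1}$, and the paper's dependency graph makes explicit that $\pallocseq{k}$ is proved \emph{before} $\putilcont{k}$ in the joint induction and is indeed consumed here. This is only a bookkeeping correction, not a gap in the argument.
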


If $\frac{B_i}{B_{-i}} = \frac{k_i}{k_{-i} + 1}$, then the tie-breaking
comes into effect and the utilities are random variables. We can
calculate the expected utilities, and these are summarized later.

\paragraph{Dependencies}

For the sake of presentation, we will let $\peq{k}$ denote
\pref{thm:canonicaleq} for $k$ items. Let $\pallocseq{k}$ denote
\pref{prop:alloc} for $k$ items. Let $\ppricemon{k}$ denote
\pref{prop:pricemon} for $k$ items. Let $\putilmon{k}$ and
$\putilcont{k}$ denote \pref{prop:utilmon} and \pref{prop:utilcont}
respectively. The joint induction proceeds by assuming each of the
properties $\peq {k-1}$ to $\putilcont {k-1} $ and proving properties
$\peq {k}$ to $\putilcont {k} $. The dependencies are summarized below:
\begin{itemize}
    \item[] $\peq{k}$  depends on $\peq{k-1}$, $\putilmon{k-1}$, and $\putilcont{k-1}$.
    \item[] $\pallocseq{k}$ depends on $\peq{k}$, $\putilmon{k-1}$, and $\putilcont{k-1}$. 
    \item[] $\ppricemon{k}$ depends on $\peq{k}$, $\pallocseq{\le k}$, $\putilmon{k-1}$, and $\putilcont{k-1}$;
    \item[] $\putilmon{k}$ and $\putilcont{k}$ depend on $\putilmon{k-1}$, $\putilcont{k-1}$, and $\pallocseq{k}$.
\end{itemize}

\subsection{Uniqueness of equilibrium} 
Note that when $\price{k}{i} > \price{k}{-i}$ (or symmetrically the
other way around), we face the problem of equilibrium section. We can
argue that agent $i$ bidding $\price{k}{-i}+$
and agent $-i$ bidding $\price{k}{-i}$ is the only ``stable''
equilibrium in the sense that it is the unique
semi-trembling-hand-perfect equilibrium. 

\begin{proposition} \label{prop:unique}
    The canonical equilibrium is the unique semi-trembling-hand-perfect
    and subgame-perfect equilibrium.
\end{proposition}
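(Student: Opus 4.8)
The plan is to prove \pref{prop:unique} by induction on the number of items $k$, piggybacking on \pref{thm:canonicaleq} together with the monotonicity and continuity properties (\pref{prop:utilmon}, \pref{prop:wlmon}, \pref{prop:utilcont}) that are already available. The base case $k=1$ is the single-item first-price auction: as sketched in \pref{sec:prelim}, assuming $B_1 > B_2$ the only subgame-perfect equilibria are of the form ``agent $1$ bids $b+$, agent $2$ bids $b$'' with $B_2 \le b < B_1$, and I would show that only $b = B_2$ survives \pref{def:sthp}, while exhibiting a trembling sequence (agent $1$'s bids concentrating at $B_2+$) for which agent $2$'s best reply converges to $B_2$ and symmetrically. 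For the inductive step I would first argue that semi-trembling-hand-perfection descends to subgames: any trembling sequence witnessing the refinement for the whole $k$-item game induces, on each first-round history, a completely mixed continuation whose best replies converge to the continuation profile, and combined with subgame-perfection and the inductive hypothesis this forces the play after round $1$ to be exactly the canonical outcome on $k-1$ items. This is the step that lets me replace the entire continuation by the scalar payoffs $W^{(k)}_i$ and $L^{(k)}_i$.

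With the continuation pinned down, the first round becomes a one-shot first-price auction in which agent $i$'s payoff from winning at price $p$ is $W^{(k)}_i(B_i,B_{-i},p)$ and from losing (when the opponent pays $p$) is $L^{(k)}_i(B_i,B_{-i},p)$. By \pref{prop:wlmon} these are respectively decreasing and non-decreasing in $p$, so the critical prices $\price{k}{i}$ of \pref{def:critical} exist and play the role of \emph{values}. I would first characterize all subgame-perfect equilibria of this one-shot game: assuming w.l.o.g.\ $\price{k}{i} > \price{k}{-i}$, in any equilibrium the item goes to agent $i$ at some price $b$ with $\price{k}{-i} \le b < \price{k}{i}$, because agent $i$ strictly prefers winning below $\price{k}{i}$ while agent $-i$ does not want to raise her bid above $\price{k}{-i}$ (there winning is worse than losing). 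The canonical outcome is the extreme point $b = \price{k}{-i}$, and its credibility uses the fact — established alongside \pref{thm:canonicaleq} — that agent $-i$ weakly prefers winning to losing exactly at her critical price.

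Next I would apply the refinement to collapse this interval to the single point $b=\price{k}{-i}$. For the canonical profile I exhibit trembles: let agent $i$'s bids concentrate at $\price{k}{-i}+$; because agent $-i$'s losing payoff $L^{(k)}_{-i}$ is increasing in the winner's price, her best reply wants to push her own bid up so as to extract the largest possible payment from agent $i$, but \pref{def:critical} forbids going above $\price{k}{-i}$, where winning (now occurring with vanishing but positive probability) is strictly worse than losing. Hence her best reply converges to $\price{k}{-i}$, and symmetrically agent $i$'s best reply to agent $-i$'s trembles converges to $\price{k}{-i}+$. Conversely, for any putative equilibrium with $b > \price{k}{-i}$ the same computation shows that, against any trembling sequence of agent $i$ concentrating at $b+$, agent $-i$'s best reply stays at or below $\price{k}{-i} < b$, so it cannot converge to $b$; thus no such equilibrium is semi-trembling-hand-perfect. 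The tie configuration $\price{k}{i} = \price{k}{-i}$ and the boundary ratios of \pref{prop:alloc} (Cases 1 and 3) are handled by the same indifference-plus-tie-breaking analysis.

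I expect the main obstacle to be the interaction between the possible discontinuities of $W^{(k)}_i$ and $L^{(k)}_i$ at the critical price (flagged after \pref{def:critical}) and the loser's genuine incentive to inflate the winner's payment — a feature absent from the single-item auction, where the loser is indifferent to the price. Making the best-reply argument rigorous requires showing that, as the trembles vanish, the marginal benefit of raising one's bid (a higher conditional payment extracted from the opponent) is dominated by the marginal cost (a positive probability of winning just above the critical price, where the payoff jumps down), so that the best reply converges to exactly $\price{k}{-i}$ and not to some interior value. Quantifying this balance using the one-sided limits in \pref{prop:utilcont}, and carefully treating the tie-breaking cases so that they do not create additional surviving equilibria, is where most of the work will lie.
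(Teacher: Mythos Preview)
Your overall plan matches the paper's: reduce to the first round by induction (the paper simply assumes the canonical outcome is played in the $(k-1)$-item subgames, as you do), characterize the subgame-perfect first-round profiles as agent $i$ bidding $b+$ and agent $-i$ bidding $b$ with $\price{k}{-i} \le b \le \price{k}{i}$, and then use the refinement to single out $b = \price{k}{-i}$.

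There is, however, a genuine conceptual error in your mechanism for why the refinement selects $\price{k}{-i}$. You write that, since $L^{(k)}_{-i}$ is increasing in the winner's price, ``her best reply wants to push her own bid up so as to extract the largest possible payment from agent $i$.'' That is second-price intuition. In a first-price auction the winner pays her \emph{own} bid, so when agent $-i$ loses, the price agent $i$ pays is $i$'s realized bid and is completely unaffected by $-i$'s bid. Raising $-i$'s bid extracts nothing from $i$; it only changes the probability that $-i$ wins and the price she herself pays when she does. For bids $b < \price{k}{-i}$ the actual trade-off is: a higher $b$ wins more often (good, since $W^{(k)}_{-i}(b) > L^{(k)}_{-i}(b)$ there) but pays more conditional on winning (bad, since $W^{(k)}_{-i}$ is decreasing in $p$). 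This is the opposite of the ``extract payment vs.\ risk of winning above the critical price'' balance you describe in your final paragraph.

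Because of this, your sketch does not pin the best reply near $\price{k}{-i}$: an arbitrary tremble of agent $i$ concentrating at $\price{k}{-i}+$ need not have best replies converging there. The paper handles the two directions separately. For the upper bound it shows that any bid strictly above $\price{k}{-i}$ is \emph{weakly dominated} for agent $-i$ by a bid just above $\price{k}{-i}$; against any completely mixed $\sigma_j$ a weakly dominated bid is never a best reply, which already rules out every equilibrium with $b > \price{k}{-i}$. For the lower bound the paper does not argue for generic trembles but \emph{constructs} a specific sequence $\sigma_j$ whose density below $\price{k}{-i}$ is calibrated (via parameters comparing the gain $W^{(k)}_{-i}(\price{k}{-i}(1-2^{-j})) - W^{(k)}_{-i}(0)$ from winning cheaply to the gain $W^{(k)}_{-i} - L^{(k)}_{-i}$ from winning at all) so that any bid below $\price{k}{-i}(1-2^{-j})$ is strictly worse than one in $[\price{k}{-i}(1-2^{-j}),\price{k}{-i}]$. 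Your instinct that constructing the tremble is where the work lies is right, but the balance you need to strike is the one above, not the one you wrote.
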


\begin{figure}
\centerline{\includegraphics[width=.4\textwidth]{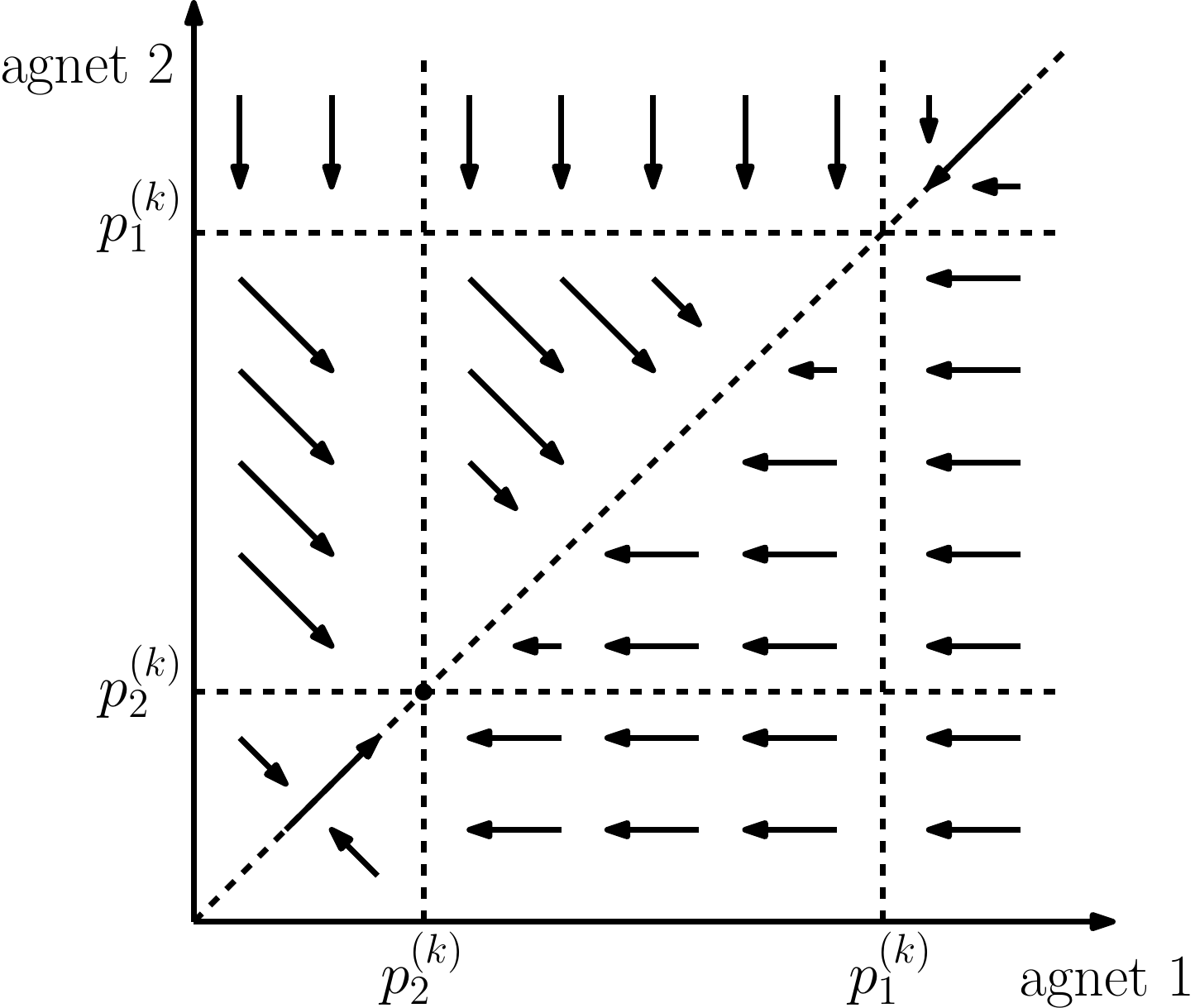}}
\caption{Proof sketch of uniqueness of the canonical equilibrium by
analyzing the dynamic of the game.}
\label{fig:unique}
\end{figure}

The analysis is very similar to the equilibrium refinement in a first
price auction and therefore we will defer the proof to \pref{app:unique}.
Here we will give an informal explanation by presenting
the dynamic of the game in \pref{fig:unique}. Suppose w.l.o.g.~that
$\price{k}{1} > \price{k}{2}$, and agent $1$ bids $b_1$ and agent $2$
bids $b_2$. If $b_1 > b_2$ and $b_1 > \price{k}{2}$, then agent $1$
wants to decease her bid while agent $2$ has no incentive of changing
her strategy. If $b_1 < b_2$ and $b_2 > \price{k}{1}$, then agent $1$
has no incentive to change her bid while agent $2$ wants to decrease her
bid to lose the item. If $b_1 < b_2 < \price{k}{1}$, then agent $1$
wants to increase her bid to wins the item while agent $2$ wants to
decrease her bid to lower the price. Finally, if $b_2 < b_1 <
\price{k}{2}$, then agent $1$ wants to decrease her bid to lower the
price, while agent $2$ wants to increase her bid to win the item.
Summarizing these cases, it is easy to see the only ``stable'' point is
agent $1$ bidding $\price{k}{2}+$ and agent $2$ bidding $\price{k}{2}$.

\subsection{First-price vs.~second-price} \label{sec:secondprice}

We can also consider the second-price version of the sequential auction,
where the winner and the price are chosen using the second-price rule in
each round: the agent with the higher bid wins and pays the other
agent's bid.

In the second-price version, the canonical outcome we construct in
this paper is still a subgame-perfect equilibrium, via almost identical
arguments. However, it is no longer the stable one because underbidding
is weakly-dominated under the second-price rule. In the second-price
auction of a single item, truthful bidding is a stable equilibrium as
it is a dominant-strategy equilibrium. We can use the
critical-price methodology in this paper to analyze the case of multiple
items. Roughly speaking, when it comes to $2$ or more items, the agent
with the lower critical price may have incentive to overbid (unlike
the single-item case) in order to deplete the other agent's budget.
In such case, the first-round bids in the equilibria must be the winner
bidding $b+$ while the loser bidding $b$ for some $b$ between the
critical prices. Further, since underbidding is weakly-dominated in the
second-price rule, in the stable equilibrium $b$ must equal the higher
one of the critical prices. In other words, it is conceivable that the
stable equilibrium is both agent bidding the higher one of the agents'
critical prices in each round, rather than the lower one as in the
first-price version. This observation is implicitly proved for
the value-dominant case of two items in \cite{pitchik1986budget}. Also,
see \cite{pitchik1988perfect} for experimental results comparing the
first-price and second-price sequential auctions. We shall not discuss
any further the second-price version but leave it as an interesting open
question to study the equilibrium of the second-price sequential
auction.

\section{Proof Sketches}\label{sec:sketch}
In this section, we will sketch the proofs of some of the important
implications outlined in \pref{sec:claims}. The complete joint induction
of these claims is in \pref{app:proof}. 

\subsection{Two-Phase Winner Sequence}

Consider the two-phase winner sequence described in
\pref{prop:alloc}. The simplest case is the type I tie-breaking, which
we restate and prove as follows.

\begin{proposition}[Type I tie-breaking] \label{prop:alloctie}
    Suppose $k_1, k_2 \in \Z_{\ge 0}$ satisfy $k_1 + k_2 = k$ and
    $\frac{k_1}{k_2 + 1} = \frac{B_1}{B_2}$. Then, in the canonical
    outcome, both agents keep bidding $p^* \eqdef
    \frac{B_1}{k_1} = \frac{B_2}{k_2 + 1}$ until one agent is
    out of budget, and then the other agent wins the rest for free.
\end{proposition}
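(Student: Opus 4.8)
The plan is to argue by induction on $k$, as part of the joint induction of \pref{sec:claims}, assuming \pref{prop:alloc}, \pref{prop:utilmon} and \pref{prop:utilcont} for fewer than $k$ items. The whole statement reduces to one claim: in this configuration both first-round critical prices satisfy $\price{k}{1} = \price{k}{2} = p^*$. Once this is shown, the recursive definition of the canonical outcome (the equal-critical-price branch) prescribes that both agents bid $p^*$ and the winner is chosen by the tie-break. Whoever wins pays exactly $p^*$, so the remaining budgets become either $((k_1-1)p^*,(k_2+1)p^*)$ or $(k_1 p^*, k_2 p^*)$; in both cases the ratio again satisfies the Type~I condition $\frac{k_1'}{k_2'+1}=\frac{B_1'}{B_2'}$ with the \emph{same} $p^*$ and $k_1'+k_2'=k-1$, so the inductive hypothesis applies to the induced subgame and both agents again bid $p^*$. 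Unrolling this recursion, agent $1$ can sustain the bid $p^*$ for exactly $k_1$ rounds and agent $2$ for exactly $k_2+1$ rounds; since only $k=k_1+k_2$ items are sold, exactly one agent runs out of budget first, after which the other wins the remaining items for free, which is the asserted behavior (the tie-break decides who exhausts first, hence who gets the extra item).

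The heart of the argument is the claim $\price{k}{i}=p^*$. By \pref{prop:wlmon} the difference $W^{(k)}_i-L^{(k)}_i$ is strictly decreasing in $p$, so it suffices to show it is positive for feasible $p<p^*$ and negative for feasible $p>p^*$; I would do this by a \emph{counting} argument rather than by evaluating utilities exactly. Take agent $1$. If she wins the first item at price $p$ the induced subgame has budgets $(k_1 p^*-p,(k_2+1)p^*)$, and if she loses it they are $(k_1 p^*,(k_2+1)p^*-p)$. Reading off item counts from \pref{prop:alloc} for $k-1$ items as a function of the budget ratio, one checks that for $0\le p<p^*$ agent $1$ wins exactly $k_1-1$ items in \emph{both} subgames (the winning ratio lies in $(\frac{k_1-1}{k_2+1},\frac{k_1}{k_2+1})$ and the losing ratio in $(\frac{k_1}{k_2+1},\frac{k_1}{k_2})$), so winning the first item nets her $k_1$ items in total against $k_1-1$ when she loses. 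For $p>p^*$ the count flips: the winning ratio drops below $\frac{k_1-1}{k_2+1}$ (total $k_1-1$) while the losing ratio rises above $\frac{k_1}{k_2}$ (total $k_1$). Since an agent's utility equals (number of items)$\cdot v_i$ plus a remaining budget that never exceeds $B_i$, and $v_i>2^k B_i$, a one-item difference dominates every budget term; hence $W^{(k)}_1>L^{(k)}_1$ for $p<p^*$ and $W^{(k)}_1<L^{(k)}_1$ for $p>p^*$, giving $\price{k}{1}=p^*$. The computation for agent $2$ is symmetric with $k_1$ and $k_2+1$ interchanged.

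The step I expect to be delicate is pinning the critical prices exactly at $p^*$ in the boundary configurations $k_2=0$ (for agent $1$) and $k_1=1$ (for agent $2$), where one side of $p^*$ becomes infeasible; e.g.\ when $k_1=1$ agent $1$ has budget exactly $p^*$ and cannot force agent $2$ to lose at any price above $p^*$, so the clean ``one fewer item'' comparison is unavailable for $p>p^*$. There the winning and losing item counts tie at $p=p^*$ and one must fall back on the \emph{secondary} budget comparison: winning by bidding $p^*{+}$ costs essentially $p^*$ and pushes the winner just below the Type~I threshold into the regime where she wins no further item, whereas losing preserves the full budget, so losing is weakly preferred and the critical price is $p^*$ rather than something larger. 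This is exactly where the discontinuity of the utility function at the Type~I ratio enters, and I would control it with the one-sided limits recorded in \pref{prop:utilcont} for $k-1$ items. Verifying that these limits match up, so that the crossing of $W^{(k)}_i$ and $L^{(k)}_i$ lands precisely on $p^*$ and not an infinitesimal to one side, is the main technical obstacle.
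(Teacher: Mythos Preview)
Your proposal is correct and follows essentially the same route as the paper: show that both first-round critical prices equal $p^*$ by comparing item counts (via \pref{prop:alloco}) in the induced $(k-1)$-item subgame when the first item sells just above versus just below $p^*$, then recurse since the post-sale budget ratios $\frac{k_1-1}{k_2+1}$ and $\frac{k_1}{k_2}$ again satisfy the Type~I condition. The paper checks only $p^*\pm\epsilon$ and invokes the definition of the critical price directly, whereas you phrase it for general $p\lessgtr p^*$ after first invoking the monotonicity of $W^{(k)}_i-L^{(k)}_i$; these are equivalent. Your flagging of the degenerate endpoints $k_1=1$ and $k_2=0$ is more careful than the paper, which simply writes ``Similarly, we can show $\price{k}{1}\ge p^*$'' without commenting on feasibility when one agent's budget equals $p^*$; the resolution you sketch via the one-sided limits of \pref{prop:utilcont} is the right instinct, though in the paper's framework these edge cases are absorbed into the footnote to \pref{def:critical} rather than handled explicitly here.
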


\begin{proof}
    Intuitively, this is the case where the budget ratio lies at the
    intersection of the region where agent $1$ gets $k_1$ items and
    agent $2$ gets $k_2$ items and the region where agent $1$ gets $k_1
    - 1$ items and agent $2$ gets $k_2 + 1$ items. In other words, agent
    $1$ can guarantee $k_1 - 1$ items and agent $2$ can guarantee $k_2$
    items for sure, and both agents are competing for the extra item by
    keep bidding the highest rational bid they have.

    Now we formally prove the claim. We first prove that
    $\price{k}{1} = \price{k}{2} = p^*$. Consider any price $p^* +
    \epsilon$ for sufficiently small $\epsilon > 0$.  It is easy to
    verify that $\frac{k_1 - 2}{k_2 + 2} < \frac{B_1 - p -
    \epsilon}{B_2} < \frac{k_1 - 1}{k_2 + 1}$ and $\frac{k_1}{k_2} <
    \frac{B_1}{B_2 - p - \epsilon} < \frac{k_1 + 1}{k_2 - 1}$ when
    $\epsilon$ is sufficiently small.
    So by \pref{prop:alloc}, if agent $1$ wins the first item at price
    $p^* + \epsilon$, then in the subgame she only gets $k_1 - 2$
    items and thus $k_1 - 1$ items in total; on the other hand, if agent
    $1$ loses the first item at price $p^* + \epsilon$, then in the
    subgame she gets $k_1$ items. Therefore, we have $W^{(k)}_1(B_1,
    B_2, p^* + \epsilon) < L^{(k)}_1(B_1, B_2, p^* + \epsilon)$ since
    $v_1 \gg B_1$. So by
    the definition of $\price{k}{1}$ we have $\price{k}{1} \le p^*$.
    Similarly, we can show $\price{k}{1} \ge p^*$. So we have
    $\price{k}{1} = p^*$. Via an almost identical proof we can show that
    $\price{k}{2} = p^*$. 
    
    Now note that $\frac{B_1 - p^*}{B_2} =
    \frac{k_1 - 1}{k_2 + 1}$ and $\frac{B_1}{B_2 - p^*} =
    \frac{k_1}{k_2}$. So we can recursively apply the same argument in
    the subgames to finish the proof of \pref{prop:alloctie}.
\end{proof}

Next, we sketch the proof of the other three cases. We first
clarify the tie-breaking case described in the footnote of
\pref{prop:alloc}.

\begin{proposition} \label{prop:allocseq}
    Suppose $k_1, k_2 \in \Z_{\ge 0}$ satisfy $k_1 + k_2 =
    k$ and $\frac{k_i}{k_{-i} + 1} < \frac{B_i}{B_{-i}} \le \frac{k_i +
    1}{k_{-i} + 1}$. Then, in the canonical outcome it is always the
    case that agent $i$ wins the first  $k_i$ items, and then agent $-i$
    gets the remaining $k_{-i}$ items paying agent $i$'s remaining
    budget for each item. Here, there are two tie-breaking caveats:
    \begin{description}
        \item[\em Type II-A Tie-breaking] If $\frac{B_1}{B_2} =
            \frac{k_1 + 1}{k_2 + 1}$, then in the canonical
            outcome both agents keep bidding $p^* \eqdef
            \frac{B_1}{k_1 + 1} = \frac{B_2}{k_2 + 1}$ until one
            agent's budget becomes $p^*$, and then the other
            agent gets the remaining item by bidding $p^*+$. 

            In this tie-breaking case, we can assume either agent $1$
            gets the first $k_1$ items and then agent $2$ gets the
            remaining $k_2$ items, or the other way around, without
            changing the utilities of the agents.

        \item[\em Type II-B Tie-breaking] If $\frac{B_i}{B_{-i}}$ is
            smaller than and sufficiently close to $\frac{k_i +
            1}{k_{-i} + 1}$, then in the canonical outcome both
            agents keep bidding $p^* \eqdef \frac{B_i}{k_i + 1}$
            until either agent $i$ gets $k_i$ items or agent $-i$ gets
            $k_{-i} -1$ items. In the former case, agent $-i$ then wins
            the remaining items paying agent $i$'s remaining budget
            $p^*$. In the latter case, agent $i$ then keeps bidding
            $p^*+$ until she gets $k_i$ items, and then agent $−i$ wins
            the remaining items paying agent $i$'s remaining budget
            $p^*$. 
            
            In this tie-breaking case, we can assume agent $i$ gets
            $k_i$ items and then agent $−i$ gets the remaining $k_{-i}$
            items, without changing the utilities of the agents.
    \end{description}
\end{proposition}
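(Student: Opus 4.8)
The plan is to prove the proposition by induction on $k$, exploiting the recursive definition of the canonical outcome together with the induction hypotheses $\peq{k}$, $\putilmon{k-1}$, $\putilcont{k-1}$ and $\pallocseq{k-1}$. Throughout we work in the regime $\frac{k_i}{k_{-i}+1} < \frac{B_i}{B_{-i}} \le \frac{k_i+1}{k_{-i}+1}$, so that by \pref{prop:alloco} (a consequence of $\peq{k}$) the total allocation is already forced: agent $i$ receives exactly $k_i$ items and agent $-i$ exactly $k_{-i}$. What remains is to pin down the \emph{order} of the wins and the first-round price, and both are governed entirely by the comparison of the two critical prices $\price{k}{i}$ and $\price{k}{-i}$ of \pref{def:critical}: by the recursive step of the canonical outcome, the agent with the larger critical price wins the first item at the smaller of the two critical prices, and both players then follow the canonical outcome on the remaining $k-1$ items.

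First I would determine the first-round winner by computing both critical prices. For each agent the critical price is the value of $p$ at which the winning utility $W^{(k)}$ and the losing utility $L^{(k)}$ cross; evaluating these requires the subgame utilities $U^{(k-1)}$, which are piecewise in the budget ratio with jumps exactly at the transition ratios, so I would substitute the one-sided limits supplied by $\putilcont{k-1}$. Concretely, if agent $i$ wins the first item at price $p$ the subgame has budgets $(B_i - p, B_{-i})$ and agent $i$ still needs $k_i - 1$ items, whereas if agent $i$ loses at price $p$ the subgame has budgets $(B_i, B_{-i} - p)$ and agent $i$ still needs $k_i$ items; the critical price $\price{k}{i}$ is the price at which raising $p$ pushes the subgame allocation across a transition boundary, forcing $W^{(k)}_i$ below $L^{(k)}_i$ (here one uses $v_i \gg B_i$, exactly as in the proof of \pref{prop:alloctie}). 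Carrying out the same computation for agent $-i$ yields $\price{k}{-i}$, and comparing the two resulting expressions gives $\price{k}{i} \ge \price{k}{-i}$ in this regime, so agent $i$ wins the first item at price $p_1 \eqdef \price{k}{-i}$.

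Next I would set up the recursion. I would verify that the new budget ratio satisfies $\frac{k_i - 1}{k_{-i}+1} < \frac{B_i - p_1}{B_{-i}} \le \frac{k_i}{k_{-i}+1}$, i.e. that the $(k-1)$-item subgame is once again in the ``agent $i$ leads'' regime but with target $(k_i - 1, k_{-i})$; this is essentially a restatement of how $p_1$ was pinned down as a critical price. Applying the induction hypothesis $\pallocseq{k-1}$ then gives that agent $i$ wins the next $k_i - 1$ items and agent $-i$ the final $k_{-i}$, which together with the first item is exactly the claimed two-phase sequence. For the second phase I would observe that once agent $i$ has won all $k_i$ items his remaining budget $\beta$ is frozen (he loses every subsequent round), and the residual $k_{-i}$-item game with budgets $(\beta, B_{-i})$ lies in the dominant regime $\frac{\beta}{B_{-i}} < \frac{1}{k_{-i}}$; by the ``dominant agent wins all'' case of \pref{prop:alloco}, agent $-i$ wins each remaining item at price $\beta$, which is precisely agent $i$'s remaining budget, as asserted.

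The main obstacle is the careful bookkeeping at the transition boundaries in the critical-price computation: since $U^{(k-1)}$ is discontinuous exactly at the ratios where an agent gains an extra item, one must invoke the correct left/right limits from $\putilcont{k-1}$ and argue that $W^{(k)}_i$ and $L^{(k)}_i$ cross precisely at the boundary rather than in the interior of a piece. This is also where the two tie-breaking caveats surface. In Type II-A the ratio equals $\frac{k_i+1}{k_{-i}+1}$ exactly, the two critical prices coincide, and both agents keep bidding $p^* = \frac{B_i}{k_i+1} = \frac{B_{-i}}{k_{-i}+1}$; here either ordering of the two phases yields the same utilities, which I would check directly, so the stated allocation is without loss. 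In Type II-B the ratio sits just below the boundary and the recursion momentarily produces a tie (both agents bidding $p^*$) before the leader pulls ahead; I would handle this by showing that the repeated bidding of $p^*$ changes neither the eventual winner of each item nor the final utilities, so the two-phase description still holds. Showing that these boundary crossings occur cleanly, and that the Type II-B near-tie resolves into the stated sequence, is the technically delicate part of the argument.
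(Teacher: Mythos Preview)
Your approach differs fundamentally from the paper's, and it has a real gap. You propose to \emph{compute} the two critical prices $\price{k}{i}$, $\price{k}{-i}$ and compare them directly, asserting that ``the critical price $\price{k}{i}$ is the price at which raising $p$ pushes the subgame allocation across a transition boundary.'' That characterization is false in general. By \pref{lem:price} in the paper, the critical price is often a point at which both $W^{(k)}_i$ and $L^{(k)}_i$ are \emph{continuous} and the agent is indifferent; in such cases the critical price is determined by the equation $W^{(k)}_i=L^{(k)}_i$ inside a continuous piece of $U^{(k-1)}$, not at an allocation boundary. For a concrete example, look at phase~3 of the two-item table: agent~2's critical price is $\tfrac{1}{2}B_1$, and neither the winning nor the losing subgame sits at a transition ratio there. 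So your plan to extract $\price{k}{i}$ from the one-sided limits in $\putilcont{k-1}$ does not work; you would need the full piecewise form of $U^{(k-1)}$, whose number of pieces grows rapidly with $k$ (compare Tables~\ref{tab:twoitem} and \ref{tab:threeitem}) and for which no closed form is available in the induction hypotheses.

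The paper avoids this entirely by arguing \emph{indirectly}. It does not compute either critical price. Instead it supposes, for contradiction, that the winner sequence is \emph{not} two-phase (say agent~1 wins round~1, then agent~2 wins $k_2$ rounds, then agent~1 wins the rest), and shows that the first-round price $\price{k}{2}$ must then be strictly below the average price agent~2 pays later. From this it constructs a profitable deviation for agent~2 (win the cheap first item, then bid $\price{k}{2}$ until agent~1 takes one, then follow the canonical outcome), contradicting $\peq{k}$. The argument branches on whether agent~2 is indifferent at $\price{k}{2}$ or $W^{(k)}_2$ is discontinuous there, and in each branch uses $\ppricemon{k-1}$ and $\putilmon{k-1}$ to compare remaining budgets rather than explicit utilities. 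Only after the two-phase structure is established does the paper pin down \emph{which} agent goes first, and that step uses the simple average-price bound of \pref{prop:defaultprice}, not a critical-price comparison. Your recursive step (``verify the new ratio lies in the agent-$i$-leads regime, then apply $\pallocseq{k-1}$'') would be fine \emph{if} you had the first-round winner and price in hand, but your method of obtaining them does not go through.
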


\begin{proof}[Sketch of \pref{prop:allocseq}]
    Here we only show that one of the agent gets her share of
    the items first and then the other agent gets the rest. Once we
    have shown this, it is easy to deduce who gets the items first.

    Suppose for contradiction that agent $1$ gets the first item
    paying $\price{k}{2}$,
    and then agent $2$ gets the next $k_2$ items, and finally agent
    $1$ gets the rest. Briefly speaking, we first show that in
    this case, the first item must be sold at a price that is lower than
    the average price that agent $1$ pays. Then, we argue that since the
    first item is so cheap, agent $1$ must have a profitable deviation
    by winning the first item. 

    There will be several cases we need to consider in the complete
    proof. Here we describe one of the cases. We assume the
    following: (1) no tie in the first round: $\price{k}{1} >
    \price{k}{2}$; (2) agent $2$ is indifferent between winning and
    losing the first item at $\price{k}{2}$; (3) if agent $2$ wins the
    first item paying $\price{k}{2}$, then in the subgame agent $2$ wins
    the next $k_2 - 1$ items and then agent $2$ wins the rest.

    By the second assumption, agent $2$ has the same remaining
    budget regardless of whether she wins the first item or not. If
    agent $2$ wins the first item, agent $1$ gets $k_1$ items at
    price equal to agent $2$'s remaining budget. If agent $1$ wins the
    first item, then agent $1$ gets $k_1 - 1$ items at a price equal to
    agent $2$'s remaining budget, and one item at price $\price{k}{2}$.
    Further, agent $1$ must strictly prefer winning the first item.
    Therefore, $\price{k}{2}$ must be strictly smaller than agent $2$'s
    remaining budget. Now by the price monotonicity
    (\pref{prop:pricemon}) of $k-1$ items, $\price{k}{2}$ is strictly
    smaller than any other price in the assumed equilibrium. 
    Now we have a contradiction by constructing the following
    equilibrium. Agent $2$ could win the first item paying a price
    greater than $\price{k}{2}$ but lower than the prices of the other
    items in the assumed equilibrium. Then, agent 2 could keep bidding
    $\price{k}{2}$ until agent $1$ gets at least one item, say item
    $j^*$. Finally, agent $2$ could follow the canonical outcome
    thereafter. In the first $j^*$ rounds of this deviation, agent $1$
    gets one item at price $\price{k}{2}$, and agent $2$ gets $j^* - 1$
    items paying roughly $\price{k}{2}$ per item. In the first $j^*$
    rounds of the assumed equilibrium, however, agent $1$ also gets one
    item at price $\price{k}{2}$, and agent $2$ gets $j^* - 1$ items
    paying strictly greater than $\price{k}{2}$ per item. Therefore,
    agent $2$ is strictly better off in the deviation by the
    monotonicity of an agent's utility in her budget.
\end{proof}

\subsection{Monotonicity of Utilities}

The complete proof of \pref{prop:utilmon} requires a more detailed case
analysis. Here we sketch the proof by describing the analysis for one of
the cases.

For the sake of the discussion here, we assume  the following: (1) agent
$1$ wins the first item at agent $2$'s critical price in the canonical
outcome; (2) agent $2$ is indifferent between winning and losing the
first item at price $\price{k}{2}$; (3) the above two assumptions still
hold when agent $2$'s budget increases from $B_2$ to $B_2 + \epsilon$ or
when agent $1$'s budget increases from $B_1$ to $B_1 + \epsilon$ for
sufficiently small $\epsilon > 0$; (4) we have inductively proved \pref{prop:utilmon} for the case of $k-1$ items.

First, consider increasing agent $2$'s budget 
from $B_2$ to $B_2 + \epsilon$. Note that agent $2$'s utility when
her budget is $B_2$ is $U^{(k)}_2(B_2, B_1) = L^{(k)}_2(B_2,
B_1, \price{k}{2}) = W^{(k)}_2(B_2, B_1, \price{k}{2})$ as she is
indifferent between winning and losing at price $\price{k}{2}$.
Moreover, we have $L^{(k)}_2(B_2 + \epsilon, B_1, \price{k}{2}) >
L^{(k)}_2(B_2, B_1, \price{k}{2}) = U^{(k)}_2(B_2, B_1)$ where we get
the inequality by comparing the budgets after the first round when agent
$2$'s budget is $B_2$ and $B_2 + \epsilon$. Similarly, we have
$W^{(k)}_2(B_2 + \epsilon, B_1, \price{k}{2}) > W^{(k)}_2(B_2, B_1,
\price{k}{2}) = U^{(k)}_2(B_2, B_1)$. So by bidding $\price{k}{2}$ agent
$2$ could guarantee strictly more utility than $U^{(k)}_2(B_2, B_1)$
when her budget is $B_2 + \epsilon$. Thus, in the canonical outcome 
agent $2$'s utility will strictly increase when her budget increases.

Next, consider increasing agent $1$'s budget 
from $B_1$ to $B_1 + \epsilon$. We will let $p'(\epsilon)$
denote the critical price of agent $2$ when agent $1$'s budget is
$B_1 + \epsilon$ instead of $B_1$. We will show that
$p'(\epsilon) < \price{k}{2} + \epsilon$. Note that $L^{(k)}_2(B_2,
B_1 + \epsilon, \price{k}{2} + \epsilon) = L^{(k)}_2(B_2, B_1,
\price{k}{2})$ because we will end up in the same subgame after the
first round in both cases. On the other hand, we have
$W^{(k)}_2(B_2, B_1 + \epsilon, \price{k}{2} + \epsilon) <
W^{(k)}_2(B_2, B_1, \price{k}{2})$ by comparing the budgets after the
first round in the two cases. So by our assumption that agent $2$ is
indifferent at price $\price{k}{2}$ when the budgets are $B_1$ and
$B_2$, we have $L^{(k)}_2(B_2, B_1 + \epsilon, \price{k}{2} + \epsilon)
> W^{(k)}_2(B_2, B_1 + \epsilon, \price{k}{2} + \epsilon)$. Since agent
$2$ is indifferent at price $p'(\epsilon)$ when agent $1$'s budget is
$B_1 + \epsilon$, we have $p'(\epsilon) < \price{k}{2} + \epsilon$.
Hence, we get that
$$U^{(k)}_2(B_2, B_1 + \epsilon) = U^{(k-1)}_2(B_2, B_1 + \epsilon -
p'(\epsilon)) \le U^{(k-1)}_2(B_2, B_1 - \price{p}{2}) =
U^{(k)}_2(B_2, B_1) \enspace.$$

Therefore, we have proved the desired monotonicity for this case.
The analysis of the other cases are similar in spirit to the above
arguments. But some special treatments are needed on a case-by-case
basis. The full proof is in \pref{app:proof}.

\section{Comparison with the clinching auction}\label{sec:clinching} 
In this section we compare the outcome of the sequential auction with
that of the clinching auction of \cite{ausubel2004efficient}, as modified by
\cite{dobzinski2011multi} to accommodate budgets.  At a high level, the
clinching auction proceeds as follows.  It starts with a per-unit price of
$0$, and slowly raises this until some agent becomes critical for
keeping overall demand above supply, i.e. the total demand from {\em
  other} agents drops below the supply.  When this happens, we
allocate to this agent at the current price, until the supply is
reduced back to below other agents' total demand (this process is the
``clinching'' from the auction's name).  We then return to slowly
raising the per-unit price.  This process continues until the supply
has been exhausted.

When analyzing sequential auctions, we focus on the case with just $2$
bidders, both of whom have values exceeding their budgets.  In this
case, the clinching auction simplifies to the following process: items
are allocated one-by-one, and at a given round, if there are $k$ items
left, and the remaining budgets of agents $1$ and $2$ are $B_1$ and
$B_2$, respectively, an item is allocated to the agent with the higher
budget at a price of $\min(B_1,B_2)/k$.

We begin by considering cases with small $k$, as we did with
sequential auctions, and then present some observations on the
outcomes of the clinching auction for general $k$.  For $k=1$, we can
see from the description above that if $B_1<B_2$ (without loss of
generality), then agent $2$ will simply win the single item at a price
of $B_1$.  Consider the case where $k=2$, again assuming that
$B_1<B_2$.  Then in the first round, agent $2$ wins the item at a
price of $B_2/2$; the outcome of the second round simply depends on
whether or not paying this price depletes agent $2$'s budget below
$B_1$, i.e. whether or not $B_2 > 3B_1/2$.  Note that this induction
is much simpler than the one for sequential auctions -- once we know
the budgets of the agents, we may immediately infer the winner and
payment in the current round, which tells us the budgets for the next
round.  As such, we omit further details, and simply present the
outcomes and prices for $k=2$ and $k=3$ in
\pref{tab:clinchingtwoitem} and \pref{tab:clinchingthreeitem},
respectively.

We finish this section with some observations about the outcomes of
the clinching auction for general $k$.

\begin{table}
    \centering
    \caption{Clinching auction outcomes for $k = 2$}
    \label{tab:clinchingtwoitem}
    \medskip
	\begin{tabular}{cccc}
		\hline \\ [-2ex]
		Budget ratio ($B_1/B_2$) & Round 1 & Round 2 \\
        [.5ex]
		\hline \\ [-2ex]
		 $(0, \frac{2}{3})$ & 2 wins at $\frac{B_1}{2}$ & 2 wins at
        $B_1$ \\ [.5ex] 
         $(\frac{2}{3}, 1)$ & 2 wins at $\frac{B_1}{2}$
        & 1 wins at $1-\frac{B_1}{2}$ \\ [.5ex]
		\hline
    \end{tabular}
\end{table}

\begin{table}
    \centering
    \caption{Clinching auction outcomes for $k = 3$}
    \label{tab:clinchingthreeitem}
    \medskip
	\begin{tabular}{cccc}
		\hline \\ [-1.5ex]
		Budget ratio ($B_1/B_2$) & Round 1 & Round 2 & Round 3\\ [1ex]
		\hline \\ [-1.5ex]
		$(0, \frac{6}{11})$ & 2 wins at $\frac{B_1}{3}$ & 2 wins at $\frac{B_1}{2}$ & 2 wins
        at $B_1$ \\ [1ex]
		$(\frac{6}{11}, \frac{3}{4})$ & 2 wins at $\frac{B_1}{3}$ & 2 wins
        at $\frac{B_1}{2}$ & 1 wins at $1-\frac{5B_1}{6}$ \\ [1ex]
		$(\frac{3}{4},1)$ & 2 wins at $\frac{B_1}{3}$ & 1
        wins at $\frac{3-B_1}{6}$ & 2 wins at $\frac{5B_1-3}{6}$ \\
        [1ex]
		\hline
    \end{tabular}
\end{table}

\begin{enumerate}
\item 
    First, we observe that the order of allocation is quite
  different from that in the sequential auction.  While in the
  sequential auction winners are never interleaved, in the clinching
  auction winners are nearly always interleaved.  In particular, it is
  easy to verify that once there are two consecutive rounds where the
  winning agent changes, the winner will change in every round
  thereafter. Thus, we find that in general the clinching auction
  repeatedly allocates to the agent with the higher budget until the
  two budgets are ``close,'' and then proceeds to alternate between
  the two agents.
\item 
Unlike sequential auctions, where the prices are non-increasing
  as we proceed to later rounds, the very definition of the clinching
  auction ensures that prices with only increase as we proceed to
  later rounds.
\item 
The fact that the clinching auction generally alternates between
  the two agents when allocating items might give the impression that
  it is more ``fair'' than the sequential auction; considering how the
  items are divided between the agents, however, gives a much
  different impression.  In particular, the clinching auction
  increases the number of items it allocates to agent $1$ as $B_1$
  increases much more quickly than the sequential auction does.  For
  example, agent $1$ cannot win all the items in a sequential auction
  until $B_1 > k\cdot B_2$, but in the clinching auction does so with
  only $B_1 >H_k\cdot B_2 \approx \log k \cdot B_2$.  More generally,
  if agent $1$ controls a $p$-fraction of the total wealth
  (i.e. $p=B_1/(B_1+B_2)$), the fraction of items agent $1$ can expect
  to win is $k_1/k\approx p$ under the sequential auction, but
  $k_1/k\approx 1-\exp(\frac{2p-1}{p-1})/2$ in the clinching auction.
  See \pref{fig:sequentialVsClinching} for a graphical
  comparison of these two quantities.
\end{enumerate}

\begin{figure}
    \centering
    \includegraphics[width=.5\textwidth]{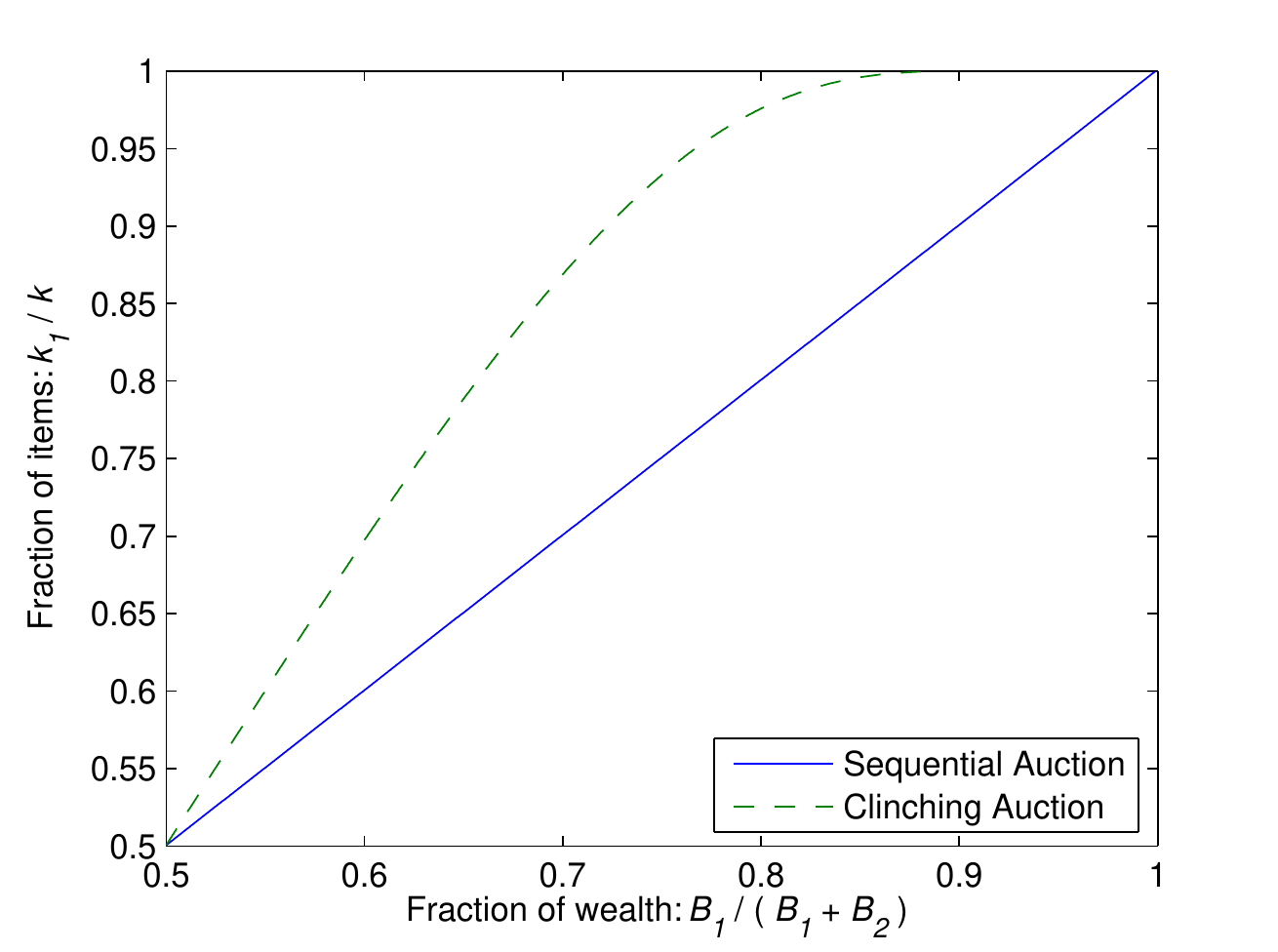}
    \caption{A comparison of how the sequential and clinching auctions
    compare in terms of the fraction of items an agent can expect to
    receive as a function of the fraction of the total wealth they
    control under each.}
    \label{fig:sequentialVsClinching}
\end{figure}

\section{Conclusion and future directions} 

%
\label{sec:future}

As mentioned in the introduction, we consider the most basic setting of sequential auctions with budget constrained bidders  in order to analyze, for the first time, 
the case of arbitrary number of items. 
Now that we understand this basic case, one can hope to extend it to more settings. 
In particular as ``next steps'',  the following are interesting questions that our work raises. 

\begin{itemize} 
\item
Even for the case of two bidders, we assume that the values are much bigger than the budgets, in essence to ensure that the primary 
goal of the agents is to win as many items as possible and the price paid is only a secondary goal. Can we extend our results to arbitrary valuations? 
\item
A crucial property in our results was the sequence of wins. Even with 3 bidders, it is not clear how this generalizes. 
How to extend our results to an arbitrary number of bidders is of course the big question. 
\item
Another direction in which we can relax our model is to consider incomplete information. This would become more interesting when the valuations are comparable to the budget. 
\end{itemize} 

Finally, we feel we are still far from a thorough understanding of even the two agent case presented here. 
It would be insightful to further simplify the proof we currently have.

\nocite{pitchik1988perfect}

\bibliographystyle{plainnat}
\bibliography{seqauctionbiblio}

\newpage

\appendix

\section{Joint Induction}
\label{app:proof}

\subsection{Notations}

As we will see later in this section, the probability trial of tossing
(at most $m + n + 1$) coins until either we get $m+1$ head or we get
$n+1$ tails will play an important role in the tie-breaking cases of the
equilibrium. Therefore,
we will define the following notation to denote the probability of
getting $n+1$ tails for presentation convenience.

\begin{definition}
    For any $m, n \in \Z_{\ge 0}$, we let $\phi(m, n)$ denote the
    probability that we get at least $n + 1$ heads if we repeatedly toss
    a fair coin $m + n + 1$ times.
\end{definition}

We will prove two straightforward properties of the probability $\phi(m,
n)$.

\begin{lemma} \label{lem:1}
    For any $m, n \in \Z_{\ge 0}$, 
    $$\phi(m, n) = \sum_{j=0}^m {n + j \choose n}
    \left(\frac{1}{2}\right)^{n + j + 1} \enspace.$$
\end{lemma}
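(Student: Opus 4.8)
The plan is to prove the identity directly by a first‑passage (negative binomial) decomposition of the coin‑tossing experiment, rather than by manipulating binomial sums algebraically. By the definition of $\phi(m,n)$, the quantity on the left is the probability of obtaining at least $n+1$ heads in a run of $m+n+1$ independent fair tosses. I would introduce the random variable $T$ equal to the index of the toss on which the $(n+1)$‑th head first appears (with $T=\infty$ if fewer than $n+1$ heads occur among the $m+n+1$ tosses). The point of this reformulation is that the event ``at least $n+1$ heads among the first $m+n+1$ tosses'' is \emph{exactly} the event $\{T \le m+n+1\}$.

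The next step is to decompose this event according to the value of $T$. Writing $T = n+1+j$, the event $\{T = n+1+j\}$ occurs precisely when the first $n+j$ tosses contain exactly $n$ heads (hence $j$ tails) and toss number $n+1+j$ is a head. Since the tosses are independent and fair, there are $\binom{n+j}{n}$ ways to place the $n$ heads among the first $n+j$ positions, each prefix has probability $(\tfrac12)^{n+j}$, and the decisive head contributes a factor $\tfrac12$, so that $\Pr[T = n+1+j] = \binom{n+j}{n}(\tfrac12)^{n+j+1}$. The condition $T = n+1+j \le m+n+1$ is equivalent to $0 \le j \le m$, which fixes the range of summation.

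Finally, the events $\{T = n+1+j\}$ for $j = 0,1,\dots,m$ are pairwise disjoint and their union is exactly $\{T \le m+n+1\}$, so adding the probabilities above gives $\phi(m,n) = \sum_{j=0}^{m} \binom{n+j}{n}(\tfrac12)^{n+j+1}$, which is the claim.

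There is no serious obstacle here: this is a standard negative‑binomial identity, and the whole argument is a single disjoint decomposition. The only two places that require care are (i) correctly translating the range constraint $T \le m+n+1$ into the summation bound $0 \le j \le m$, and (ii) computing $\Pr[T = n+1+j]$ correctly, remembering that the $m-j$ tosses \emph{after} position $n+1+j$ are unconstrained and therefore contribute a total factor of $1$, so the per‑term probability is $\binom{n+j}{n}(\tfrac12)^{n+j+1}$ and not something depending on the remaining tosses. As an alternative I could induct on $m$, with base case $m=0$ reducing to the probability $(\tfrac12)^{n+1}$ that all $n+1$ tosses are heads (matching the single term $\binom{n}{n}(\tfrac12)^{n+1}$) and the inductive step conditioning on one toss to relate $\phi(m,n)$ to $\phi(m-1,n)$; I would prefer the direct decomposition above, since it avoids setting up and verifying a recurrence.
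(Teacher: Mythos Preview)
Your proposal is correct and is essentially the same argument as the paper's: both decompose the event ``at least $n+1$ heads in $m+n+1$ tosses'' according to the number of tails $j$ seen before the $(n+1)$-th head (equivalently, the position of the $(n+1)$-th head), compute the probability of each piece as $\binom{n+j}{n}(\tfrac12)^{n+j+1}$, and sum over $j=0,\dots,m$. Your version is slightly more formal in introducing the first-passage time $T$, but the underlying decomposition is identical.
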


\begin{proof}
    First, the number of different head/tail sequences such that we get
    $n$ heads and $j$ tails in the first $n + j$ tosses and get another
    head in the last toss is ${n + j \choose n}$. Each of these
    sequences happens with probability $\left(\frac{1}{2}\right)^{n + j
    + 1}$. So the probability that we get $n + 1$ heads and before that
    get exactly $j$ tails is ${n + j \choose n}
    \left(\frac{1}{2}\right)^{n + j + 1}$. Summing up for $j$ from $0$
    to $m$ proves the lemma.
\end{proof}

\begin{lemma} \label{lem:2}
    For any $m, n \in \Z$ such that $m \ge 1$ and $n \ge 0$ we have
    $$\phi(m, n) - \phi(m - 1, n + 1) = {n + m + 1 \choose n + 1}
    \left(\frac{1}{2}\right)^{m + n + 1} \enspace.$$
\end{lemma}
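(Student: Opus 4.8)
The plan is to read the two terms on the left-hand side directly off the \emph{probabilistic} definition of $\phi$ rather than off the closed form in \pref{lem:1}. The crucial observation is that $\phi(m,n)$ and $\phi(m-1,n+1)$ describe coin-tossing experiments with the \emph{same} total number of tosses. Indeed, by definition $\phi(m,n)$ is the probability of getting at least $n+1$ heads in $m+n+1$ fair tosses, whereas $\phi(m-1,n+1)$ is the probability of getting at least $n+2$ heads in $(m-1)+(n+1)+1 = m+n+1$ fair tosses. (The hypothesis $m \ge 1$ is exactly what guarantees $m-1 \ge 0$, so that $\phi(m-1,n+1)$ is even defined; this is where that assumption is used.) Hence the two quantities are cumulative probabilities of one and the same binomial random variable.

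Concretely, let $N \eqdef m+n+1$ and let $H$ be the number of heads in $N$ fair tosses. Then $\phi(m,n) = \Pr[H \ge n+1]$ and $\phi(m-1,n+1) = \Pr[H \ge n+2]$, so their difference collapses to a single point mass:
$$\phi(m,n) - \phi(m-1,n+1) = \Pr[H \ge n+1] - \Pr[H \ge n+2] = \Pr[H = n+1] = {N \choose n+1}\left(\frac{1}{2}\right)^{N} \enspace.$$
Substituting $N = m+n+1$ yields exactly ${n+m+1 \choose n+1}\left(\frac{1}{2}\right)^{m+n+1}$, which is the claimed identity, completing the argument.

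For completeness I note a purely algebraic alternative that stays within \pref{lem:1}: expand both $\phi(m,n)$ and $\phi(m-1,n+1)$ as the sums given there, shift the index $j \mapsto j-1$ in the expansion of $\phi(m-1,n+1)$ so that the two series run over a common range of exponents of $\tfrac12$, and then recombine the binomial coefficients via Pascal's rule so that the sum telescopes, leaving only the single boundary term ${n+m+1 \choose n+1}\left(\frac{1}{2}\right)^{m+n+1}$. I expect this route to be the main obstacle: the index shift and the leftover boundary contributions are easy to misalign, and one must track the exponents of $\tfrac12$ carefully. The probabilistic argument sidesteps all of this bookkeeping, which is why I would present it; the one insight it rests on is the coincidence $(m-1)+(n+1)+1 = m+n+1$.
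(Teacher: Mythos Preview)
Your proof is correct and is essentially the paper's own argument: the paper likewise observes that both terms refer to the same experiment of $m+n+1$ tosses, so the difference is exactly the probability of getting $n+1$ heads, namely ${m+n+1 \choose n+1}(1/2)^{m+n+1}$. Your write-up is a bit more explicit (introducing $N$ and $H$ and the intermediate tail-probability subtraction), but the idea is identical.
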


\begin{proof}
    By the definition of $\phi$, we have that $\phi(m, n) - \phi(m - 1,
    n + 1)$ equals the probability that in $m + n + 1$ tosses we get
    exactly $n + 1$ heads. Note that there are ${m + n + 1 \choose n +
    1}$ such head/tail sequences and each happens with probability
    $\left(\frac{1}{2}\right)^{m + n + 1}$. So we have proved the lemma.
%
%
\end{proof}
    
\subsection{Main Proof}

Now let us present the joint inductive proof of 
\pref{thm:canonicaleq}, \pref{prop:allocseq} (the elaborated version of
\pref{prop:alloc}, Case 2 - 4) ,
\pref{prop:pricemon}, \pref{prop:utilmon}, and \pref{prop:utilcont}.
First, let us restate and clarify these claims as follows:

\begin{description}
    \item[$\peq{k}$ (\pref{thm:canonicaleq})] 
        The canonical outcome is a subgame perfect equilibrium of the
        sequential auction with $k$ items.  

        \medskip

    \item[$\pallocseq{k}$ (\pref{prop:allocseq})]
        Suppose $k_1, k_2 \in \Z_{\ge 0}$ satisfy $k_1 + k_2 = k$ and
        $\frac{k_i}{k_{-i} + 1} < \frac{B_i}{B_{-i}} \le \frac{k_i +
        1}{k_{-i} + 1}$. Then, in the canonical outcome it is always the
        case that agent $i$ wins the first  $k_i$ items, and then agent
        $-i$ gets the remaining $k_{-i}$ items paying agent $i$'s
        remaining budget for each item. Here, there are two tie-breaking
        caveats:
        \begin{description}
            \item[\em Type II-A Tie-breaking] If $\frac{B_1}{B_2} =
                \frac{k_1 + 1}{k_2 + 1}$, then in the canonical
                outcome both agents keep bidding $p^* \eqdef
                \frac{B_1}{k_1 + 1} = \frac{B_2}{k_2 + 1}$ until one
                agent's budget becomes $p^*$, and then the other agent
                gets the remaining item by bidding $p^*+$.  
                
                \medskip 
                
                In this tie-breaking case, we can assume either agent
                $1$ gets the first $k_1$ items and then agent $2$ gets
                the remaining $k_2$ items, or the other way around,
                without changing the utilities of the agents.  
                
                \medskip

            \item[\em Type II-B Tie-breaking] If $\frac{B_i}{B_{-i}}$ is
                smaller than and sufficiently close to $\frac{k_i +
                1}{k_{-i} + 1}$, then in the canonical outcome both
                agents keep bidding $p^* \eqdef \frac{B_i}{k_i + 1}$
                until either agent $i$ gets $k_i$ items or agent $-i$
                gets $k_{-i} -1$ items. In the former case, agent $-i$
                then wins the remaining items paying agent $i$'s
                remaining budget $p^*$. In the latter case, agent $i$
                then keeps bidding $p^*+$ until she gets $k_i$ items,
                and then agent $−i$ wins the remaining items paying
                agent $i$'s remaining budget $p^*$. 
            
                \medskip

                In this tie-breaking case, we can assume agent $i$ gets
                $k_i$ items and then agent $−i$ gets the remaining
                $k_{-i}$ items, without changing the utilities of the
                agents.
        \end{description}

        \medskip

    \item[$\ppricemon{k}$ (\pref{prop:pricemon})]
         The prices paid in each round of the canonical outcome is
         non-increasing as the auction proceeds.  
 
        \medskip

    \item[$\putilmon{k}$ (\pref{prop:utilmon})]
        $U^{(k)}_i(B_i, B_{-i})$ is increasing in $B_i$, and
        non-increasing in $B_{-i}$.
 
        \medskip

    \item[$\putilcont{k}$ (\pref{prop:utilcont})]
        $U^{(k)}_i$ is continuous in both $B_i$ and
        $B_{-i}$, except when $\frac{B_i}{B_{-i}} = \frac{k_i}{k_{-i} +
        1}$ for $k_i, k_{-i} \in \Z_{\ge 0}$ such that $k_i + k_{-i} =
        k$.  Moreover, 
        \begin{eqnarray*}
            \lim_{B_{-i} \rightarrow \frac{k_{-i} + 1}{k_i} B_i^-}
            U^{(k)}_i(B_i, B_{-i}) & = & k_i v_i \enspace, \\
            \lim_{B_{-i} \rightarrow \frac{k_{-i} + 1}{k_i} B_i^+}
            U^{(k)}_i(B_i, B_{-i}) & = & (k_i - 1) v_i + B_i
            \enspace, \\ 
            \lim_{B_i \rightarrow \frac{k_i}{k_{-i} + 1} B_{-i}^-}
            U^{(k)}_i(B_i, B_{-i}) & = & (k_i - 1) v_i +
            \frac{k_i}{k_{-i} + 1} B_{-i} \enspace, \\ 
            \lim_{B_i \rightarrow \frac{k_i}{k_{-i} + 1} B_{-i}^-}
            U^{(k)}_i(B_i, B_{-i}) & = & k_i v_i \enspace.
        \end{eqnarray*}

        If $\frac{B_i}{B_{-i}} = \frac{k_i}{k_{-i} + 1}$, then
        $$U^{(k)}_i(B_i, B_{-i}) = (k_i - 1) v_i + \phi(k_{-i}, k_i - 1)
        v_i + \phi(k_i - 1, k_{-i}) B_i - \phi(k_i - 2, k_{-i} + 1)
        B_{-i} \enspace.$$
\end{description}

\medskip

Also, recall that the dependencies can be summarized as follows:

\begin{enumerate} 
    \item[] $\peq{k}$  depends on $\peq{k-1}$, $\putilmon{k-1}$, and
        $\putilcont{k-1}$.
    \item[] $\pallocseq{k}$ depends on $\peq{k}$, $\putilmon{k-1}$, and
        $\putilcont{k-1}$.
    \item[] $\ppricemon{k}$ depends on $\peq{k}$, $\pallocseq{\le k}$,
        $\putilmon{k-1}$, and $\putilcont{k-1}$.
    \item[] $\putilmon{k}$ and $\putilcont{k}$ depend on $\putilmon{k-1}$,
        $\putilcont{k-1}$, and $\pallocseq{k}$.
\end{enumerate}

\paragraph{Base case} 

These propositions in the case of $1$ or $2$ items are easy to verify
from the discussions in \pref{sec:fewitem}. So we will omit the tedious
calculations here.

\paragraph{Inductive step}

Let us move on to the inductive step. Suppose we have prove
the propositions for the cases of $k-1$ or less
items. Now let us consider the case of $k$ items. The rest of the
section is organized as follows. In \pref{sec:wlmon}, we will verify the
monotonicity and continuity of the winning and losing utilities
(\pref{prop:wlmon}) from the monotonicity and continuity of the utility
functions of $k-1$ items. In \pref{sec:criticalprice}, we will prove the
agents weakly prefer winning at the critical prices, which will be
useful for the rest of the proof. In \pref{sec:unique} we will prove
that the canonical outcome is indeed a subgame-perfect equilibrium
($\peq{k-1}, \putilmon{k-1}, \putilcont{k-1} \Rightarrow \peq{k}$). 
In \pref{sec:twophase}, we will prove the two-phase winner sequence in
the canonical outcome ($\pallocseq{k-1}, \putilmon{k-1},
\putilcont{k-1}, \peq{k} \Rightarrow \pallocseq{k}$). 
In \pref{sec:pricemon}, we will explain why the
prices paid for the items weakly declines as the action proceeds
($\ppricemon{k-1}, \putilmon{k-1}, \putilcont{k-1}, \peq{k},
\pallocseq{\le k} \Rightarrow \ppricemon{k}$). Finally, in
\pref{sec:util}, we will analyze the monotonicity and continuity of the
utility function in the sequential auction with $k$ items
($\putilmon{k-1}, \putilcont{k-1}, \pallocseq{k} \Rightarrow
\putilmon{k}, \putilcont{k}$).

\subsubsection{Monotonicity and Continuity of Winning and Losing Utility} 
\label{sec:wlmon}

In this step, we will establish the monotonicity and
continuity of the winning and losing utilities of the $k$-item
sequential auction. 

\begin{lemma}[\pref{prop:wlmon} restated] \label{lem:wlmon}
    $W^{(k)}_i(B_i, B_{-i}, p)$ is decreasing in $p$ while
    $L^{(k)}_i(B_i, B_{-i}, p)$ is non-decreasing in $p$, and both are
    increasing in $B_i$ and non-increasing in $B_{-i}$.
\end{lemma}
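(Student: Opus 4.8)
The plan is to derive this lemma directly from the inductive hypothesis $\putilmon{k-1}$, since both $W^{(k)}_i$ and $L^{(k)}_i$ are, by their very definitions, the $(k-1)$-item utility $U^{(k-1)}_i$ composed with a simple affine change of one of its arguments. Recall that $W^{(k)}_i(B_i, B_{-i}, p) = v_i + U^{(k-1)}_i(B_i - p, B_{-i})$ and $L^{(k)}_i(B_i, B_{-i}, p) = U^{(k-1)}_i(B_i, B_{-i} - p)$. The additive constant $v_i$ is irrelevant to monotonicity, so every claim reduces to how $U^{(k-1)}_i$ responds when exactly one of its two arguments is shifted monotonically.

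First I would treat $W^{(k)}_i$. The price $p$ enters only through the first argument $B_i - p$, which is strictly decreasing in $p$ and strictly increasing in $B_i$; composing this with the fact (from $\putilmon{k-1}$) that $U^{(k-1)}_i$ is strictly increasing in its first argument yields that $W^{(k)}_i$ is strictly decreasing in $p$ and strictly increasing in $B_i$. The budget $B_{-i}$ enters only through the second argument, on which $U^{(k-1)}_i$ is non-increasing, so $W^{(k)}_i$ is non-increasing in $B_{-i}$. Symmetric reasoning handles $L^{(k)}_i$: here both $p$ and $B_{-i}$ act through the second argument $B_{-i} - p$ (raising $p$ lowers it, raising $B_{-i}$ raises it), so by the non-increasing dependence of $U^{(k-1)}_i$ on its second argument, $L^{(k)}_i$ is non-decreasing in $p$ and non-increasing in $B_{-i}$, while its first argument is simply $B_i$, giving strict increase in $B_i$. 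This precisely reproduces the asymmetry in the statement: the strict claims (decreasing in $p$ for $W$, increasing in $B_i$ for both) stem from the strict first-argument monotonicity of $U^{(k-1)}_i$, whereas the weak claims come from the merely weak second-argument monotonicity.

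There is essentially no hard step here, and I expect the only genuine obstacle to be bookkeeping rather than any real argument. The one point requiring care is the domain/feasibility convention: whenever a shifted argument such as $B_i - p$ or $B_{-i} - p$ would become negative, the corresponding utility is $-\infty$ by our definition of utilities, and one should check that each monotonicity assertion remains valid (indeed trivially) across this boundary, so that no separate case analysis on feasibility is needed. I would also remark in passing that the continuity half of \pref{sec:wlmon} follows by the identical composition argument, substituting $\putilcont{k-1}$ for $\putilmon{k-1}$: $W^{(k)}_i$ and $L^{(k)}_i$ inherit the continuity of $U^{(k-1)}_i$ in each argument away from the exceptional budget ratios, and the one-sided limits transform under the affine shifts $B_i \mapsto B_i - p$ and $B_{-i} \mapsto B_{-i} - p$ in the obvious way.
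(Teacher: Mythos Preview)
Your proposal is correct and matches the paper's own proof essentially line for line: the paper also simply unfolds the definitions $W^{(k)}_i(B_i,B_{-i},p)=v_i+U^{(k-1)}_i(B_i-p,B_{-i})$ and $L^{(k)}_i(B_i,B_{-i},p)=U^{(k-1)}_i(B_i,B_{-i}-p)$ and invokes $\putilmon{k-1}$ (and $\putilcont{k-1}$ for the companion continuity lemma). Your added remarks on the strict/weak distinction and the $-\infty$ boundary convention are sound elaborations but not additional ideas.
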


\begin{lemma} \label{lem:wlcont}
    $W^{(k)}_i$ and $L^{(k)}_{-i}$ are continuous in $p$
    except when $p = B_i - \frac{k_i}{k_{-i}} B_{-i}$ for $k_1, k_2
    \in \Z_{\ge 0}$ such that $k_1 + k_2 = k$. Moreover, we have
    \begin{eqnarray*} 
        \lim_{p \rightarrow (B_{-i} - \frac{k_{-i}}{k_i}
        B_i)^-} L^{(k)}_i(B_i, B_{-i}, p) & = & (k_i - 1) v_i
        + B_i \enspace, \\
        \lim_{p \rightarrow (B_{-i} - \frac{k_{-i}}{k_i}
        B_i)^+} L^{(k)}_i(B_i, B_{-i}, p) & = & k_i v_i
        \enspace, \\ 
        \lim_{p \rightarrow (B_i - \frac{k_i}{k_{-i}}
        B_{-i})^-} W^{(k)}_i(B_i, B_{-i}, p) & = & (k_i + 1)
        v_i \enspace, \\
        \lim_{p \rightarrow (B_i - \frac{k_i}{k_{-i}}
        B_{-i})^+} W^{(k)}_i(B_i, B_{-i}, p) & = & k_i v_i +
        \frac{k_i}{k_{-i}} B_{-i} \enspace.
    \end{eqnarray*}
\end{lemma}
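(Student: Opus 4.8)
The plan is to observe that both $W^{(k)}_i$ and $L^{(k)}_i$ are, by their definitions, the $(k-1)$-item utility $U^{(k-1)}_i$ precomposed with a simple affine substitution of one budget argument, and then to transport the continuity statement $\putilcont{k-1}$ (\pref{prop:utilcont} for $k-1$ items) through that substitution. Concretely, $W^{(k)}_i(B_i, B_{-i}, p) = v_i + U^{(k-1)}_i(B_i - p, B_{-i})$ depends on $p$ only through the first argument $B_i - p$, while $L^{(k)}_i(B_i, B_{-i}, p) = U^{(k-1)}_i(B_i, B_{-i} - p)$ depends on $p$ only through the second argument $B_{-i} - p$. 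Since $p \mapsto B_i - p$ and $p \mapsto B_{-i} - p$ are affine (hence continuous), each of $W^{(k)}_i$ and $L^{(k)}_i$ is continuous in $p$ at every point whose image under the substitution is a continuity point of $U^{(k-1)}_i$. By $\putilcont{k-1}$ those images are continuity points except when the relevant budget ratio equals $\frac{k_i'}{k_{-i}' + 1}$ for some $k_i', k_{-i}' \in \Z_{\ge 0}$ with $k_i' + k_{-i}' = k-1$, so the discontinuities in $p$ occur exactly at the preimages of these critical ratios.

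Next I would convert critical ratios into critical prices and reindex to the announced form. For $W^{(k)}_i$ the condition $\frac{B_i - p}{B_{-i}} = \frac{k_i'}{k_{-i}' + 1}$ gives $p = B_i - \frac{k_i'}{k_{-i}' + 1} B_{-i}$; setting $k_i = k_i'$ and $k_{-i} = k_{-i}' + 1$ yields $k_i + k_{-i} = k$ and $p = B_i - \frac{k_i}{k_{-i}} B_{-i}$. For $L^{(k)}_i$ the condition $\frac{B_i}{B_{-i} - p} = \frac{k_i'}{k_{-i}' + 1}$ gives $p = B_{-i} - \frac{k_{-i}' + 1}{k_i'} B_i$; the same relabeling produces $p = B_{-i} - \frac{k_{-i}}{k_i} B_i$. (The grouping of $W^{(k)}_i$ with $L^{(k)}_{-i}$ in the statement is because, under the analogous relabeling, their discontinuity loci coincide.)

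Finally I would read off the one-sided limits from $\putilcont{k-1}$, reading its two one-sided limits in the appropriate directions, and taking care of the \emph{direction reversal}: because the inner substitution decreases in $p$, the left limit in $p$ corresponds to the right limit in the budget argument, and vice versa. For $W^{(k)}_i$, as $p \to (B_i - \frac{k_i}{k_{-i}} B_{-i})^-$ we have $B_i - p \to (\frac{k_i}{k_{-i}} B_{-i})^+$, so the $B_i^+$-limit of $\putilcont{k-1}$ (equal to $k_i' v_i = k_i v_i$) plus $v_i$ gives $(k_i + 1) v_i$; as $p \to (\cdots)^+$ the $B_i^-$-limit $(k_i' - 1) v_i + \frac{k_i'}{k_{-i}' + 1} B_{-i}$ plus $v_i$ gives $k_i v_i + \frac{k_i}{k_{-i}} B_{-i}$. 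The two limits for $L^{(k)}_i$ follow identically from the two $B_{-i}$-direction limits of $\putilcont{k-1}$, again with the flip since $B_{-i} - p$ decreases in $p$, yielding $(k_i - 1) v_i + B_i$ and $k_i v_i$ respectively.

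There is no real analytic obstacle here; the lemma is essentially a change-of-variables corollary of $\putilcont{k-1}$. The one place that demands care is the index bookkeeping: matching the substitution's decreasing direction to the correct one-sided limit of $U^{(k-1)}_i$, and getting the shift $k_{-i}' = k_{-i} - 1$ right so that the $(k-1)$-item critical ratios line up with the $k$-item prices. I would also note the minor edge constraints ($k_{-i} \ge 1$ so that $\frac{k_i}{k_{-i}}$ is defined, and $k_i \ge 1$ where the formula $(k_i - 1) v_i + B_i$ is invoked), which correspond exactly to the ranges in which the cited limits of $\putilcont{k-1}$ are meaningful.
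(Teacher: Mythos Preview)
Your proposal is correct and follows essentially the same approach as the paper: both recognize that $W^{(k)}_i$ and $L^{(k)}_i$ are affine reparametrizations of $U^{(k-1)}_i$ and then transport the continuity and one-sided limit statements of $\putilcont{k-1}$ through that substitution, with the reindexing $k_{-i} = k_{-i}' + 1$. Your write-up is in fact more explicit than the paper's about the direction reversal and the index bookkeeping, but the underlying argument is identical.
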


\begin{proof}[of \pref{lem:wlmon} and \pref{lem:wlcont}]
    By definition, we have $L^{(k)}_i(B_i, B_{-i}, p) = U^{(k-1)}_i(B_i,
    B_{-i} - p)$. Therefore, by \pref{prop:utilmon} and
    \pref{prop:utilcont} of the $k-1$ item case we get the asserted
    monotonicity of $L^{(k)}_i$. Also, we get that 
    $L^{(k)}_i$ is continuous in $p$ except when $p = B_{-i} - 
    \frac{k_i}{k_{-i}} B_i$ for $k_1, k_2 \in \Z_{\ge 0}$ s.t.~$k_1
    + k_2 = k$, and 
    \begin{eqnarray*}
        \lim_{p \rightarrow (B_{-i} - \frac{k_{-i}}{k_i} B_i)^-}
        L^{(k)}_i(B_i, B_{-i}, p) & = & \lim_{B \rightarrow
        \frac{k_{-i}}{k_i} B_i^+} U^{(k-1)}_i(B_i, B) \,=\, (k_i - 1)
        v_i + B_i \enspace, \\
        \lim_{p \rightarrow (B_{-i} - \frac{k_{-i}}{k_i} B_i)^+} 
        L^{(k)}_i(B_i, B_{-i}, p) & = & \lim_{B \rightarrow
        \frac{k_{-i}}{k_i} B_i^-} U^{(k-1)}_i(B_i, B) \,=\, k_i v_i
        \enspace.
    \end{eqnarray*}

    Similarly, $W^{(k)}_i(B_i, B_{-i}, p) = v_i + U^{(k-1)}_i(B_i - p,
    B_{-i})$. So by \pref{prop:utilmon} and \pref{prop:utilcont} we have
    the desired monotonicity and its continuity in $p$ except when
    $B_{-i} = \frac{k_{-i}}{k_i} (B_i - p)$ for $k_1, k_2 \in \Z_{\ge
    0}$ s.t.~$k_1 + k_2 = k$, and
    \begin{eqnarray*}
        \lim_{p \rightarrow (B_i - \frac{k_i}{k_{-i}} B_{-i})^-}
        W^{(k)}_i(B_i, B_{-i}, p) & = & v_i + \lim_{B \rightarrow
        \frac{k_i}{k_{-i}} B_{-i}^+} U^{(k - 1)}_i(B, B_{-i}) \,=\,
        (k_i + 1) v_i \enspace, \\
        \lim_{p \rightarrow (B_i - \frac{k_i}{k_{-i}} B_{-i})^+} 
        W^{(k)}_i(B_i, B_{-i}, p) & = & v_i + \lim_{B \rightarrow
        \frac{k_i}{k_{-i}} B_{-i}^-} U^{(k-1)}_i(B, B_{-i}) \\
        & = & k_i v_i + \frac{k_{-i}}{k_i} B_{-i} \enspace.
    \end{eqnarray*}

    So we have proved the lemmas.
\end{proof}

\subsubsection{Agents Weakly Prefer wining at Critical Prices} 
\label{sec:criticalprice}

In this section, we will prove that the agents always weakly prefer
winning the first item at the critical prices. 
First we need to prove a few lemmas.

\begin{lemma} \label{lem:7}
    For $i = 1, 2$, suppose $p > 0$ satisfies $p = \frac{B_i}{k_i + 1} =
    \frac{B_{-i}}{k_{-i}}$ for $k_1, k_2 \in \Z_{\ge 0}$ such that
    $k_1 + k_2 = k$, then $W^{(k)}_i(B_i, B_{-i}, p) \ge k_i v_i + 
    \phi(k_{-i} - 1, k_i - 1) v_i$.
\end{lemma}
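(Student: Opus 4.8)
The plan is to compute, directly, the subgame that agent $i$ faces after winning the first item at price $p$ and read off a lower bound for its value. By definition $W^{(k)}_i(B_i,B_{-i},p) = v_i + U^{(k-1)}_i(B_i - p, B_{-i})$, so it suffices to lower bound the subgame utility $U^{(k-1)}_i(B_i-p,B_{-i})$. Using the two hypotheses $p = \frac{B_i}{k_i+1}$ and $p = \frac{B_{-i}}{k_{-i}}$, after agent $i$ wins the first item her residual budget is $B_i - p = k_i p$ while agent $-i$ still has $B_{-i} = k_{-i} p$, and $k-1$ items remain. Hence the budget ratio entering the subgame is exactly $\frac{k_i p}{k_{-i} p} = \frac{k_i}{k_{-i}} = \frac{k_i}{(k_{-i}-1)+1}$. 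The first step is to recognize this as precisely the Type~I tie-breaking configuration of \pref{prop:alloctie}, applied to the $(k-1)$-item subgame, with agent $i$ playing the role of the agent of share $k_i$ and agent $-i$ the agent of share $k_{-i}-1$ (note $k_i + (k_{-i}-1) = k-1$).

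Next I would describe the resulting dynamics and extract the probability of the extra item. By \pref{prop:alloctie} both agents bid $p^\ast = p$ in every round, so each round is a tie resolved by a fair coin; agent $i$ can afford $k_i$ winning bids and agent $-i$ can afford $k_{-i}$. Modelling the rounds as independent fair coin flips (heads $=$ agent $i$ wins the round), the competition halts as soon as one agent's budget is exhausted, after which the other sweeps the remaining items for free. A short counting check shows that if agent $-i$ reaches $k_{-i}$ wins first then agent $i$ ends with exactly $k_i - 1$ items (competitive wins plus the free leftovers), whereas if agent $i$ reaches $k_i$ wins first she ends with exactly $k_i$ items. The latter event is the statement that agent $i$ accumulates $k_i$ heads before agent $-i$ accumulates $k_{-i}$ tails, which by the standard equivalence is the same as obtaining at least $k_i$ heads among the first $k_i + k_{-i} - 1 = (k_{-i}-1)+(k_i-1)+1$ flips; by the definition of $\phi$ this probability is exactly $\phi(k_{-i}-1,\,k_i-1)$.

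Finally I would assemble the bound. Agent $i$ thus wins $k_i$ items with probability $\phi(k_{-i}-1,k_i-1)$ and $k_i - 1$ items with the complementary probability, so the expected value of the items she wins in the subgame is $\bigl[(k_i-1) + \phi(k_{-i}-1,k_i-1)\bigr] v_i$. The crucial simplification is that the remaining budget is never negative, so in every outcome her utility is at least the value of the items she holds; taking expectations gives $U^{(k-1)}_i(k_i p, k_{-i} p) \ge \bigl[(k_i-1) + \phi(k_{-i}-1,k_i-1)\bigr] v_i$. Adding the $v_i$ gained from the first item yields $W^{(k)}_i(B_i,B_{-i},p) \ge k_i v_i + \phi(k_{-i}-1,k_i-1)\,v_i$, which is exactly the claim.

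I expect the main obstacle to be the bookkeeping of the coin-flip race: verifying that exhaustion of agent $-i$'s budget first leaves agent $i$ with precisely $k_i-1$ items, and matching the first-to-$k_i$ race probability to $\phi(k_{-i}-1,k_i-1)$ via the fixed-block-of-$(k_i+k_{-i}-1)$-flips reformulation. The degenerate endpoints $k_i = 0$ or $k_{-i}=0$, where $p$ forces one budget to vanish, should be dispatched separately and hold with equality since $\phi(\cdot,-1)=1$. A more computational alternative would instead substitute the closed form for $U^{(k-1)}_i$ at a Type~I point supplied by $\putilcont{k-1}$ and reduce the claim to the inequality $k_i\,\phi(k_i-1,k_{-i}-1) \ge k_{-i}\,\phi(k_i-2,k_{-i})$, provable through \pref{lem:2}; the probabilistic route above is preferable because it recognizes those two $\phi$-terms as nothing but the (nonnegative) expected residual budget, so the inequality is immediate.
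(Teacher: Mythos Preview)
Your proposal is correct and follows essentially the same approach as the paper: recognize that after agent $i$ wins the first item the residual budgets land exactly in the Type~I tie-breaking case of \pref{prop:alloctie}, compute the probability $\phi(k_{-i}-1,k_i-1)$ that agent $i$ secures the extra item via the coin-flip race, and drop the nonnegative residual budget to obtain the lower bound. The paper's proof is terser (it simply asserts the probability without the race-to-block reformulation you spell out), but the argument is the same.
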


\begin{proof}
    If agent $i$ wins the first item at price $p$, then in the subgame
    of $k-1$ items, the remaining budgets satisfy $\frac{B_i -
    p}{B_{-i}} = \frac{k_i}{k_{-i}}$. So we are in the tie-breaking case
    of \pref{prop:alloctie}. In this case, agent $i$ will get $k_i$
    items for sure, and have probability $\phi(k_{-i} - 1, k_i - 1)$ of
    getting an extra item. Further, the remaining budget is at least
    zero. So we have $W^{(k)}_i(B_i, B_{-i}, p) \ge k_i
    v_i + \phi(k_{-i} - 1, k_i - 1) v_i$.
\end{proof}

\begin{lemma} \label{lem:8}
    Suppose $p > 0$ satisfies that $p = \frac{1}{k_i + 1} B_i =
    \frac{1}{k_{-i}} B_{-i}$ for $k_1, k_2 \in \Z_{\ge 0}$
    s.t.~$k_1 + k_2 = k$, then $k_i v_i + \phi(k_{-i} - 1, k_i) +
    L^{(k)}_i(B_i, B_{-i}, p) \le k_i v_i + B_i + \phi(k_{-i} - 1, k_i)
    v_i$.
\end{lemma}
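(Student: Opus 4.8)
The plan is to prove the asserted upper bound on the losing utility $L^{(k)}_i(B_i,B_{-i},p)$; this is the losing-side companion of \pref{lem:7}, and as there the idea is to unfold one round and land in a Type~I tie-breaking subgame. First I would use the definition $L^{(k)}_i(B_i,B_{-i},p) = U^{(k-1)}_i(B_i, B_{-i}-p)$. Since agent $i$ loses the first item, agent $-i$ pays $p$, so the budgets entering the $(k-1)$-item subgame are $(B_i,\,B_{-i}-p) = \bigl((k_i+1)p,\,(k_{-i}-1)p\bigr)$ and their ratio is exactly $\frac{B_i}{B_{-i}-p} = \frac{k_i+1}{k_{-i}-1}$.

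Next I would recognize this ratio as a Type~I tie-breaking boundary of the $(k-1)$-item game. Matching $\frac{k_i+1}{k_{-i}-1} = \frac{\kappa_i}{\kappa_{-i}+1}$ subject to $\kappa_i+\kappa_{-i}=k-1$ forces $\kappa_i=k_i+1$ and $\kappa_{-i}=k_{-i}-2$, so \pref{prop:alloctie} applies to the subgame: both agents keep bidding $p^* = \frac{B_i}{k_i+1} = \frac{B_{-i}-p}{k_{-i}-1} = p$ until one is out of budget. Agent $i$ can afford $k_i+1$ such bids while agent $-i$ can afford only $k_{-i}-1$, so agent $i$ is guaranteed $k_i$ items and wins a single extra item exactly when her tied-round wins reach $k_i+1$ before agent $-i$'s reach $k_{-i}-1$. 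Modelling each tied round as a fair coin, this extra-item event is precisely ``$k_i+1$ heads before $k_{-i}-1$ tails'', which by the definition of $\phi$ has probability $\phi(k_{-i}-2,k_i)$.

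I would then bound the two contributions to $U^{(k-1)}_i(B_i,B_{-i}-p)$ separately. The expected value of the items agent $i$ receives is $\bigl(k_i+\phi(k_{-i}-2,k_i)\bigr)v_i$, and since agent $i$ enters the subgame with budget $B_i$ and only ever pays out, her remaining budget never exceeds $B_i$; by linearity of expectation,
\[
L^{(k)}_i(B_i,B_{-i},p)\;\le\; k_i v_i + \phi(k_{-i}-2,k_i)\,v_i + B_i .
\]
Finally, \pref{lem:1} shows $\phi(m,n)$ is non-decreasing in $m$ --- increasing $m$ from $k_{-i}-2$ to $k_{-i}-1$ only appends the non-negative term $\binom{k-1}{k_i}\bigl(\tfrac12\bigr)^{k}$ --- so $\phi(k_{-i}-2,k_i)\le\phi(k_{-i}-1,k_i)$, which yields the claimed bound $L^{(k)}_i(B_i,B_{-i},p)\le k_i v_i + B_i + \phi(k_{-i}-1,k_i)\,v_i$.

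The main obstacle is the bookkeeping of the second step: one must confirm that $\frac{k_i+1}{k_{-i}-1}$ lands on the Type~I boundary (rather than in an interior region or on a Type~II boundary) and read off the correct indices for the extra-item probability from the race between the two budgets. A minor but necessary point is the degenerate case $k_{-i}=1$, where $B_{-i}-p=0$: agent $-i$ is immediately out of budget, agent $i$ wins all remaining $k_i$ items for free while keeping its full budget, so $L^{(k)}_i = k_i v_i + B_i$, which is consistent with the bound under the convention $\phi(-1,k_i)=0$.
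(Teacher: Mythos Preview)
Your argument is correct and follows the same route as the paper's proof: unfold one round, observe that the subgame with budgets $((k_i+1)p,(k_{-i}-1)p)$ is a Type~I tie-breaking instance of \pref{prop:alloctie} with parameters $(\kappa_i,\kappa_{-i})=(k_i+1,k_{-i}-2)$, read off that agent~$i$ gets $k_i$ items surely plus one more with probability $\phi(k_{-i}-2,k_i)$, and bound the leftover budget by $B_i$. The paper's proof is terser---it records only the extra-item probability $\phi(k_{-i}-2,k_i)$ and the budget bound and then says ``so we have the desired inequality''---whereas you make explicit the monotonicity step $\phi(k_{-i}-2,k_i)\le\phi(k_{-i}-1,k_i)$ needed to match the index in the statement, and you also handle the degenerate case $k_{-i}=1$; both are points the paper leaves implicit.
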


\begin{proof}
    If agent $i$ loses the first item at price $p$, then in the subgame
    of $k-1$ items, the remaining budges satisfies $\frac{B_i}{B_{-i}} =
    \frac{k_i + 1}{k_{-i} - 1}$. So we are in the tie-breaking case of
    \pref{prop:alloctie}. In this case, agent $i$ will get $k_i$ items
    for sure, and have probability $\phi(k_{-i} - 2, k_i)$ of getting an
    extra item. Further, the remaining budget is at most $B_i$ and at
    least zero. So we have the desired inequality.
\end{proof}

\begin{lemma} \label{lem:6}
    Suppose $p > 0$ satisfies that $p = \frac{1}{k_i + 1} B_i =
    \frac{1}{k_{-i}} B_{-i}$ for $k_1, k_2 \in \Z_{\ge 0}$
    s.t.~$k_1 + k_2 = k$, then $W^{(k)}_i(B_i, B_{-i}, p) >
    L^{(k)}_i(B_i, B_{-i}, p)$.
\end{lemma}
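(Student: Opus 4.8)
The plan is to bound $W^{(k)}_i$ from below and $L^{(k)}_i$ from above using the two subgame computations already carried out in \pref{lem:7} and \pref{lem:8}, and then show the resulting gap is strictly positive via the value-dominance assumption $v_i > 2^k B_i$. The key observation is that both relevant subgames fall into the Type~I tie-breaking regime of \pref{prop:alloctie}: if agent $i$ wins the first item the budget ratio becomes $\frac{B_i - p}{B_{-i}} = \frac{k_i}{k_{-i}}$, whereas if she loses it becomes $\frac{B_i}{B_{-i} - p} = \frac{k_i + 1}{k_{-i} - 1}$. In either case agent $i$ secures exactly $k_i$ items with certainty, so the two scenarios differ only in the probability of capturing one extra item and in how much budget is left over.

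Concretely, \pref{lem:7} gives
\[
    W^{(k)}_i(B_i, B_{-i}, p) \ge k_i v_i + \phi(k_{-i} - 1, k_i - 1)\, v_i ,
\]
while (the bound established in the proof of) \pref{lem:8} gives
\[
    L^{(k)}_i(B_i, B_{-i}, p) \le k_i v_i + \phi(k_{-i} - 2, k_i)\, v_i + B_i ,
\]
the additive $B_i$ reflecting that the loser, having avoided paying $p$, can retain at most her whole budget. Subtracting the two cancels the certain $k_i v_i$ and reduces the claim to
\[
    W^{(k)}_i(B_i, B_{-i}, p) - L^{(k)}_i(B_i, B_{-i}, p) \ge \bigl[ \phi(k_{-i} - 1, k_i - 1) - \phi(k_{-i} - 2, k_i) \bigr] v_i - B_i .
\]

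The core of the argument is evaluating this difference of coin-tossing probabilities. Applying \pref{lem:2} with $m = k_{-i} - 1$ and $n = k_i - 1$ yields exactly
\[
    \phi(k_{-i} - 1, k_i - 1) - \phi(k_{-i} - 2, k_i) = \binom{k-1}{k_i}\left(\tfrac{1}{2}\right)^{k-1} \ge \left(\tfrac{1}{2}\right)^{k-1},
\]
using $\binom{k-1}{k_i} \ge 1$ (valid since $0 \le k_i \le k-1$, as $k_{-i} \ge 1$). This quantifies the intuition that winning the first item advances agent $i$ by one step in the tie-breaking race and so strictly improves her odds at the marginal $(k_i+1)$-st item. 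Plugging in and invoking $v_i > 2^k B_i$ then gives
\[
    W^{(k)}_i(B_i, B_{-i}, p) - L^{(k)}_i(B_i, B_{-i}, p) \ge \left(\tfrac{1}{2}\right)^{k-1} v_i - B_i > 2 B_i - B_i > 0 ,
\]
which proves the lemma.

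The step I expect to be the main obstacle is the bookkeeping that collapses the whole comparison onto the single probability gap: one must verify that the winning and losing subgames really do hand agent $i$ the same guaranteed count $k_i$, so that those terms cancel exactly and the only budget slack is the at-most-$B_i$ the loser saves. After that the combinatorial identity of \pref{lem:2} does the genuine work, and value dominance lets the item-probability advantage of winning swamp that budget slack. The degenerate case $k_{-i} = 1$ (where $\phi(k_{-i}-2, k_i) = \phi(-1, k_i) = 0$, the loser's opponent is immediately bankrupt, and agent $i$ simply sweeps the remaining $k_i$ items for free) should be treated directly, but there $L^{(k)}_i = k_i v_i + B_i$ and the same estimate $\left(\tfrac12\right)^{k-1} v_i > B_i$ finishes it.
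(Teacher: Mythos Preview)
Your proof is correct. Both your argument and the paper's hinge on the same identity from \pref{lem:2}, namely $\phi(k_{-i}-1,k_i-1)-\phi(k_{-i}-2,k_i)=\binom{k-1}{k_i}(1/2)^{k-1}$, and on the value-dominance assumption to make this probability gap swamp any budget slack.

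The difference is one of precision. You invoke the coarse bounds of \pref{lem:7} and \pref{lem:8}, obtaining $W-L\ge\binom{k-1}{k_i}(1/2)^{k-1}v_i-B_i$ and then needing $v_i>2^{k-1}B_i$ to conclude. The paper instead plugs in the exact tie-breaking utility formula from $\putilcont{k-1}$ for both $W$ and $L$, tracks all the budget terms (which partially cancel rather than just being bounded by $B_i$), and arrives at the exact expression $W-L=\binom{k-1}{k_i}(1/2)^{k-1}(v_i-B_i+B_{-i})$; this is positive already from $v_i>B_i$, without the full $2^k$ factor. Your route is more modular and avoids rewriting the utility formula, at the cost of a looser constant; the paper's route gives a sharper (indeed exact) identity but requires carrying three $\phi$-differences through the subtraction. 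Either is fine here since the standing assumption $v_i>2^kB_i$ covers both.
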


\begin{proof}
    Let us assume $i = 1$ for the sake of presentation.
    In this case, we have
    \begin{align*}
        W^{(k)}_1(B_1, B_2, p) & = v_1 + U^{(k-1)}_1(B_1 - p, B_2) \\
        & = k_1 v_1 + \phi(k_2 - 1, k_1 - 1) v_1 + \phi(k_1 - 1, k_2 -
        1) B_1 - \phi(k_1 - 2, k_2) B_2 \enspace.
    \end{align*}
    and
    \begin{align*}
        L^{(k)}_1(B_1, B_2, p) & = U^{(k-1)}_1(B_1, B_2 - p) \\
        & = k_1 v_1 + \phi(k_2 - 2, k_1) v_1 + \phi(k_1, k_2 - 2) B_1 -
        \phi(k_1 - 1, k_2 - 1) B_2 \enspace.
    \end{align*}

    Further, by \pref{lem:2} we have $\phi(k_2 - 1, k_1 - 1) - \phi(k_2
    - 2, k_1) = {k - 1 \choose k_1} \left(\frac{1}{2}\right)^{k-1}$ and
    $\phi(k_1, k_2 - 2) - \phi(k_1 - 1, k_2 - 1) = {k - 1 \choose k_2 -
    1} \left(\frac{1}{2}\right)^{k-1} = {k - 1 \choose k_1}
    \left(\frac{1}{2}\right)^{k-1}$. So we have
    \begin{eqnarray*}
        W^{(k)}(B_i, B_{-i}, p) - L^{(k)}(B_i, B_{-i}, p) & = &
        (\phi(k_2 - 1, k_1 - 1) - \phi(k_2 - 2, k_1)) v_1 \\
        & & - (\phi(k_1, k_2 - 2) - \phi(k_1 - 1, k_2 - 1)) B_1 \\
        & & + (\phi(k_1 - 1, k_1 - 1) - \phi(k_2 - 2, k_1)) B_2 \\
        & = & {k - 1 \choose k_1} \left(\frac{1}{2}\right)^{k - 1} (v_1
        - B_1 + B_2) ~ > ~ 0 \enspace.
    \end{eqnarray*}

    So we have proved \pref{lem:6}.
\end{proof}

\begin{lemma} \label{lem:price}
    Either both $W^{(k)}_i$ and $L^{(k)}_i$ are continuous at
    $\price{k}{i}$, in which case agent $i$ is indifferent between
    winning or losing the first item at $\price{k}{i}$, or $W^{(k)}_i$
    is discussions at $\price{k}{i}$, in which case agent $i$ strictly
    prefers winning the first item at price $\price{k}{1}$.
\end{lemma}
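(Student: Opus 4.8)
The plan is to analyze the sign of $W^{(k)}_i - L^{(k)}_i$ as $p$ crosses the critical price $\price{k}{i}$ (with $B_i, B_{-i}$ held fixed), using that by \pref{lem:wlmon} this difference is strictly decreasing in $p$ (since $W^{(k)}_i$ strictly decreases and $L^{(k)}_i$ is non-decreasing), so the sign change guaranteed by \pref{def:critical} is unique and sharp, and the only question is what happens exactly at $\price{k}{i}$.

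If both $W^{(k)}_i$ and $L^{(k)}_i$ are continuous at $\price{k}{i}$, I would take one-sided limits. From $W^{(k)}_i(p) > L^{(k)}_i(p)$ for $p < \price{k}{i}$ together with continuity, $W^{(k)}_i(\price{k}{i}) \ge L^{(k)}_i(\price{k}{i})$; from $W^{(k)}_i(p) < L^{(k)}_i(p)$ for $p > \price{k}{i}$ together with continuity, $W^{(k)}_i(\price{k}{i}) \le L^{(k)}_i(\price{k}{i})$. Hence $W^{(k)}_i(\price{k}{i}) = L^{(k)}_i(\price{k}{i})$, i.e.\ agent $i$ is indifferent, which is the first alternative.

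The substantive case is a discontinuity at the critical price. By \pref{lem:wlcont} every discontinuity of $W^{(k)}_i$ (resp.\ $L^{(k)}_i$) sits at a price for which winning (resp.\ losing) the first item drives the residual budgets to a Type-I tie-breaking ratio of the $(k-1)$-item subgame (\pref{prop:alloctie}). At such a point I would evaluate the utilities exactly through the tie-breaking formula recorded in $\putilcont{k-1}$: winning into a tie-break earns agent $i$ an extra item with the associated $\phi$-probability, so the value of $W^{(k)}_i$ at $\price{k}{i}$ exceeds its right limit there by a term of order $\phi\cdot v_i$, whereas $L^{(k)}_i(\price{k}{i})$ is pinned between consecutive integer item levels. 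Since $v_i \gg B_i, B_{-i}$, this $\phi v_i$ term dominates every budget contribution, and that is exactly what forces $W^{(k)}_i(\price{k}{i}) > L^{(k)}_i(\price{k}{i})$, i.e.\ strict preference for winning. For the most delicate configuration --- the doubly-critical ratio $\frac{B_i}{B_{-i}} = \frac{k_i+1}{k_{-i}}$, where winning and losing land on tie-breaks at the same price --- I would not re-derive this but invoke \pref{lem:6} directly, with \pref{lem:7} and \pref{lem:8} supplying the matching lower bound on $W^{(k)}_i$ and upper bound on $L^{(k)}_i$ that yield $W^{(k)}_i > L^{(k)}_i$ outright.

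The hard part is proving that the dichotomy is exhaustive: that a discontinuity at $\price{k}{i}$ is necessarily a downward jump of $W^{(k)}_i$ of the right kind, and never an isolated upward jump of $L^{(k)}_i$ nor a $W^{(k)}_i$-jump positioned one item too low --- either of which would spuriously produce a strict preference for \emph{losing}. I would handle this by an item-counting argument. Writing $w(p)$ and $\ell(p)$ for the number of items agent $i$ wins along the winning and losing branches, $w$ is non-increasing, $\ell$ is non-decreasing, and $w(0) = \ell(0) + 1$; because $v_i$ dominates the budgets, the sign of $W^{(k)}_i - L^{(k)}_i$ tracks the sign of $w - \ell$ everywhere except on the range where $w = \ell$, where it is decided by budgets alone. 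The claim to nail down is that the zero-crossing of $W^{(k)}_i - L^{(k)}_i$ coincides with the first price at which the winning branch drops to the common item level, i.e.\ a Type-I $W$-discontinuity whose right-limit item count already equals $\ell(\price{k}{i})$. Ruling out the crossing landing instead on an $L$-jump, or on a later $W$-jump, is where the strict monotonicity of $W^{(k)}_i - L^{(k)}_i$ must be combined with the budget comparison --- the winner has paid real money for the first item, so once the item counts equalize the winning branch is already behind on remaining budget. Making this budget-level comparison uniform across all the tie-breaking sub-cases is where I expect the bulk of the technical effort to lie.
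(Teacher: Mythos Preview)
Your continuous case and your invocation of \pref{lem:6} for the doubly-critical ratio match the paper exactly, and your treatment of a lone $W^{(k)}_i$-discontinuity (tie-break on the winning branch yields a $\phi\,v_i$ bonus that dominates any budget term in $L^{(k)}_i$) is the paper's Case~2 argument in outline. The substantive divergence is in how you propose to rule out the remaining configuration, where $L^{(k)}_i$ alone is discontinuous at $\price{k}{i}$.

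The paper does not use an item-counting/monotonicity argument here. Instead it reads off the contradiction directly from \pref{lem:wlcont}: if $L^{(k)}_i$ jumps at $\price{k}{i}$ and $W^{(k)}_i$ does not, then $\price{k}{i} = B_{-i} - \tfrac{k_{-i}}{k_i} B_i$ for some pair $(k_i,k_{-i})$. The one-sided limits of $L^{(k)}_i$ there, combined with the defining inequalities $W^{(k)}_i \lessgtr L^{(k)}_i$ on either side, squeeze $W^{(k)}_i(\price{k}{i})$ into $[(k_i-1)v_i + B_i,\, k_i v_i]$. But continuity of $W^{(k)}_i$ at $\price{k}{i}$ forces $\price{k}{i}$ to lie in a specific inter-jump interval of $W^{(k)}_i$, on which \pref{lem:wlcont} says $W^{(k)}_i < (k_i-1)v_i + \tfrac{k_i-1}{k_{-i}+1} B_{-i} < (k_i-1)v_i + B_i$. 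That is the whole argument; the explicit limit values already encode the item-counting you want to do abstractly.

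Two comments on your proposed route. First, your heuristic ``once item counts equalize, the winner is behind on remaining budget'' is about \emph{end-of-auction} budgets, not post-first-round budgets, so it does not follow from $p>0$ alone; you would need the full structure of the subgame prices to push it through, which is more than this lemma should require. Second, your worry about a $W$-jump ``positioned one item too low'' is legitimate, but the paper resolves it the same way as Case~3: the one-sided limits of $W$ pin $L(\price{k}{i})$ into a range that forces $\price{k}{i}$ into the correct inter-jump interval of $L$, after which the $\phi\,v_i$ lower bound on $W$ finishes the comparison. So the missing ingredient in your sketch is simply to exploit the explicit limit values in \pref{lem:wlcont} rather than rebuild them from scratch.
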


\begin{proof}
    For the sake of presentation, we will assume w.l.o.g.~that $i = 1$. 
    There are four cases:
    
    \paragraph{Case 1} Both $W^{(k)}_1$ and $L^{(k)}_1$ are continuous at
    $\price{k}{1}$. In this case, we can deduce by the definition of
    $\price{k}{1}$ that $W^{(k)}_1(B_1, B_2, \price{k}{1}) =
    L^{(k)}_1(B_1, B_2, \price{k}{1})$, that is, agent $1$ is
    indifferent between winning and losing the first item at price
    $\price{k}{1}$.

    \paragraph{Case 2} $L^{(k)}_1$ is continuous at $\price{k}{1}$ but
    $W^{(k)}_1$ is not. In this case, by \pref{lem:wlcont} we conclude
    that $\price{k}{1} = B_1 - \frac{k_1}{k_2} B_2$ for $k_1, k_2
    \in \Z_{\ge 0}$ s.t.~$k_1 + k_2 = k$. By our choice of
    $\price{k}{1}$, we have $W^{(k)}_1(B_1, B_2, p) <
    L^{(k)}_1(B_1, B_2, p)$ for any $p > \price{k}{1}$, and
    $W^{(k)}_1(B_1, B_2, p) > L^{(k)}_1(B_1, B_2, p)$ for any $p <
    \price{k}{1}$. So we have 
    $$L^{(k)}_1(B_1, B_2, \price{k}{1}) = \lim_{p \rightarrow
    \price{k}{1}+} L^{(k)}_1(B_1, B_2, p) \ge \lim_{p \rightarrow
    \price{k}{1}+} W^{(k)}_1(B_1, B_2, p) = k_1 v_1 + \frac{k_1}{k_2}
    B_2$$ 
    and 
    $$L^{(k)}_1(B_1, B_2, \price{k}{1}) = \lim_{p \rightarrow
    \price{k}{1}-} L^{(k)}_1(B_1, B_2, p) \le \lim_{p \rightarrow
    \price{k}{1}-} W^{(k)}_i(B_1, B_2, p) = (k_1 + 1) v_1
    \enspace.$$

    By the value range of $L^{(k)}_1$ in each continuous interval in
    \pref{lem:wlcont}, we know that $B_2 - \frac{k_2}{k_1} B_1 <
    \price{k}{1} < B_2 + \frac{k_2 - 1}{k_1 + 1} B_1$ and $k_1 v_1 \le
    L^{(k)}_1(B_1, B_2, \price{k}{1}) \le k_1 v_1 + B_1$. Finally, the
    utility for winning the first item is
    \begin{align*}
        W^{(k)}_1\left(B_1, B_2, \price{k}{1}\right) & = v_1 +
        U^{(k-1)}\left(\frac{k_1}{k_2} B_2, B_2 \right) 
        \ge k_1 v_1 + \phi(k_2 - 1, k_1 - 1) v_1 \\
        & \ge k_1 v_1 + \frac{1}{2^k} v_1 > k_1 v_1 + B_1 \enspace.
    \end{align*}

    Here, the first inequality holds because agent $1$ gets $k_1$ items
    for sure and gets an extra item with probability $\phi(k_2 - 1, k_1
    - 1)$ in this case.
    So agent $1$ strictly prefers winning the first item.

    \paragraph{Case 3}
    $W^{(k)}_1$ is continuous at $\price{k}{1}$ but $L^{(k)}_1$ is not.
    We will argue this case is impossible. Suppose for contradiction
    that this is the case, then $\price{k}{1} = B_2 - \frac{k_2}{k_1}$.
    Similar to the analysis in the second case, we have that.
    $$W^{(k)}_1(B_1, B_2, \price{k}{1}) = \lim_{p \rightarrow
    \price{k}{1}+} W^{(k)}_1(B_1, B_2, p) \le \lim_{p \rightarrow
    \price{k}{1}+} L^{(k)}_1(B_1, B_2, p) = k_1 v_1$$ 
    and 
    $$W^{(k)}_1(B_1, B_2, \price{k}{1}) = \lim_{p \rightarrow
    \price{k}{1}-} W^{(k)}_1(B_1, B_2, p) \ge \lim_{p \rightarrow
    \price{k}{1}-} L^{(k)}_1(B_1, B_2, p) = (k_1 - 1) v_1 + B_1
    \enspace.$$
    However, by \pref{lem:wlcont} we have $B_1 - \frac{k_1 - 2}{k_2 + 2}
    B_2 < \price{k}{1} < B_1 - \frac{k_1 - 1}{k_2 + 1} B_2$.
    When $\price{k}{1}$ is in this range, the value of $W^{(k)}_1$
    is upper bounded by $W^{(k)}_1(B_1, B_2, \price{k}{1}) < (k_1 - 1)
    v_1 + \frac{k_1 - 1}{k_2 + 1} B_2 < (k_1 - 1) v_1 + B_1$.
    So we have a contradiction.

    \paragraph{Case 4} Both $W^{(k)}_1$ and $L^{(k)}_1$ are
    discontinuous at $\price{k}{1}$. Then, by \pref{lem:wlcont} it must
    be the case that $\price{k}{1} = B_2 - \frac{k_2 - 1}{k_1 + 1} B_1 =
    B_1 - \frac{k_1}{k_2} B_2$ for $k_1, k_2 \in \Z_{\ge 0}$
    s.t.~$k_1 + k_2 = k$. So we have that $\price{k}{1} = \frac{1}{k_1 +
    1} B_1 = \frac{1}{k_2} B_2$. By \pref{lem:6}, agent $1$ strictly
    prefers winning the first item at price $\price{k}{1}$ in this
    case.

    \bigskip

    Summarizing the four cases, we have proved the lemma.
\end{proof}

\subsubsection{Subgame-Perfection of the Canonical Outcome}
\label{sec:unique}

In this step, we aim to establish the fact that the canonical
outcome is a subgame-perfect equilibrium. Let us first prove several
lemmas.

\begin{lemma} \label{lem:3}
    $L^{(k)}_i$ is continuous at $\price{k}{i}$.
\end{lemma}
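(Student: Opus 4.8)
The plan is to prove that the critical price $\price{k}{i}$ is never one of the finitely many points at which $L^{(k)}_i$ can fail to be continuous in $p$. By \pref{lem:wlcont}, the map $p \mapsto L^{(k)}_i(B_i, B_{-i}, p)$ is continuous except at the points $p = B_{-i} - \frac{k_{-i}}{k_i} B_i$ indexed by partitions $k_i + k_{-i} = k$ (with $k_i \ge 1$), where it jumps upward from $(k_i - 1) v_i + B_i$ to $k_i v_i$. Thus it suffices to show that $\price{k}{i}$ does not coincide with any such jump point.

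I would run this through the same four-way case analysis already set up in \pref{lem:price}, according to the continuity of $W^{(k)}_i$ and $L^{(k)}_i$ at $\price{k}{i}$. In Cases 1 and 2 the function $L^{(k)}_i$ is continuous at $\price{k}{i}$, so there is nothing to do. Case 3, where $L^{(k)}_i$ is discontinuous but $W^{(k)}_i$ is continuous, was already ruled out in \pref{lem:price} by pitting the one-sided limits of $W^{(k)}_i$ against the admissible value ranges from \pref{lem:wlcont}; I would reuse that contradiction verbatim. Hence the whole lemma reduces to excluding Case 4, in which both $W^{(k)}_i$ and $L^{(k)}_i$ jump at $\price{k}{i}$.

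To dispatch Case 4, I would match the two jump locations. Placing $\price{k}{i}$ at a discontinuity of $W^{(k)}_i$ forces $\price{k}{i} = B_i - \frac{k_i}{k_{-i}} B_{-i}$ for some partition $(k_i, k_{-i})$, while placing it at a discontinuity of $L^{(k)}_i$ forces $\price{k}{i} = B_{-i} - \frac{k_{-i} - 1}{k_i + 1} B_i$ for the shifted partition $(k_i + 1, k_{-i} - 1)$. Equating these two expressions yields $\frac{B_i}{B_{-i}} = \frac{k_i + 1}{k_{-i}}$ and $\price{k}{i} = \frac{B_i}{k_i + 1} = \frac{B_{-i}}{k_{-i}}$. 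But this is exactly the Type~I tie-breaking ratio of \pref{prop:alloctie}, at which the two agents share a common critical price, $\price{k}{i} = \price{k}{-i}$. Since that degenerate tie configuration is treated separately through \pref{prop:alloctie}, in the regime considered here we have $\price{k}{i} \neq \price{k}{-i}$, so Case 4 does not arise and $L^{(k)}_i$ is continuous at $\price{k}{i}$.

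The only real work is in Case 4, which is also where I expect the main subtlety: one has to notice that requiring both $W^{(k)}_i$ and $L^{(k)}_i$ to jump at the same price pins the budget ratio down to the single Type~I value $\frac{k_i + 1}{k_{-i}}$, and that this coincides with the tie already excluded. The supporting arithmetic is identical to the reconciliation of jump points carried out in \pref{lem:price}, so no new estimates are needed; the care lies entirely in correctly pairing the partition $(k_i, k_{-i})$ governing the jump of $W^{(k)}_i$ with the shifted partition $(k_i + 1, k_{-i} - 1)$ governing the jump of $L^{(k)}_i$, and in recognizing their coincidence as the tie event handled elsewhere.
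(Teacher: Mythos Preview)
Your proposal is correct and follows essentially the same route as the paper: reduce to Case~4 via \pref{lem:price}, match the jump locations of $W^{(k)}_i$ and $L^{(k)}_i$ to pin down $\price{k}{i} = \frac{B_i}{k_i+1} = \frac{B_{-i}}{k_{-i}}$, and conclude this forces the tie $\price{k}{i} = \price{k}{-i}$, which is excluded from the present regime. The only cosmetic difference is that in the last step the paper directly checks that $W^{(k)}_{-i}$ and $L^{(k)}_{-i}$ are also discontinuous at this price (hence $\price{k}{-i} = \price{k}{i}$, ``contradicting our assumption''), whereas you route the same conclusion through \pref{prop:alloctie}; both are valid, and you correctly identify that the lemma is tacitly stated for $\price{k}{i} \neq \price{k}{-i}$.
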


\begin{proof}
    Let us assume w.l.o.g.~that $i = 1$.
    Suppose for contradiction that $L^{(k)}_1$ is discontinuous at
    $\price{k}{1}$. Then, by \pref{lem:price}, $W^{(k)}_1$ must be
    discontinuous at $\price{k}{1}$ as well. By \pref{lem:wlcont} it
    must be the case that $\price{k}{1} = B_2 - \frac{k_2 - 1}{k_1 + 1}
    B_1 = B_1 - \frac{k_1}{k_2} B_2$ for $k_1, k_2 \in \Z_{\ge 0}$
    s.t.~$k_1 + k_2 = k$. So we have that $\price{k}{1} = \frac{1}{k_1 +
    1} B_1 = \frac{1}{k_2} B_2$. So it is easy to verify $W^{(k)}_2$
    and $L^{(k)}_2$ are discontinuous at $\price{k}{1}$ as well, and 
    $\price{k}{i}$ is also the critical price of agent $2$, which
    contradicts our assumption.
\end{proof}

\begin{lemma} \label{lem:5}
    Suppose $\price{k}{1} = \price{k}{2}$. Then, either all of
    $W^{(k)}_1$, $L^{(k)}_1$, $W^{(k)}_2$, $L^{(k)}_2$ are
    continuous at $\price{k}{1} = \price{k}{2}$, or all of them are
    discontinuous at $\price{k}{1} = \price{k}{2}$.
\end{lemma}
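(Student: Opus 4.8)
The plan is to leverage two facts already in place: the pairing of discontinuities recorded in \pref{lem:wlcont}, and the impossibility (established as ``Case~3'' inside the proof of \pref{lem:price}) of the configuration in which $W^{(k)}_i$ is continuous at $\price{k}{i}$ while $L^{(k)}_i$ is discontinuous there. Write $c \eqdef \price{k}{1} = \price{k}{2}$. I will track only the four booleans recording whether each of $W^{(k)}_1, L^{(k)}_1, W^{(k)}_2, L^{(k)}_2$ is continuous at $c$, and show they cannot take mixed values. Let $w_1, l_1$ denote the continuity statuses of $W^{(k)}_1, L^{(k)}_1$ at $c$, and similarly $w_2, l_2$.

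First I invoke \pref{lem:wlcont}, which asserts that $W^{(k)}_i$ and $L^{(k)}_{-i}$ share exactly the same discontinuity set $\{\,p = B_i - \frac{k_i}{k_{-i}} B_{-i}\,\}$. Evaluated at the single price $c$ this says $W^{(k)}_1$ is continuous at $c$ if and only if $L^{(k)}_2$ is, and likewise $W^{(k)}_2$ is continuous at $c$ if and only if $L^{(k)}_1$ is. Consequently $w_2 = l_1$ and $l_2 = w_1$, so the four statuses form the multiset $\{w_1, w_1, l_1, l_1\}$. Hence all four agree precisely when $w_1 = l_1$, and the whole lemma reduces to ruling out $w_1 \neq l_1$.

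Suppose for contradiction $w_1 \neq l_1$. If $W^{(k)}_1$ is continuous but $L^{(k)}_1$ is discontinuous at $c = \price{k}{1}$, this is exactly the configuration excluded by \pref{lem:price} for agent $1$, a contradiction. Otherwise $W^{(k)}_1$ is discontinuous and $L^{(k)}_1$ is continuous at $c$; but then the pairing from the previous paragraph forces $L^{(k)}_2$ to be discontinuous and $W^{(k)}_2$ to be continuous at $c$. Since $c = \price{k}{2}$ as well, this is the same forbidden configuration of \pref{lem:price}, now applied to agent $2$, again a contradiction. Either way we reach a contradiction, so $w_1 = l_1$ and all four functions share the same continuity status at $c$, which proves the two-valued dichotomy asserted in \pref{lem:5}.

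The conceptual crux — and the step I expect to be the real obstacle — is the symmetric application of \pref{lem:price} to \emph{both} agents: a mixed status for agent $1$ is ``exported,'' via the cross-pairing $W^{(k)}_1 \leftrightarrow L^{(k)}_2$ and $W^{(k)}_2 \leftrightarrow L^{(k)}_1$, into the forbidden status for agent $2$, and this is legitimate exactly because the hypothesis $\price{k}{1} = \price{k}{2}$ places both critical prices at the common point $c$. The only routine point to verify is that the pairing in \pref{lem:wlcont} is a genuine equivalence at $c$: the two discontinuity sets coincide except at the spurious endpoint $p = B_i$ (where an agent would exhaust her entire budget), and since $c < \min(B_1, B_2)$ in the nontrivial regime in which the critical prices exist, that endpoint is irrelevant. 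I would therefore carry out the steps in the order above: record the pairing, reduce to $w_1 = l_1$, and then close the two sub-cases by appealing to \pref{lem:price} for agent~$1$ and for agent~$2$ respectively.
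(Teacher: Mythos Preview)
Your proposal is correct and follows essentially the same approach as the paper: both use the pairing $W^{(k)}_i \leftrightarrow L^{(k)}_{-i}$ from \pref{lem:wlcont} together with the impossibility (Case~3 of \pref{lem:price}) of having $L^{(k)}_i$ discontinuous while $W^{(k)}_i$ is continuous at $\price{k}{i}$. The only cosmetic difference is that the paper argues ``assume one is discontinuous, then all are'' via a WLOG, whereas you reduce to $w_1 = l_1$ and explicitly treat the two symmetric sub-cases.
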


\begin{proof}
    Suppose at least one of $W^{(k)}_1$, $L^{(k)}_1$, $W^{(k)}_2$,
    $L^{(k)}_2$ is discontinuous at $\price{k}{1} = \price{k}{2}$. Note
    that by $\pwlmon{k}$, $W^{(k)}_i$ is discontinuous at a price $p$
    if and only if $L^{(k)}_{-i}$ is discontinuous at $p$. So we can
    assume without loss of generality that $W^{(k)}_1$ and $L^{(k)}_2$
    are discontinuous at $\price{k}{1} = \price{k}{2}$. Now by
    $\pprice{k}$, $L^{(k)}_2$ is discontinuous at $\price{k}{1} =
    \price{k}{2}$ implies $W^{(k)}_2$ is discontinuous at $\price{k}{1}
    = \price{k}{2}$ as well, which further indicates $L^{(k)}_1$ is
    discontinuous at $\price{k}{1} = \price{k}{2}$. So we have proved
    the lemma.
\end{proof}

\begin{lemma} \label{lem:4a}
    Suppose $\price{k}{i} > \price{k}{-i}$. Then, agent
    $i$ bidding $\price{k}{-i}+$ and agent $-i$ bidding $\price{k}{-i}$
    and then both agents following the canonical outcome in the
    subgames of $k-1$ items is a subgame perfect equilibrium.
\end{lemma}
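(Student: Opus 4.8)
The plan is to isolate the first round. Since the proposed profile has both players following the canonical outcome in every proper subgame, and the inductive hypothesis $\peq{k-1}$ already guarantees that the canonical outcome is a subgame perfect equilibrium of the $(k-1)$-item game, every proper subgame carries equilibrium play automatically. It therefore suffices to check that neither agent has a profitable first-round deviation, where the continuation payoffs are exactly the winning and losing utilities defined earlier. In the proposed profile agent $i$ wins the first item and, by the $b+$ charging rule, pays $\price{k}{-i}$; thus agent $i$ obtains $W^{(k)}_i(B_i, B_{-i}, \price{k}{-i})$ and agent $-i$ obtains $L^{(k)}_{-i}(B_{-i}, B_i, \price{k}{-i})$. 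I would verify optimality for each agent against this benchmark.

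For agent $i$ (the winner), the key observation is that $\price{k}{-i} < \price{k}{i}$, so by \pref{def:critical} we have $W^{(k)}_i(B_i, B_{-i}, \price{k}{-i}) > L^{(k)}_i(B_i, B_{-i}, \price{k}{-i})$; that is, at this price agent $i$ strictly prefers winning to losing. Raising her bid keeps her winning but at a strictly higher price, which lowers her utility since $W^{(k)}_i$ is decreasing in $p$ (\pref{prop:wlmon}); bidding so as to lose, or tying and ceding the item with probability one half, replaces her sure winning utility by a quantity that puts positive weight on the strictly smaller losing utility. Every such deviation is strictly worse, so agent $i$ cannot profit.

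For agent $-i$ (the loser), any deviation that keeps her losing leaves her payoff unchanged, since agent $i$ still pays essentially $\price{k}{-i}$. To win she must outbid $\price{k}{-i}+$, i.e.\ bid some price strictly above $\price{k}{-i}$, or tie at $\price{k}{-i}+$ and win the coin flip; in either case she secures the item at a price infinitesimally above $\price{k}{-i}$, so her winning payoff is at most the right limit $\lim_{p \to (\price{k}{-i})^+} W^{(k)}_{-i}(B_{-i}, B_i, p)$. By \pref{def:critical} this is at most $\lim_{p \to (\price{k}{-i})^+} L^{(k)}_{-i}(B_{-i}, B_i, p)$, which by the continuity of the losing utility at the critical price (\pref{lem:3}) equals $L^{(k)}_{-i}(B_{-i}, B_i, \price{k}{-i})$, her equilibrium payoff. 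Hence no winning deviation helps either. Combining both agents gives a Nash equilibrium at the root, and together with $\peq{k-1}$ on the proper subgames this yields subgame perfection.

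The step I expect to be the main obstacle is precisely the loser's incentive at a discontinuity of $W^{(k)}_{-i}$. By \pref{lem:price} there are cases in which agent $-i$ \emph{strictly} prefers winning at exactly $\price{k}{-i}$, i.e.\ $W^{(k)}_{-i}(\cdot, \price{k}{-i}) > L^{(k)}_{-i}(\cdot, \price{k}{-i})$ when the winning utility is evaluated at the point; a naive reading would suggest a profitable tie-breaking deviation and threaten the entire equilibrium. The resolution, and the delicate part of the write-up, is that agent $i$'s bid of $\price{k}{-i}+$ blocks agent $-i$ from ever winning at exactly $\price{k}{-i}$: any realized win lands in the low right-limit regime of $W^{(k)}_{-i}$, where the winning utility has already dropped to at most $L^{(k)}_{-i}(\cdot, \price{k}{-i})$. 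Making this rigorous is exactly what the $b+$ bidding convention together with \pref{lem:3} are designed to support, and these are the two ingredients I would lean on most heavily.
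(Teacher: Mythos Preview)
Your proposal is correct and follows essentially the same approach as the paper: both arguments reduce to the first round via $\peq{k-1}$, dispose of the winner's deviations using $\price{k}{-i} < \price{k}{i}$ together with the monotonicity of $W^{(k)}_i$, and handle the loser's overbidding by bounding $\lim_{p \to (\price{k}{-i})^+} W^{(k)}_{-i}$ by $\lim_{p \to (\price{k}{-i})^+} L^{(k)}_{-i}$ and invoking \pref{lem:3} for the continuity of $L^{(k)}_{-i}$ at its critical price. Your explicit discussion of the discontinuity obstacle and the role of the $b+$ convention is more detailed than the paper's own write-up, but the key ingredients and their use are identical.
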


\begin{proof}
    It suffices to show that neither of the agents has profitable
    deviations. Let us assume w.l.o.g.~that $i = 1$ for the sake of
    presentation.

    First, we will consider the possible deviation of agent $1$. We note
    that agent $1$ could not benefit from bidding over $\price{k}{2}$
    because that would only increases her price for getting the first
    item. Further, agent $1$ could not benefit from bidding below
    $\price{k}{2}$ because $\price{k}{2} < \price{k}{1}$ implies that
    agent $1$ strictly prefers winning the first item at prices
    $\price{k}{2}$. 
    
    Next, let us consider the possible deviation of agent $2$. Note that
    underbidding has no effect for that agent $1$ would still win the
    first item at $\price{k}{2}$. Further, by \pref{lem:3},
    $L^{(k+1)}_2$ is continuous at $\price{k}{2}$. So we get that
    overbidding has utility at most 
    $$\lim_{p \rightarrow \price{k}{2}+} W^{(k)}_2(B_2, B_1, p) \le
    \lim_{p \rightarrow \price{k}{2}+} L^{(k)}_2(B_2, B_1, p) =
    L^{(k)}_2(B_2, B_1, \price{k}{2}) \enspace.$$

    Therefore, overbidding could not be a profitable deviation for agent
    $2$ either. 
\end{proof}

\begin{lemma} \label{lem:4b}
    If $\price{k}{1} = \price{k}{2}$, then both agents bidding
    $\price{k}{1} = \price{k}{2}$ and then following the canonical
    outcomes in the subgames of $k-1$ items is a subgame-perfect
    equilibrium.
\end{lemma}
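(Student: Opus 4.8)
The plan is to show that neither agent can profitably deviate in the first round; subgame-perfection in the remaining $k-1$ rounds then follows directly from the inductive hypothesis $\peq{k-1}$, so only the root of the game tree needs a one-shot-deviation check. Write $p^* = \price{k}{1} = \price{k}{2}$. The first and most important step is to argue that all four functions $W^{(k)}_1, L^{(k)}_1, W^{(k)}_2, L^{(k)}_2$ are continuous at $p^*$. This follows by combining \pref{lem:3} and \pref{lem:5}: by \pref{lem:3} each $L^{(k)}_i$ is continuous at its own critical price $\price{k}{i}$, and since both critical prices coincide with $p^*$, both $L^{(k)}_1$ and $L^{(k)}_2$ are continuous at $p^*$; but \pref{lem:5} asserts that at a common critical price the four functions are either all continuous or all discontinuous, so the continuity of the losing utilities rules out the discontinuous case and forces all four to be continuous.

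Given continuity, the definition of the critical price (\pref{def:critical}) immediately yields indifference: since $W^{(k)}_i - L^{(k)}_i$ is positive just below $p^*$, negative just above, and continuous at $p^*$, we get $W^{(k)}_i(B_i, B_{-i}, p^*) = L^{(k)}_i(B_i, B_{-i}, p^*)$ for $i = 1, 2$. Consequently, when both agents bid $p^*$ and the tie is broken uniformly at random, agent $i$'s expected utility is $\frac{1}{2} W^{(k)}_i(B_i,B_{-i},p^*) + \frac{1}{2} L^{(k)}_i(B_i,B_{-i},p^*) = W^{(k)}_i(B_i,B_{-i},p^*) = L^{(k)}_i(B_i,B_{-i},p^*)$, which I denote $U^{(k)}_i$.

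Next I would rule out each type of first-round deviation for agent $i$, holding the opponent's bid fixed at $p^*$. Overbidding to some $b > p^*$ makes agent $i$ win the first item at price $b$, for utility $W^{(k)}_i(B_i,B_{-i},b)$; since $W^{(k)}_i$ is decreasing in $p$ by \pref{lem:wlmon}, this is strictly less than $W^{(k)}_i(B_i,B_{-i},p^*) = U^{(k)}_i$, so overbidding is never profitable. Underbidding to some $b < p^*$ makes agent $i$ lose while the opponent still pays $p^*$, for utility $L^{(k)}_i(B_i,B_{-i},p^*) = U^{(k)}_i$, which is no improvement; bidding $p^*$ or $p^*+$ likewise yields exactly $U^{(k)}_i$. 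Hence agent $i$ has no profitable deviation, and by symmetry neither does the other agent.

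Finally, in either realization of the tie the induced subgame on $k-1$ items is played according to the canonical outcome, which is a subgame-perfect equilibrium by $\peq{k-1}$; combined with the absence of profitable first-round deviations, this establishes that the proposed profile is a subgame-perfect equilibrium. I expect the main obstacle to be the continuity argument of the first step: the entire proof hinges on excluding the discontinuous case (in which, by \pref{lem:price}, each agent would strictly prefer to win, so a tie at $p^*$ would be unstable and overbidding could become attractive), and this exclusion is exactly what the interplay of \pref{lem:3} and \pref{lem:5} is designed to deliver.
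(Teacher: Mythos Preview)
Your continuous-case analysis is correct and matches the paper's Case~1 essentially verbatim. The gap is in the first step, where you claim that \pref{lem:3} rules out discontinuity of $L^{(k)}_i$ at $p^*$. Look at the proof of \pref{lem:3}: it supposes $L^{(k)}_1$ is discontinuous at $\price{k}{1}$, deduces that then $\price{k}{1}=\frac{1}{k_1+1}B_1=\frac{1}{k_2}B_2$, and concludes that $\price{k}{1}$ must also be the critical price of agent~$2$ --- and declares this a contradiction. That is only a contradiction under the standing hypothesis $\price{k}{1}\neq\price{k}{2}$ of \pref{lem:4a}, where the lemma is actually invoked. Under the hypothesis of \pref{lem:4b} the two critical prices coincide by assumption, so the argument of \pref{lem:3} yields no contradiction at all; it merely confirms that we are in the Type~I tie-breaking regime $\frac{B_1}{k_1+1}=\frac{B_2}{k_2}$. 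In that regime all four of $W^{(k)}_1,L^{(k)}_1,W^{(k)}_2,L^{(k)}_2$ are genuinely discontinuous at $p^*$, so your appeal to \pref{lem:5} does not force continuity --- it is perfectly consistent with the ``all discontinuous'' alternative.

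Consequently the discontinuous case cannot be excluded and must be handled directly, which is exactly what the paper does in its Case~2. There, \pref{lem:6} gives that each agent strictly prefers winning at $p^*$ (so underbidding is ruled out), and the delicate part is showing that overbidding is also unprofitable even though the agent strictly wants to win: the paper bounds the overbidding utility by $k_1 v_1 + B_1$ via \pref{lem:wlcont}, bounds the equilibrium utility from below using \pref{lem:7} and \pref{lem:8}, and closes the comparison with the standing assumption $v_1>2^k B_1$. Your proposal is missing this entire branch.
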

    
\begin{proof}
    We will let $p^* \eqdef \price{k}{1} = \price{k}{2}$ for
    convenience. By symmetry, it suffices to prove agent $1$ has no
    profitable deviation. By \pref{lem:5} we only need to consider the
    following two cases.

    The first case is when all of $W^{(k)}_1$, $L^{(k)}_1$, $W^{(k)}_2$,
    $L^{(k)}_2$ are continuous at $p^*$. In this case, both agents are
    indifferent between winning and losing at price $p^*$. Therefore,
    underbidding yield utility $L^{(k)}_1(B_1, B_2, p^*)$, which is
    the same as the utility of bidding $p^*$. Overbidding is strictly
    worse because for any $p > p^*$, we have $W^{(k)}_1(B_1, B_2, p) <
    W^{(k)}_1(B_1, B_2, p^*)$, which again equals the utility of bidding
    $p^*$.
    
    The second case is when all of $W^{(k)}_1$, $L^{(k)}_1$,
    $W^{(k)}_2$, $L^{(k)}_2$ are discontinuous at $p^*$. In this case,
    we have $p^* = B_1 - \frac{k_1}{k_2} B_2 = B_2 - \frac{k_2 - 1}{k_1
    + 1} B_1$ and thus $p^* = \frac{1}{k_1 + 1} B_1 = \frac{1}{k_2}
    B_2$. By \pref{lem:6} agent $1$ strictly prefers winning the first
    item at price $p^*$ than losing it. 
    So agent $1$ will not underbid. If agent $1$ overbids $p > p^*$,
    then her utility is
    \begin{align*}
        W^{(k)}_1(B_1, B_2, p) & ~ = ~ v_1 + U^{(k)}_1(B_1 - p, B_2) \\
        & ~ < ~ v_1 + \lim_{B \rightarrow (\frac{k_1}{k_2} B_2)^-}
        U^{(k)}_1(B, B_2) & & \text{($B_1 - p < \frac{k_1}{k_2} B_2$)} \\
        & ~ = ~ v_1 + (k_1 - 1) v_1 + \frac{k_1}{k_2} B_2 & & 
        \text{(\pref{lem:wlcont})} \\
        & ~ < ~ k_1 v_1 + B_1 \enspace. & & \text{($\frac{B_1}{B_2} \le
        \frac{k_1 + 1}{k_2}$)}
    \end{align*}

    On the other hand, the utility of bidding $p^*$ is 
    \begin{align*}
        U^{(k)}_1(B_1, B_2) & ~ = ~ \frac{1}{2}\left(W^{(k)}_1(B_1, B_2,
        p^*) + L^{(k)}_1(B_1, B_2, \price{k}{1})\right) \\
        & ~ \ge ~ k_1 v_1 + \frac{1}{2} B_1 + \frac{1}{2} \left(
        \phi(k_2 - 1, k_1 - 1) + \phi(k_2 - 1, k_1) \right) v_1 & &
        \text{(\pref{lem:7}, \pref{lem:8})} \\
        & ~ \ge ~ k_1 v_1 + \frac{1}{2} B_1 + \frac{1}{2^k} v_1 \\
        & ~ > ~ k_1 v_1 + B_1 \enspace.
    \end{align*}

    So bidding $p^*$ is strictly better.
\end{proof}

By \pref{lem:4a} and \pref{lem:4b}, we have shown that the canonical
outcome is indeed a subgame-perfect equilibrium of the $k$-item
sequential auction. 

\subsubsection{Two-Phase Winner Sequence} \label{sec:twophase}

In this section, we will present the inductive proof for the two-phase
winner sequence structure in the canonical outcome. Again, we will start
with a few technical lemmas.

\begin{proposition} \label{prop:defaultprice}
    Suppose we are not in the type I tie-breaking case of
    \pref{prop:alloctie}. In other words, there are $k_1, k_2 \in
    \Z_{\ge 0}$ such that $k_1 + k_2 = k$ and $\frac{k_1}{k_2 + 1} <
    \frac{B_1}{B_2} \le \frac{k_1 + 1}{k_2}$. Then, for $i = 1, 2$,
    agent $i$ gets $k_i$ items with average price at most
    $\frac{B_{-i}}{k_{-i} + 1}$ in the canonical outcome.
\end{proposition}

\begin{proof} 
    By symmetry, it suffices to prove it for $i = 1$. Let us consider
    the following strategy for agent $1$: keep bidding $p^* \eqdef
    \frac{B_2}{k_2 + 1}$ until agent $2$'s remaining budget becomes
    $p^*$, and then keep bidding $p^*+$. It is easy to verify this
    strategy guarantees winning at least $k_1$ items regardless of agent
    $2$'s strategy and paying $p^*$ per item. So in the canonical
    outcome, agent $1$ pays at most $p^*$ per item on
    average.
\end{proof}

\begin{lemma} \label{lem:10}
    Suppose the budgets do not fall into the tie-breaking cases of
    \pref{prop:alloctie}, then the budgets after the first round in the
    canonical outcome do not fall into the tie-breaking cases
    either.
\end{lemma}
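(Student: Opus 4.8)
The plan is to translate the statement ``the budgets after round one are Type I'' into a continuity statement that \pref{lem:3} and \pref{lem:wlcont} already control, so that no round-one price ever has to be computed explicitly. The crucial observation is the following dictionary, read off from \pref{lem:wlcont}: the losing utility $p \mapsto L^{(k)}_2(B_2,B_1,p) = U^{(k-1)}_2(B_2, B_1-p)$ is discontinuous in $p$ \emph{exactly} at those prices for which the post-round ratio $\frac{B_1-p}{B_2}$ equals a Type I ratio $\frac{k_1}{k_2}$ (with $k_1+k_2=k$) of the $(k-1)$-item game. Indeed $\frac{k_1}{k_2}=\frac{k_1}{(k_2-1)+1}$ with $k_1+(k_2-1)=k-1$ ranges over precisely the Type I dividers of \pref{prop:alloctie} at level $k-1$ as $(k_1,k_2)$ ranges over decompositions of $k$. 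Thus ``agent $1$ winning round one drives the subgame into Type I tie-breaking'' is \emph{equivalent} to ``$L^{(k)}_2$ is discontinuous at the price agent $1$ pays in round one.''

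First I would handle the generic case $\price{k}{1} > \price{k}{2}$ (the case $\price{k}{2} > \price{k}{1}$ being symmetric). Here the canonical outcome has agent $1$ win the first item at price $\price{k}{2}$, so the only subgame reached has budgets $(B_1-\price{k}{2}, B_2)$. But \pref{lem:3} asserts that $L^{(k)}_2$ is continuous at agent $2$'s \emph{own} critical price $\price{k}{2}$; by the dictionary above, $\frac{B_1-\price{k}{2}}{B_2}$ is therefore not a Type I ratio of the $(k-1)$-item game, which is exactly the claim.

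The remaining case is the tie $\price{k}{1}=\price{k}{2}=:p^*$, where the round-one winner may be either agent. By \pref{lem:5}, at $p^*$ either all four of $W^{(k)}_1,L^{(k)}_1,W^{(k)}_2,L^{(k)}_2$ are continuous or all four are discontinuous. If all four are discontinuous, then (as in the second case of \pref{lem:4b}) $p^* = \tfrac{B_1}{k_1+1} = \tfrac{B_2}{k_2}$, forcing $\frac{B_1}{B_2}=\frac{k_1+1}{k_2}$, which is a Type I ratio of the $k$-item game and contradicts the hypothesis of the lemma. Hence all four are continuous; in particular both $L^{(k)}_1$ and $L^{(k)}_2$ are continuous at $p^*$, so by the dictionary both possible post-round ratios $\frac{B_1-p^*}{B_2}$ (if agent $1$ wins) and $\frac{B_2-p^*}{B_1}$ (if agent $2$ wins) avoid the Type I ratios of the $(k-1)$-item game. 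Either way the post-round budgets are not Type I.

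The main thing to get right is the dictionary itself: matching the discontinuity loci of \pref{lem:wlcont} one-to-one with the Type I ratios of \pref{prop:alloctie} at level $k-1$, together with the boundary bookkeeping (e.g.\ requiring $k_2\ge 1$ whenever one writes $\frac{k_1}{k_2}$, and noting that $k_2=0$ produces no finite discontinuity). Once that correspondence is nailed down, the remainder is a short case analysis that outsources the real content to \pref{lem:3} and \pref{lem:5}, and crucially never requires evaluating the equilibrium price in round one --- which is what keeps the argument clean.
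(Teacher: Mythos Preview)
Your proof is correct but follows a genuinely different line from the paper's. The paper argues directly from the allocation: by \pref{prop:alloco}, a non--Type-I budget profile at level $k$ forces agent $1$ to get exactly $k_1$ items (an integer), whereas if the post-round-one budgets were Type I at level $k-1$, \pref{prop:alloctie} would make the subgame allocation genuinely random, so the total number of items agent $1$ gets would have non-integral expectation --- contradiction. That is a two-line argument once those two propositions are in hand, and it never touches the continuity machinery at all.

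Your route instead exploits the analytic structure: you identify the Type-I ratios at level $k-1$ with the discontinuity locus of $L^{(k)}_i$ via \pref{lem:wlcont}/$\putilcont{k-1}$, and then invoke \pref{lem:3} (in the strict case) and \pref{lem:5} (in the tie case) to rule out discontinuity at the realized round-one price. This is clean and has the virtue of making explicit \emph{why} the post-round budgets avoid Type I --- it is precisely the same obstruction that makes $L^{(k)}_i$ continuous at the loser's own critical price. Two small remarks: first, \pref{lem:3} as written in the paper is really only established when $\price{k}{1}\ne\price{k}{2}$ (the proof ends by deriving $\price{k}{1}=\price{k}{2}$ as the contradiction), but you correctly restrict your use of it to the strict case, so there is no circularity. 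Second, in the all-discontinuous tie subcase you appeal to the derivation in \pref{lem:4b} to pin down $p^*=\frac{B_1}{k_1+1}=\frac{B_2}{k_2}$; that step implicitly uses that the \emph{same} $(k_1,k_2)$ indexes both discontinuities, which is justified in \pref{lem:price} Case~4 via the value-range constraints, so it is fine to cite but worth flagging.

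Either argument is acceptable; the paper's is shorter and more self-contained, while yours makes the link between Type-I ratios and utility discontinuities explicit and reusable.
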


\begin{proof}
    We may assume w.l.o.g.~that there exists $k_1, k_2 \in \Z_{\ge 0}$
    such that $k_1 + k_2 = k$, and 
    $\frac{B_1}{k_1 + 1} < \frac{B_2}{k_2 + 1} < \frac{B_1}{k_1}$. 

    Note that $\frac{B_1}{k_1 + 1} < \frac{B_2}{k_2 + 1}$ implies
    $\frac{B_1}{k_1 + 1} < \frac{B_2}{k_2}$. So by \pref{prop:alloc} 
    agent $1$ gets $k_1$ items and agent $2$ gets $k_2$ items in the
    canonical outcome. Assume for contradiction that the equilibrium of
    the subgame after the first round do fall into the tie-breaking case
    of \pref{prop:alloctie}. Then, by \pref{prop:alloctie} the expected
    number of items agent $1$ gets will be a non-integral number for
    that in the type I tie-breaking case, there exists some $k'_1, k'_2
    \in \Z_{\ge 0}$, $k'_1 + k'_2 = 1$, such that agent $1$ gets at
    least $k'_1$, and agent $2$ gets at least $k'_2 - 1$, and both
    agents have some non-zero probability of winning the last item. 
    But \pref{prop:alloc} asserts that the number of items agent $1$
    gets shall be integral. So we have a contradiction.
\end{proof}

\begin{lemma} \label{lem:11}
    Suppose $k_1, k_2 \in \Z_{\ge 0}$ satisfy that $k_1 + k_2 = k$, 
    $\frac{k_1}{k_2 + 1} < \frac{B_1}{B_2} < \frac{k_1 + 1}{k_2}$, and
    $\frac{B_1}{B_2} \ne \frac{k_1 + 1}{k_2 + 1}$.
    Then, it must be that one of the agent, say, agent $i$, wins the
    first $k_i$ items and then the other agent wins the remaining items.
\end{lemma}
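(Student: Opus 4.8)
The plan is to argue by contradiction, leveraging the already-established fact $\peq{k}$ that the canonical outcome is a subgame-perfect equilibrium of the $k$-item auction, together with the inductive two-phase structure $\pallocseq{k-1}$ of the $(k-1)$-item subgame that remains after the first round. First I would record that the hypotheses (strict inequalities on both sides, and $\frac{B_1}{B_2}\ne\frac{k_1+1}{k_2+1}$) place us strictly inside Case~2 or Case~4 of \pref{prop:alloc} and away from the type~I boundary, so by \pref{lem:10} the budgets after the first round again avoid the type~I tie-breaking regime. Hence I may apply $\pallocseq{k-1}$ to the subgame played after round~$1$: one agent wins a contiguous prefix of its items and the other wins the complementary suffix. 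Concatenating the round-$1$ winner with this pattern, the full winner sequence can fail to be two-phase only in the interleaved shape ``$A\,B\cdots B\,A\cdots A$'', i.e.\ when the round-$1$ winner coincides with the agent taking the \emph{suffix} block of the subgame. By the symmetry between the agents I may assume this agent is agent~$1$: agent~$1$ wins round~$1$, agent~$2$ wins a nonempty middle block of $k_2\ge 1$ items, and agent~$1$ wins the remaining suffix. (If instead the round-$1$ winner is the prefix agent, or if $k_2=0$, the claimed two-phase structure already holds.)

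Next I would show that the first item is strictly cheaper than every later item. Since agent~$1$ wins round~$1$ in the canonical outcome, she pays agent~$2$'s critical price $\price{k}{2}$. Comparing agent~$1$'s two options — winning round~$1$ versus conceding it — both lead, via \pref{prop:defaultprice} and the subgame structure, to agent~$1$ obtaining exactly $k_1$ items, with the suffix items bought at a price equal to agent~$2$'s remaining budget; here the value-dominance $v_1\gg B_1$ together with the equilibrium property forces the two options to yield the same item count, so only the total payment distinguishes them. Because agent~$1$ does choose to win round~$1$, her round-$1$ payment $\price{k}{2}$ must not exceed the per-item suffix price, and combined with the inductive price monotonicity $\ppricemon{k-1}$ inside the subgame (prices are non-increasing, so the suffix prices are the smallest subgame prices) this forces $\price{k}{2}$ to lie strictly below every price charged in rounds $2,\dots,k$.

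Finally I would derive the contradiction by exhibiting a profitable deviation for agent~$2$, the middle-block agent, contradicting $\peq{k}$. The deviation has agent~$2$ win round~$1$ at some price $p'$ with $\price{k}{2}<p'<$ the subsequent prices, then bid so as to reproduce the original allocation up to and including the first round $j^\ast$ in which agent~$1$ wins an item, following the canonical outcome thereafter. Over these first $j^\ast$ rounds, agent~$1$ still gets exactly one item at price $\price{k}{2}$ and agent~$2$ gets the same $j^\ast-1$ items as before, but agent~$2$ now pays roughly $\price{k}{2}$ per item instead of the strictly larger original prices; hence agent~$2$ enters the continuation with the same partial allocation but strictly more budget, and by $\putilmon{k-1}$ her total utility strictly increases. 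This contradicts the optimality guaranteed by $\peq{k}$, so the interleaved pattern cannot occur and the winner sequence must be two-phase.

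The main obstacle is precisely this last step. One must carry out the full case split — whether there is a tie at the round-$1$ critical price ($\price{k}{1}=\price{k}{2}$ versus $\price{k}{1}>\price{k}{2}$), together with the several possible shapes of the subgame's two-phase pattern and the delicate type~II tie-breaking boundaries — and verify in each case that the constructed deviation is feasible, leaves the relevant prefix allocation unchanged, strictly reduces agent~$2$'s payments, and lands the continuation in a non-degenerate regime where $\putilmon{k-1}$ genuinely applies. The sketch in \pref{sec:twophase} treats only one representative case, and it is the bookkeeping across all of these cases that makes the argument technically heavy.
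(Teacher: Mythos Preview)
Your approach is essentially the paper's: assume the interleaved pattern $1,2^{k_2},1^{k_1-1}$, show the round-$1$ price $\price{k}{2}$ is small relative to the later prices, and construct a profitable deviation for agent~$2$ contradicting $\peq{k}$. Two points of divergence are worth noting, both of which you partially anticipate in your final paragraph.

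First, the paper anchors the argument not on agent~$1$'s win/lose comparison but on agent~$2$'s, via \pref{lem:price}: either $W^{(k)}_2$ is discontinuous at $\price{k}{2}$ (its Case~2, dispatched directly), or agent~$2$ is \emph{indifferent} there (Case~1). Indifference pins down agent~$2$'s final remaining budget independently of who wins round~$1$, which is what cleanly yields $\price{k}{2}\le r$ (the suffix price). Your route through agent~$1$'s preference needs, in addition, to know the shape of the subgame after agent~$2$ wins round~$1$ --- precisely the Case~1a/1b split the paper carries out --- so you have not actually saved any casework.

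Second, your claim that $\price{k}{2}$ is \emph{strictly} below every later price is too strong. In the paper's Case~1b one only gets $\price{k}{2}\le r\le q_{k_2}\le\cdots\le q_1$; when all of these are equalities the deviation argument fails, and the paper instead shows this forces all prices to equal $\frac{B_2}{k_2+1}$, which (after one more step) means we are in the Type~II-B tie-breaking regime where the allocation can be relabeled as two-phase without changing utilities. So the contradiction is not uniform: one branch terminates not in a profitable deviation but in recognizing a degenerate case already covered by the lemma's conclusion. Your closing caveat about ``delicate type~II tie-breaking boundaries'' is exactly this, but it is a genuine branch of the proof, not merely bookkeeping.
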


\begin{proof}
    By \pref{prop:alloc}, agent $1$ gets $k_1$ items and agent $2$ gets
    $k_2$ items in the canonical outcome. Suppose for contradiction
    that the lemma does not hold. Then, by \pref{lem:10}, and our
    assumption that \pref{prop:allocseq} and \pref{prop:pricemon}
    hold for the canonical outcome of $k-1$ item, we can assume
    w.l.o.g.~that agent $1$ wins the first item with price
    $\price{k}{2}$ in the equilibrium; and then agent $2$ wins the next
    $k_2 \ge 1$ items with a non-increasing price sequence $q_1 \ge
    \dots \ge q_{k_2}$ such that the average price is at least
    $\frac{B_2}{k_2 + 1}$; and finally agent $1$ wins the remaining $k_1
    - 1 \ge 1$ items with prices equal $r = B_2 - \sum_{j = 1}^{k_2} q_j
    \le q_{k_2}$, the remaining budget of agent $2$. 

    Let us consider the deviation in which agent $2$ wins the first item
    at price $\price{k}{2}$. By \pref{lem:price}, either agent $2$ is
    indifferent between winning and losing the first item at price
    $\price{k}{2}$, or $W^{(k)}_2(B_2, B_1, p)$ is discontinuous at $p =
    \price{k}{2}$.

    \paragraph{Case 1: Agent $2$ is indifferent}
    In this case, agent $2$ gets $k_2$ items as well in the deviation.
    So the subgame does not fall into the tie-breaking case of
    \pref{prop:alloctie}. Our proof strategy is to first show that the
    price of the first item must be very small. In particular, it is
    smaller than the average price that agent $2$ pays in the canonical
    outcome. Then, we will derive a contradiction by concluding
    agent $2$ has a profitable deviation by wining the first item since
    it is so cheap. We can further divide the situation into three
    cases depending on the allocation sequence after the first round.

    \paragraph{Case 1a: Agent $1$ gets items first after the first round}
    More precisely, after the first round of this deviation, agent $1$
    will win the first $k_1$ items with average price at least
    $\frac{B_1}{k_1 + 1}$ and then agent $2$ wins the remaining $k_2 -
    1$ item with prices equal agent $1$'s remaining budget. 
        
    By \pref{prop:allocseq} of $k-1$ items and our assumed allocation
    sequence after agent $1$ gets the first item, such allocation
    sequence could hold only if the remaining budgets $B_1 -
    \price{k}{2}$ and $B_2$ after the first round satisfy $\frac{B_1 -
    \price{k}{2}}{B_2} > \frac{k_1}{k_2 + 1}$. So we get our first upper
    bound on the price of the first item: $\price{k}{2} < B_1 -
    \frac{k_1}{k_2 + 1} B_2$.
    Via similar reasoning, by \pref{prop:allocseq} and the assumed
    allocation sequence after agent $2$ wins the first time, we get that
    the price of the first item is also upper bounded by $\price{k}{2} <
    B_2 - \frac{k_2}{k_1 + 1} B_1$.
    Combining these two upper bounds we get $\price{k}{2} <
    \frac{B_2}{k_2 + 1}$. 
    
    In this case, we
    argue agent $2$ has a profitable deviation because she could
    have won the first item by bidding $p' \in (\price{k}{2},
    \frac{B_2}{k_2 + 1})$; and then keeps bidding $\price{k}{2}$ until
    agent $1$ wins an item; and finally follows the canonical
    outcome strategy thereafter.
    We let $j^*$ denote the first item that agent $1$ wins in this
    (profitable) deviation. On the one hand, agent $2$ gets $j^* - 1$
    items after $j^*$ rounds in the equilibrium and her remaining
    budget is $B_2 - \sum_{j=1}^{j^* - 1} q_j$ while agent $1$'s
    remaining budget is $B_1 - \price{k}{2}$. On the other hand, agent
    $2$ gets $j^* - 1$ as well after $j^*$ rounds of the deviation and
    agent $1$'s remaining budget is also $B_1 - \price{k}{2}$.
    However, agent $2$'s remaining budget after $j^*$ rounds becomes
    $B_2 - p' - (j^* - 2) \price{k}{2}$. Recall that $q_1
    \ge \dots \ge q_{k_2}$ and $\frac{1}{k_2} \sum_{j = 1}^{k_2} q_j \ge
    \frac{B_2}{k_2 + 1}$. So we have $\sum_{j = 1}^{j^* - 1} q_j \ge
    (j^* - 1) \frac{B_2}{k_2 + 1} > p' + (j^* - 2) \price{k}{2}$. In
    other words, agent $2$'s remaining budget in the deviation is
    strictly larger than that in the equilibrium. By the strict
    monotonicity of $U^{(k - j^*)}_2$ in agent $2$'s budget, agent
    $2$ is strictly better off in the deviation. So we have a
    contradiction.

    \paragraph{Case 1b: Agent $2$ gets items first after the first round}
    In this case, agent $2$ wins the first $k_2 - 1$ items with average
    price at least $\frac{B_2 - \price{k}{2}}{k_2}$ after the first
    round of the deviation, and then agent $1$ wins the remaining items
    with prices equal agent $2$'s remaining budget. 
    Since agent $2$ is indifferent between winning and losing the first
    item, her remaining budget in the two cases shall be the same. So
    agent $1$ wins the remaining item with prices equal $r$ in the
    deviation. Further, agent $1$ pays a total price $\price{k}{2} +
    (k_1 - 1) r$ in the equilibrium and she weakly prefer winning the
    first item than losing it. So we have 
    $$\price{k}{2} \le r \le q_{k_2} \le \dots \le q_1 \enspace.$$
     
    If at least one of the inequalities is strict, then similar to case
    1a, agent $2$ has a profitable deviation by winning the first item
    at price $p' \in (\price{k}{2}, q_1)$, and then bidding
    $\price{k}{2}$ until agent $1$ wins an item, and finally following
    the equilibrium strategy.

    If all the above inequalities hold with equality, then we conclude
    that the prices in the sequential auction are all the same. Let
    $p^*$ denote this fixed price in the auction. Note that $B_2 =
    \sum_{i=1}^{k_2} q_i + r = (k_2 + 1) p^*$. We have $p^* \eqdef
    \frac{B_2}{k_2 + 1}$. Note that by \pref{prop:defaultprice} agent
    $2$ could guarantee getting $k_2$ items paying $\frac{B_1}{k_1 + 1}$
    per item. So
    it must be the case that $\frac{B_1}{k_1 + 1} \ge \frac{B_2}{k_2 +
    1} = p^*$. Therefore, we have $\frac{B_1 - p^*}{k_1} \ge p^*$. By
    \pref{prop:allocseq} and \pref{prop:defaultprice}, if agent $1$
    loses the first item at price $p^*$, then in the subgame it will be
    the case that agent $2$ gets the next $k_2 - 1$ items with average
    price at least $\frac{B_2}{k_2 + 1} = p^*$ and then agent $1$ gets
    the remaining $k_1$ item paying agent $2$'s remaining budget which
    is at most $\frac{B_2}{k_2 + 1} = p^*$. But agent $1$ shall weakly
    prefer winning than losing the first item at $p^*$. So we conclude
    in the deviation where agent $1$ loses the first item at $p^*$, all
    the prices are exactly $p^*$. Thus, agent $1$ has the same utility
    for winning and losing the first item at $p^*$ and $\price{k}{1} =
    p^*$. Hence, we are in the Type II-B Tie-breaking case of
    \pref{prop:allocseq}. We could assume agent $2$ gets the first $k_2$
    items and then agent $1$ gets the remaining $k_1$ items without
    changing the utilities in the equilibrium.
   
    \paragraph{Case 2: $W^{(k)}_2(B_2, B_1, p)$ is discontinuous at $p =
    \price{k}{2}$} 
    
    In this case, let us consider a deviation where
    agent $2$ wins the first item with price $\price{k}{2} + \epsilon$
    and follows the canonical outcome thereafter. By
    \pref{prop:wlmon} of the $k-1$ item case, the utility of
    agent $2$ in this deviation approaches $k_2 v_2 + \frac{k_2}{k_1}
    B_1$ as $\epsilon$ goes to zero, while the utility in the
    equilibrium is $k_2 v_2 + r < k_2 v_2 + \frac{1}{k_2 + 1} B_2$. Yet
    by our assumption $\frac{k_2}{k_1} B_1 \ge \frac{1}{k_1} B_1 >
    \frac{1}{k_2 + 1} B_2$. So the deviation we considered is
    profitable.

    \bigskip

    Summing up all these cases, we either derive contradiction or
    conclude we are in fact in the Type II-A Tie-breaking case.
    Thus, we have proved the lemma.
\end{proof}

Now we are ready to present to proof of \pref{prop:allocseq}.

\begin{proof}[of \pref{prop:allocseq}]
    Suppose $k_1, k_2 \in \Z_{\ge 0}$ satisfy that $k_1 + k_2 = k$ and
    $\frac{k_i}{k_{-i} + 1} < \frac{B_i}{B_{-i}} < \frac{k_i + 1}{k_{-i}
    + 1}$. For the sake of presentation, let us assume w.l.o.g.~that $i
    = 1$.  By \pref{lem:11}, it must be the case that some agent $j$
    wins the first $k_j$ items and then the other agent wins the rest of
    the items. It remains to prove that $j = 1$. Suppose for
    contradiction that $j = 2$. By \pref{prop:defaultprice}, agent $2$
    could guarantee getting the items with prices equal $\frac{B_1}{k_1
    + 1}$. So the remaining budget of agent $2$ is at least $B_2 - k_2
    \frac{B_1}{k_1 + 1} > B_2 - \frac{k_2}{k_2 + 1} B_2 = \frac{B_2}{k_2
    + 1}$. But now agent $i$ must be paying an average price at least
    $\frac{B_2}{k_2 + 1}$, contradicting \pref{lem:11}. 

    Now it remains to analyze the case of $\frac{B_1}{B_2} = \frac{k_1 +
    1}{k_2 + 1}$. Since inductively we have assume the Type II-B
    Tie-breaking of \pref{prop:allocseq} holds in the $k-1$ item case,
    it suffices to show that the critical prices in the first round are
    $\price{k}{1} = \price{k}{2} = p^* \eqdef \frac{B_1}{k_1 + 1} =
    \frac{B_2}{k_2 + 1}$. Further, by symmetry it suffices to show
    $\price{k}{1} = p^*$.
    
    For any price of the first item $p$ that is less than $p^*$, if agent
    $1$ wins the first item at $p$, then in the induced subgame agent
    $1$ has budget strictly greater than $B_1 - p^* = \frac{k_1}{k_2 +
    1} B_2$. So by \pref{prop:defaultprice}, agent $1$'s budget at
    the end shall be strictly greater than $B_1 - p^* - (k_1 - 1)
    \frac{B_2}{k_2 + 1} > \frac{B_1}{k_1 + 1}$. If agent $1$ loses
    the item at $p^*$, on the other hand, then in the induced subgame
    agent $2$'s budget
    will be strictly greater than $B_2 - p^* = \frac{k_2}{k_1 + 1}
    B_1$ according to \pref{prop:defaultprice}. So in this case
    agent $1$, at best, could win $k_1$ items with average price at
    least than $\frac{B_1}{k_1 + 1}$ and have remaining budget at most
    $\frac{B_1}{k_1 + 1}$ in the end. Therefore, we conclude that
    for any price that is strictly less than $p^*$, agent $1$ will
    strictly prefer winning the first item than losing it. 
    Similarly, we could show that for any price that is strictly greater
    than $p^*$, agent $1$ would strictly prefers losing the first item.
    In sum, we have $\price{k}{1} = \price{k}{2} = p^*$.
\end{proof}

\subsubsection{Weakly Declining Prices} \label{sec:pricemon}

In this section, we will prove the prices in the sequential auction is
non-increasing in the number of rounds.
If we are in the tie-breaking cases of \pref{prop:alloctie} or
\pref{prop:allocseq}, then clearly the price sequence is
non-increasing. So it remains to discuss the case without ties.

Let us assume w.l.o.g.~that $\frac{B_1}{k_1 + 1} < \frac{B_2}{k_2 +
1} < \frac{B_1}{k_1}$ for $k_1, k_2 \in \Z_{\ge 0}$ s.t.~$k_1 +
k_2 = k$. Then, by \pref{prop:allocseq} and \pref{prop:defaultprice}
agent $1$ will buy the first $k_1$ items with average price at least
$\frac{B_1}{k_1 + 1}$ and then agent $2$ will win the rest of the
items with price equals agent one's remaining budget, which is at
most $\frac{B_1}{k_1 + 1}$. If the prices are all the same, then it is
clearly non-increasing. So let us further assume the 
prices are not all the same. So the average price agent $1$
pays is strictly greater than $\frac{B_1}{k_1 + 1}$.

If $k_1 = 1$ then the price sequence is clearly non-increasing.
Next, we will assume $k_1 \ge 2$ and agent $1$ wins the first $k_1$
items with prices $q_1, \dots, q_{k_1}$. Further, we will assume for
contradiction that $q_1 < q_2$. By $\ppricemon{k}$, we have $q_2 \ge
\dots \ge q_{k_1}$.

Note that $q_1$ equals the critical price of agent $2$. So by
\pref{lem:price}, either $W^{(k)}_2(B_2, B_1, p)$ is discontinuous
at $p = q_1$, or agent $2$ is indifferent between winning the first
item and losing it at price $q_1$.


In the first case, we have $q_1 = B_2 - \frac{k_2}{k_1} B_1$. Note
that if agent $2$ deviates by winning the first item with price $q_1
+ \epsilon$, then her utility approaches $k_2 v_2 + B_2 - q_1$ as
$\epsilon$ goes to zero. So in the equilibrium, the total price that
agent $2$ pays is at most $q_1$. Now let us consider the second
round, in which agent $1$ has remaining budget $\frac{k_1 +
k_2}{k_1} B_1 - B_2$. Since $q_2 > q_1$ and that agent $2$ pays at
most $q_1$ in total in the equilibrium, we know that agent $2$ could
not be indifferent between winning and losing the second item at
price $q_2$. By \pref{lem:price}, we get that $q_2$ is a discontinuous
point of $W^{(k-1)}_2(\frac{k_1 + k_2}{k_1} B_1 - B_2, B_2)$. Thus,
$q_2 = B_2 - \frac{k_2}{k_1 - 1} \left(\frac{k_1 + k_2}{k_1} B_1 -
B_2\right)$. By our assumption that $q_2 > q_1$, we have $B_2 -
\frac{k_2}{k_1} B_1 < B_2 - \frac{k_2}{k_1 - 1} \left(\frac{k_1 +
k_2}{k_1} B_1 - B_2\right)$, simplifying which we get
$\frac{B_2}{k_2 + 1} > \frac{B_1}{k_1}$. So we have a contradiction
to \pref{prop:alloc}.


Let us move on to the next case that agent $2$ is indifferent
between winning the first item and losing it at price $q_1$. We let
$U_1$ and $U_2$ denote the utilities in the equilibrium of agent $1$
and agent $2$ respectively. We will consider three possible
deviation from the equilibrium path that are summarized in
\pref{tab:deviation}.

The first deviation is when agent $2$ wins the
first item at price $q_1$ and both agents follows the unique
equilibrium of the subgame thereafter. 
In this deviation, the allocation sequence after the first round
must by agent $1$ wins the next $k_1$ items with a non-increasing
price sequence and then agent $2$ wins the remaining $k_2 - 1$ items
paying agent $1$'s remaining budget. Otherwise, by
\pref{prop:allocseq} and \pref{prop:defaultprice}, agent $1$ pays an
average price that is at most $\frac{B_1}{k_1 + 1}$. So agent $1$ is
strictly better off by losing the first item, contradicting our
assumption. We will let $q'_2 \ge \dots \ge q'_{k_1+1}$ denote the
prices at which agent $1$ wins the items. Let $U'_1$ and $U'_2$
denote the utilities in this deviation of agent $1$ and agent $2$
respectively. By our assumption, $U_2 = U'_2$ and $U_1 \ge U'_1$. 

The second deviation is when agent $1$ wins the first item at
price $q_1$ as in the equilibrium, but agent $2$ wins the
second item at price $q_2$, and then both agents follows the unique
equilibrium of the subgame thereafter. Similar to the previous
reasonings, in this subgame it must be
the case that agent $1$ wins the next $k_1 - 1$ items at some
non-increasing price sequence and then agent $2$ wins the remaining
$k_2 - 1$ items paying agent $1$'s remaining budget. We will let
$q''_3 \ge \dots \ge q''_{k_1+2}$ denote the prices at which agent $1$
wins the next $k_1 - 1$ items starting from round $3$, and let
$U''_1$ and $U''_2$ denote the utilities in this deviation of agent
$1$ and agent $2$ respectively. By \pref{lem:price}, both agents
shall weakly prefer winning the second item at price $q_2$ after
losing the first one at $q_1$. So we have $U''_2 \ge U_2$ and $U''_1
\le U_1$.

\begin{table}
    \caption{Summary of the allocation sequences in the equilibrium
    path and in the deviations considered in the proof.}
    \label{tab:deviation}
    \begin{tabular}{cccccc}
        \hline \\ [-1.5ex]
        & Round $1$ & Round $2$ & $\cdots$ & Round $j^*$ & $\cdots$
        \\ [1ex] 
        \hline \\ [-1.5ex]
        Equilibrium & $1$ wins at $q_1$ & $1$ wins at $q_2$ &
        $\cdots$ & $1$ wins at $q_{j^*}$ & $\cdots$ \\ [1ex]
        Deviation $1$ & $2$ wins at $q_1$ & $1$ wins at $q'_2$ &
        $\cdots$ & $1$ wins at $q'_{j^*}$ & $\cdots$ \\ [1ex]
        Deviation $2$ & $1$ wins at $q_1$ & $2$ wins at $q_2$ &
        $\cdots$ & $1$ wins at $q''_{j^*}$ & $\cdots$ \\ [1ex]
        Profitable Deviation & $1$ wins at $q_1$ & $1$ wins at $q_2$
        & $\cdots$ & $2$ wins at $q_1$ & $\cdots$ \\ [1ex]
        \hline
    \end{tabular}
\end{table}

By comparing the utilities of agent $2$ in the unique
equilibrium and in these two deviations, we have $U''_2 \ge U_2 =
U'_2$. Note that $U''_2 = v_2 + U^{(k-2)}_2(B_1 - q_1, B_2 - q_2)$
and $U'_2 = v_2 + U^{(k-2)}_2(B_1 - q'_2, B_2 - q_1)$. So we have
$U^{(k-2)}_2(B_1 - q_1, B_2 - q_2) \ge U^{(k-2)}_2(B_1 - q'_2, B_2 -
q_1)$. Further, we have $B_2 - q_2 < B_2 - q_1$ due to our
assumption that $q_1 < q_2$. So by the monotonicity of
$U^{(k-2)}_2$, we must have $B_1 - q_1 < B_1 - q'_2$, and thus $q_1
> q'_2$. 

Further, by $U_1 \ge U'_1$, and by $U_1 = k_1 v_1 + B_1 -
\sum_{i=1}^{k_1} q_i$ and $U'_1 = k_1 v_1 + B_1 = \sum_{i=2}^{k_1+1}
q'_i$, we have $\sum_{i=1}^{k_1} q_i \le \sum_{i=2}^{k_1+1} q'_i$.
Since $q_1 > q'_2 \ge \dots \ge q'_{k_1+1}$, we conclude that
$\sum_{i=1}^{k_1} q_i < k_1 q_1$. Thus, there exist $3 \le j
\le k_1$ such that $q_j < q_1$. Let $j^*$ denote the smallest such
$j$. 

Now we conclude that agent $2$ has a profitable deviation
because she could let agent $1$ wins the first $j^* - 1$ items at
price $q_1, \dots q_{j^*-1}$, and then wins the next item by bidding
$q_1 - \epsilon > q_{j^*}$. 
The utility of agent $2$ in this deviation will be $U^{(k -
j^*)}_2(B_1 - \sum_{i=1}^{j^*-1} q_i, B_2 - q_1 + \epsilon) > U^{(k
- j^*)}_2(B_1 - \sum_{i=2}^{j^*} q'_i, B_2 - q_1) = U'_2$ due to the
fact that $\sum_{i=1}^{j^*-1} q_i > (j^* - 1) q_1 > \sum_{i=2}^{j^*}
q'_i$.  Further, $U'_2 = U_2$ by our assumption. So this is a
profitable deviation for agent $2$ and we have a contradiction.  

\subsubsection{Monotonicity and Continuity of Utility}
\label{sec:util}

Finally, let us analyze the monotonicity and continuity of $U^{(k)}_i$
with respect to the budgets and prove \pref{prop:utilmon} and
\pref{prop:utilcont}.

\bigskip


    Let us first consider the case when $\frac{B^*_1}{B^*_2} \ne
    \frac{k_1 + 1}{k_2 + 1}$. We will assume w.l.o.g.~that
    $\frac{k_1}{k_2 + 1} < \frac{B^*_1}{B^*_2} < \frac{k_1 + 1}{k_2 +
    1}$. Then, there exists a neighborhood of $(B^*_1, B^*_2)$ such that
    for any budget profile $(B_1, B_2)$ in the neighborhood we have
    $\frac{k_1}{k_2 + 1} < \frac{B_1}{B_2} < \frac{k_1 + 1}{k_2 + 1}$.
    In other words, it is the case that agent $1$ gets the first $k_1$
    items and then agent $2$ gets the remaining items in the
    equilibrium. 

    We will consider the monotonicity and continuity $U^{(k)}_1$
    and $U^{(k)}_2$ as $B_1$ increases. By \pref{prop:allocseq},
    agent $1$ will get items first in the canonical outcome and thus
    agent $1$ will get the first item paying agent $2$'s critical price
    $\price{k}{2}$. Further, consider any sufficiently small $\epsilon
    > 0$ such that $\frac{k_1}{k_2 + 1} < \frac{B^*_1 + \epsilon}{B^*_2}
    < \frac{k_1 + 1}{k_2 + 1}$. We let $p'_2(\epsilon)$ denote the critical
    price of agent $2$ when the budgets are $B^*_1 + \epsilon$ and
    $B^*_2$. We need the following lemmas in our argument.

    \begin{lemma} \label{lem:12}
        $L^{(k)}_2(B^*_2, B^*_1, p)$ is continuous at $p = \price{k}{2}$.
    \end{lemma}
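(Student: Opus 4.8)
The plan is to argue by contradiction, reducing everything to the classification of critical prices already established in \pref{lem:price} and \pref{lem:wlcont}. Note first that \pref{lem:12} is literally the instance $i=2$ of the general continuity statement \pref{lem:3}, so it follows immediately from that lemma; nonetheless it is cleaner to give a self-contained argument here that exploits the budget configuration $\frac{k_1}{k_2+1} < \frac{B^*_1}{B^*_2} < \frac{k_1+1}{k_2+1}$ fixed in this subsection, since that configuration rules out the only bad case outright. So suppose for contradiction that $L^{(k)}_2(B^*_2, B^*_1, p)$ is discontinuous at $p = \price{k}{2}$.

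The first step is to upgrade this to a statement about \emph{both} winning and losing utilities. Applying \pref{lem:price} to agent $2$, the scenario in which $L^{(k)}_2$ is discontinuous at its own critical price while $W^{(k)}_2$ is continuous there is exactly the case shown to be impossible in that lemma (its Case~3). Since we have assumed $L^{(k)}_2$ is discontinuous at $\price{k}{2}$, we are neither in the ``both continuous'' case nor in the ``$L$ continuous, $W$ discontinuous'' case, so we must be in the case where both $W^{(k)}_2$ and $L^{(k)}_2$ are discontinuous at $\price{k}{2}$.

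The second step is to locate this common discontinuity via \pref{lem:wlcont}. For agent $2$ that lemma says $L^{(k)}_2$ can only be discontinuous at a price of the form $B^*_1 - \frac{k'_1}{k'_2} B^*_2$, and $W^{(k)}_2$ only at a price of the form $B^*_2 - \frac{k'_2}{k'_1} B^*_1$, with $k'_1 + k'_2 = k$. Equating the two expressions and simplifying, by the same computation as Case~4 of \pref{lem:price} (with the roles of the two agents exchanged), forces $\price{k}{2} = \frac{B^*_1}{k'_1} = \frac{B^*_2}{k'_2 + 1}$, and hence $\frac{B^*_1}{B^*_2} = \frac{k'_1}{k'_2 + 1}$ for some $k'_1, k'_2 \in \Z_{\ge 0}$ with $k'_1 + k'_2 = k$.

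Finally I would derive the contradiction from the standing assumption on the budgets. The transition ratios $\frac{j}{k - j + 1}$, for $j = 0, 1, \dots, k$, are strictly increasing, and $\frac{B^*_1}{B^*_2}$ was assumed to lie strictly inside $\left(\frac{k_1}{k_2 + 1}, \frac{k_1 + 1}{k_2 + 1}\right) \subset \left(\frac{k_1}{k_2 + 1}, \frac{k_1 + 1}{k_2}\right)$, i.e.\ strictly between the two \emph{consecutive} transition ratios $\frac{k_1}{k_2 + 1}$ and $\frac{k_1 + 1}{k_2}$. Consequently $\frac{B^*_1}{B^*_2}$ cannot equal any ratio of the form $\frac{k'_1}{k'_2 + 1}$, contradicting the previous step. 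Hence $L^{(k)}_2(B^*_2, B^*_1, p)$ is continuous at $p = \price{k}{2}$, as claimed. The only step requiring genuine care is the index bookkeeping when equating the two discontinuity loci of \pref{lem:wlcont} and matching them against the transition ratio $\frac{k'_1}{k'_2+1}$; once that is done, the contradiction with the open interval is immediate, and everything else is a direct appeal to facts already proved for the $k$-item case.
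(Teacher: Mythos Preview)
Your proof is correct and follows essentially the same route as the paper's own argument: assume a discontinuity of $L^{(k)}_2$ at $\price{k}{2}$, invoke \pref{lem:price} to force $W^{(k)}_2$ to be discontinuous there too, and then use \pref{lem:wlcont} to conclude that $\frac{B^*_1}{B^*_2}$ equals a transition ratio, contradicting the standing hypothesis $\frac{k_1}{k_2+1} < \frac{B^*_1}{B^*_2} < \frac{k_1+1}{k_2+1}$. The only cosmetic difference is that the paper directly identifies the forced ratio as $\frac{k_1}{k_2+1}$ (using the section's specific indices), whereas you keep a generic pair $(k'_1,k'_2)$ and rule out \emph{all} transition ratios via the ``strictly between consecutive ratios'' observation; your version is marginally more explicit about the index bookkeeping but otherwise identical.
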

    
    \begin{proof}
        Suppose not. Then, $\price{k}{2}$ is a discontinuous point of
        $L^{(k)}_2(B^*_2, B^*_1, p)$. By \pref{lem:price}, we know
        $W^{(k)}_2(B^*_2, B^*_1, p)$ must be discontinuous at
        $\price{k}{2}$ as well. Thus, it must be the case that
        $\price{k}{2} = B^*_1 - \frac{k_1 - 1}{k_2 + 1} B^*_2$ and
        $\price{k}{2} = B^*_2 - \frac{k_2}{k_1} B^*_1$, which implies
        $\price{k}{2} = \frac{1}{k_2 + 1} B^*_2 = \frac{1}{k_1}
        B^*_1$. So we have $\frac{B^*_1}{B^*_2} = \frac{k_1}{k_2 + 1}$,
        contradicting the assumption in the lemma. 
    \end{proof}

    Given the continuity of $L^{(k)}_2$ at $p = \price{k}{2}$, we
    can upper bound how much the price of the first item increases as
    $B_1$ increases as follows.

    \begin{lemma} \label{lem:13}
        $p'_2(\epsilon) < \price{k}{2} + \epsilon$ for sufficiently
        small $\epsilon > 0$.
    \end{lemma}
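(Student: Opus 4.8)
The plan is to reduce the claim to a single strict inequality between agent $2$'s winning and losing utilities, evaluated at the price $\price{k}{2}+\epsilon$ under the increased budget $B^*_1+\epsilon$. By \pref{def:critical} the perturbed critical price $p'_2(\epsilon)$ is characterized by $W^{(k)}_2 > L^{(k)}_2$ below it and $W^{(k)}_2 < L^{(k)}_2$ above it, and by \pref{lem:price} agent $2$ weakly prefers winning at $p'_2(\epsilon)$, so $W^{(k)}_2 \ge L^{(k)}_2$ there. Hence it suffices to establish
$$W^{(k)}_2(B^*_2, B^*_1+\epsilon, \price{k}{2}+\epsilon) < L^{(k)}_2(B^*_2, B^*_1+\epsilon, \price{k}{2}+\epsilon),$$
since this places $\price{k}{2}+\epsilon$ strictly in the region where losing is preferred, forcing $p'_2(\epsilon) < \price{k}{2}+\epsilon$. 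The appeal of this route is that it works uniformly: I will not need to split into the ``agent $2$ indifferent'' and ``$W^{(k)}_2$ discontinuous at $\price{k}{2}$'' cases.

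First I would rewrite the losing utility. By the definition of $L^{(k)}_2$ one has $L^{(k)}_2(B^*_2, B^*_1+\epsilon, \price{k}{2}+\epsilon) = U^{(k-1)}_2(B^*_2, (B^*_1+\epsilon)-(\price{k}{2}+\epsilon)) = U^{(k-1)}_2(B^*_2, B^*_1-\price{k}{2}) = L^{(k)}_2(B^*_2, B^*_1, \price{k}{2})$, because the extra $\epsilon$ in agent $1$'s budget is exactly cancelled by the extra $\epsilon$ agent $1$ pays, so the losing subgame is literally the same in both scenarios. Thus the right-hand target is just the \emph{original} losing utility at the critical price. Next I would bound this quantity from below: since $\price{k}{2}$ is agent $2$'s critical price at $(B^*_2, B^*_1)$, we have $W^{(k)}_2(B^*_2, B^*_1, p) < L^{(k)}_2(B^*_2, B^*_1, p)$ for all $p > \price{k}{2}$; letting $p \to \price{k}{2}^+$ (the one-sided limit of $W^{(k)}_2$ exists because it is monotone in $p$ by \pref{lem:wlmon}) and invoking the continuity of $L^{(k)}_2$ at $\price{k}{2}$ granted by \pref{lem:12}, I obtain $\lim_{p\to\price{k}{2}^+} W^{(k)}_2(B^*_2, B^*_1, p) \le L^{(k)}_2(B^*_2, B^*_1, \price{k}{2})$. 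This is precisely the step where \pref{lem:12} does the real work.

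Finally I would bound $W^{(k)}_2(B^*_2, B^*_1+\epsilon, \price{k}{2}+\epsilon)$ strictly above by that same right-limit. Writing $W^{(k)}_2(B^*_2, B^*_1+\epsilon, \price{k}{2}+\epsilon) = v_2 + U^{(k-1)}_2(B^*_2 - \price{k}{2}-\epsilon,\, B^*_1+\epsilon)$ and $\lim_{p\to\price{k}{2}^+} W^{(k)}_2(B^*_2, B^*_1, p) = v_2 + \lim_{B\to(B^*_2-\price{k}{2})^-} U^{(k-1)}_2(B, B^*_1)$, I compare the two $U^{(k-1)}_2$ terms using the inductive monotonicity $\putilmon{k-1}$: raising agent $1$'s budget from $B^*_1$ to $B^*_1+\epsilon$ only weakly decreases $U^{(k-1)}_2$, while lowering agent $2$'s own budget to $B^*_2-\price{k}{2}-\epsilon$ (strictly below $B^*_2-\price{k}{2}$) strictly decreases it, so comparing against an intermediate budget such as $B^*_2-\price{k}{2}-\epsilon/2$ gives $U^{(k-1)}_2(B^*_2-\price{k}{2}-\epsilon, B^*_1+\epsilon) < \lim_{B\to(B^*_2-\price{k}{2})^-} U^{(k-1)}_2(B, B^*_1)$. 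Chaining the three steps yields $W^{(k)}_2(B^*_2, B^*_1+\epsilon, \price{k}{2}+\epsilon) < \lim_{p\to\price{k}{2}^+} W^{(k)}_2(B^*_2, B^*_1, p) \le L^{(k)}_2(B^*_2, B^*_1, \price{k}{2}) = L^{(k)}_2(B^*_2, B^*_1+\epsilon, \price{k}{2}+\epsilon)$, which is the desired strict inequality.

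The main obstacle I anticipate is securing strictness in the last step: the gap must come from the \emph{strict} monotonicity of $U^{(k-1)}_2$ in agent $2$'s own budget, rather than from its merely non-increasing dependence on agent $1$'s budget, which could be flat. A secondary point worth flagging is that the right-limit of $W^{(k)}_2$ can fall strictly below $L^{(k)}_2(\price{k}{2})$ exactly when $W^{(k)}_2$ jumps downward at $\price{k}{2}$; the chain above is written with inequalities that tolerate such a jump, which is what lets the discontinuous case be absorbed without a separate argument. Assembling everything gives $p'_2(\epsilon) < \price{k}{2}+\epsilon$, as claimed.
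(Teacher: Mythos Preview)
Your proposal is correct and follows essentially the same approach as the paper: both establish the chain $W^{(k)}_2(B^*_2, B^*_1+\epsilon, \price{k}{2}+\epsilon) < \lim_{p\to\price{k}{2}^+} W^{(k)}_2(B^*_2, B^*_1, p) \le L^{(k)}_2(B^*_2, B^*_1, \price{k}{2}) = L^{(k)}_2(B^*_2, B^*_1+\epsilon, \price{k}{2}+\epsilon)$, with the first strict inequality coming from monotonicity, the second from the definition of the critical price, and continuity of $L^{(k)}_2$ from \pref{lem:12}. The only cosmetic difference is that the paper obtains the first strict inequality directly from \pref{lem:wlmon} on $W^{(k)}_2$, whereas you unpack $W^{(k)}_2$ into $v_2 + U^{(k-1)}_2$ and invoke $\putilmon{k-1}$; these are equivalent since \pref{lem:wlmon} is itself derived from $\putilmon{k-1}$.
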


    \begin{proof}
        By the monotonicity of $W^{(k)}_2$, for nay $p <
        \price{k}{2} + \epsilon$ we have $W^{(k)}_2(B^*_2, B^*_1 +
        \epsilon, \price{k}{2} + \epsilon) < W^{(k)}_2(B^*_2, B^*_1,
        p)$. So
        \begin{align*}
            W^{(k)}_2(B^*_2, B^*_1 + \epsilon, \price{k}{2} +
            \epsilon) & ~<~ \lim_{p \rightarrow \price{k}{2}+}
            W^{(k)}_2(B^*_2, B^*_1, p) \\
            & ~\le~ \lim_{p \rightarrow \price{k}{2}+}
            L^{(k)}_2(B^*_2, B^*_1, p)
            & \text{(By $\pprice{k}$)} \\
            & ~=~ L^{(k)}_2(B^*_2, B^*_1, \price{k}{2}) 
            & \text{(By \pref{lem:12})} \\
            & ~=~ L^{(k)}_2(B^*_2 + \epsilon, B^*_1, \price{k}{2} +
            \epsilon) \enspace. 
        \end{align*}

        So agent $2$ would prefer losing than winning the first item at
        price $\price{k}{2} + \epsilon$ when the budgets are $B^*_1 +
        \epsilon$ and $B^*_2$. So $p'_2(\epsilon) < \price{k}{2} +
        \epsilon$.
    \end{proof}

    Given \pref{lem:13}, the monotonicity of the agents' utilities in
    $B_1$ follows straightforwardly.

    \begin{lemma} \label{lem:14}
        $U^{(k)}_1(B^*_1 + \epsilon, B^*_2) > U^{(k)}_1(B^*_1,
        B^*_2)$ and $U^{(k)}_2(B^*_2, B^*_1 + \epsilon) \le
        U^{(k)}_2(B^*_2, B^*_1)$
        for sufficiently small $\epsilon > 0$.
    \end{lemma}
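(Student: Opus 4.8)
The plan is to express both agents' utilities in terms of the winning and losing utilities evaluated at agent $2$'s critical price, and then read off the two monotonicity claims directly from \pref{lem:13} together with the inductive monotonicity $\putilmon{k-1}$. Since we have restricted to the neighborhood where $\frac{k_1}{k_2+1} < \frac{B_1}{B_2} < \frac{k_1+1}{k_2+1}$, \pref{prop:allocseq} guarantees that agent $1$ wins the first item in the canonical outcome, paying agent $2$'s critical price. Hence, at budgets $(B^*_1, B^*_2)$,
$$U^{(k)}_1(B^*_1, B^*_2) = W^{(k)}_1(B^*_1, B^*_2, \price{k}{2}) = v_1 + U^{(k-1)}_1(B^*_1 - \price{k}{2}, B^*_2),$$
$$U^{(k)}_2(B^*_2, B^*_1) = L^{(k)}_2(B^*_2, B^*_1, \price{k}{2}) = U^{(k-1)}_2(B^*_2, B^*_1 - \price{k}{2}),$$
and the same structure persists after perturbation, with agent $2$'s critical price now equal to $p'_2(\epsilon)$.

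First I would handle agent $1$. After she wins the first item, her remaining budget is $B^*_1 + \epsilon - p'_2(\epsilon)$, which by \pref{lem:13} strictly exceeds $B^*_1 - \price{k}{2}$, her remaining budget in the unperturbed game. Applying the strict monotonicity of $U^{(k-1)}_1$ in its own budget argument (part of $\putilmon{k-1}$) yields
$$U^{(k)}_1(B^*_1 + \epsilon, B^*_2) = v_1 + U^{(k-1)}_1(B^*_1 + \epsilon - p'_2(\epsilon), B^*_2) > v_1 + U^{(k-1)}_1(B^*_1 - \price{k}{2}, B^*_2) = U^{(k)}_1(B^*_1, B^*_2).$$
For agent $2$, the perturbation enters only through agent $1$'s remaining budget in the subgame, since agent $2$ loses the first item. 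Using the identical budget comparison $B^*_1 + \epsilon - p'_2(\epsilon) > B^*_1 - \price{k}{2}$ (agent $1$ enters the subgame richer) and the non-increasing monotonicity of $U^{(k-1)}_2$ in the opponent's budget (again $\putilmon{k-1}$), this can only weakly hurt agent $2$:
$$U^{(k)}_2(B^*_2, B^*_1 + \epsilon) = U^{(k-1)}_2(B^*_2, B^*_1 + \epsilon - p'_2(\epsilon)) \le U^{(k-1)}_2(B^*_2, B^*_1 - \price{k}{2}) = U^{(k)}_2(B^*_2, B^*_1).$$

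The substance of the argument is really carried by \pref{lem:13}, whose proof in turn rests on the continuity of $L^{(k)}_2$ at the critical price (\pref{lem:12}); given these, \pref{lem:14} reduces to the two one-line budget comparisons above. The only genuine care needed here is to confirm that agent $1$ continues to win the first round at agent $2$'s critical price after the perturbation, which the neighborhood assumption supplies through \pref{prop:allocseq}, and to import the correct \emph{strict} versus non-strict direction of $\putilmon{k-1}$ for the two agents. I expect the main obstacle not to lie in this case at all, but in the remaining cases of the full proof, namely when the budget ratio sits at or near $\frac{k_1+1}{k_2+1}$ (where agent $2$ may instead win first and tie-breaking enters) and the symmetric decreasing-budget direction; there one must track which agent is the round-$1$ winner and handle the boundary between the two winning regimes, which is the delicate bookkeeping that the case-by-case structure of this section is designed to manage.
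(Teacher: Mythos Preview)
Your proposal is correct and follows essentially the same route as the paper's proof. The only cosmetic difference is that the paper phrases the two comparisons through the monotonicity of $W^{(k)}_1$ and $L^{(k)}_2$ in the price argument (\pref{lem:wlmon}), writing for instance $W^{(k)}_1(B^*_1+\epsilon,B^*_2,p'_2(\epsilon)) > W^{(k)}_1(B^*_1+\epsilon,B^*_2,\price{k}{2}+\epsilon) = W^{(k)}_1(B^*_1,B^*_2,\price{k}{2})$, whereas you unpack these directly to $U^{(k-1)}$ and invoke $\putilmon{k-1}$; the underlying inequality $p'_2(\epsilon) < \price{k}{2}+\epsilon$ from \pref{lem:13} drives both presentations in exactly the same way.
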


    \begin{proof}
        By \pref{lem:13}, we have
        \begin{align*}
            U^{(k)}_1(B^*_1 + \epsilon, B^*_2) & = W^{(k)}_1(B^*_1 +
            \epsilon, B^*_2, p'_2(\epsilon)) > W^{(k)}_1(B^*_1 +
            \epsilon, B^*_2, \price{k}{2} + \epsilon) \\ 
            & = W^{(k)}_1(B^*_1, B^*_2, \price{k}{2}) = U^{(k)}_1(B^*_1,
            B^*_2) 
        \end{align*}
        and 
        \begin{align*}
            U^{(k)}_2(B^*_2, B^*_1 + \epsilon) & = L^{(k)}_2(B^*_2,
            B^*_1 + \epsilon, p'_2(\epsilon)) \le L^{(k)}_2(B^*_2,
            B^*_1 + \epsilon, \price{k}{2} + \epsilon) \\
            & = L^{(k)}_2(B^*_2, B^*_1, \price{k}{2}) = U^{(k)}_2(B^*_2,
            B^*_1) \enspace.
        \end{align*}

        Therefore, we have deduced the desired monotonicity of the
        utilities.
    \end{proof}

    In order to prove continuity, we will first show the continuity of
    agent $2$'s critical price as $B_1$ increases. Since \pref{lem:13}
    already provides us with a lower bound of $p'(\epsilon)$, we only
    need to come up with a lower bound of $p'(\epsilon)$.

    \begin{lemma} \label{lem:15}
        For any sufficiently small $\epsilon > 0$, we have 
        $p'_2(\epsilon) > \price{k}{2} - \frac{k_2}{k_1} \epsilon$.
    \end{lemma}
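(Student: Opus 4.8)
The plan is to show that at the price $q \eqdef \price{k}{2} - \frac{k_2}{k_1}\epsilon$, agent $2$ strictly prefers winning the first item to losing it when the budgets are $B^*_2$ and $B^*_1 + \epsilon$; by the definition of the critical price this forces $p'_2(\epsilon) > q$, which is exactly the claim. Concretely, I will establish the chain
\begin{equation*}
    L^{(k)}_2(B^*_2, B^*_1 + \epsilon, q) \;\le\; L^{(k)}_2(B^*_2, B^*_1, \price{k}{2}) \;\le\; W^{(k)}_2(B^*_2, B^*_1, \price{k}{2}) \;<\; W^{(k)}_2(B^*_2, B^*_1 + \epsilon, q),
\end{equation*}
so that $W^{(k)}_2(B^*_2, B^*_1 + \epsilon, q) > L^{(k)}_2(B^*_2, B^*_1 + \epsilon, q)$; since the gap persists for prices slightly above $q$ in this (locally affine) regime, the crossover point $p'_2(\epsilon)$ lies strictly above $q$.

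The two leftmost links are immediate. The middle inequality is just the weak preference for winning at the critical price proved in \pref{sec:criticalprice} (\pref{lem:price}). For the left inequality, recall $L^{(k)}_2(B^*_2, B^*_1 + \epsilon, q) = U^{(k-1)}_2(B^*_2, B^*_1 + \epsilon - q)$ with $B^*_1 + \epsilon - q = (B^*_1 - \price{k}{2}) + \frac{k}{k_1}\epsilon > B^*_1 - \price{k}{2}$; since $U^{(k-1)}_2$ is non-increasing in agent $1$'s budget ($\putilmon{k-1}$), losing can only get worse for agent $2$ when agent $1$ keeps more budget in the resulting subgame, which gives the bound.

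The crux is the rightmost strict inequality, and this is where the slope $\frac{k_2}{k_1}$ is calibrated. By $\pallocseq{k-1}$, for sufficiently small $\epsilon$ the subgame reached after agent $2$ wins the first item has agent $1$ winning its $k_1$ items first and agent $2$ winning the remaining $k_2 - 1$ items as the later winner, paying agent $1$'s leftover budget. Using the boundary values of \pref{prop:utilcont} ($\putilcont{k-1}$) to pin down the (affine, degree-one-homogeneous) leftover-budget function on this region, one gets $U^{(k-1)}_2(B_2, B_1) = (k_2 - 1) v_2 + B_2 - (k_2 - 1)\big(B_1 - \frac{k_1}{k_2} B_2\big)$, whose partial derivatives are $1 + (k_2-1)\frac{k_1}{k_2}$ in agent $2$'s budget and $-(k_2-1)$ in agent $1$'s budget. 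Passing from $W^{(k)}_2(B^*_2, B^*_1, \price{k}{2}) = v_2 + U^{(k-1)}_2(B^*_2 - \price{k}{2}, B^*_1)$ to $W^{(k)}_2(B^*_2, B^*_1 + \epsilon, q) = v_2 + U^{(k-1)}_2(B^*_2 - \price{k}{2} + \frac{k_2}{k_1}\epsilon, B^*_1 + \epsilon)$ raises agent $2$'s subgame budget by $\frac{k_2}{k_1}\epsilon$ and agent $1$'s by $\epsilon$, so the net change is
\begin{equation*}
    \Big(1 + (k_2 - 1)\tfrac{k_1}{k_2}\Big)\tfrac{k_2}{k_1}\epsilon - (k_2 - 1)\epsilon \;=\; \tfrac{k_2}{k_1}\epsilon \;>\; 0,
\end{equation*}
the point being that the budget agent $2$ saves through the lower price exactly outweighs the budget agent $1$ gains. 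This yields $W^{(k)}_2(B^*_2, B^*_1 + \epsilon, q) > W^{(k)}_2(B^*_2, B^*_1, \price{k}{2})$ and completes the chain.

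The main obstacle is precisely this last step: it relies on the explicit affine form of $U^{(k-1)}_2$ on the relevant region, which is legitimate only when the perturbed winning and losing subgames stay inside a single interior allocation cell (no tie-breaking), as guaranteed by $\pallocseq{k-1}$ for small $\epsilon$ away from the tie ratios, and on correctly reading off the leftover-budget coefficient $\frac{k_1}{k_2}$ from \pref{prop:utilcont}. One degenerate case must be separated: if $W^{(k)}_2(B^*_2, B^*_1, \cdot)$ is discontinuous at $\price{k}{2}$ (the case of \pref{lem:price} in which agent $2$ already strictly prefers winning), the affine expansion breaks down because the winning subgame sits on a tie-breaking boundary. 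There $\price{k}{2} = B^*_2 - \frac{k_2}{k_1} B^*_1$, so by \pref{lem:wlcont} the discontinuity of $W^{(k)}_2(B^*_2, B^*_1 + \epsilon, \cdot)$ relocates to exactly $q$; for every $p < q$ winning hands agent $2$ an extra item (worth $\gg B^*_2$) so winning is strictly preferred, and one checks that $p'_2(\epsilon) = q$, which together with the upper bound of \pref{lem:13} still sandwiches $p'_2(\epsilon) \to \price{k}{2}$ and hence suffices for the continuity conclusion being proved.
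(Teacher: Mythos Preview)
Your overall plan—to show that at $q = \price{k}{2} - \frac{k_2}{k_1}\epsilon$ agent $2$ strictly prefers winning to losing when the opponent's budget is $B^*_1 + \epsilon$—is the same as the paper's, and your treatment of the leftmost inequality, the middle inequality (via \pref{lem:price}), and the discontinuous boundary case are all fine. The gap is in your derivation of the rightmost strict inequality.

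You assert that on the region where agent $1$ wins $k_1$ items first and agent $2$ wins the remaining $k_2-1$, the $(k-1)$-item utility has the single closed form
\[
U^{(k-1)}_2(B_2, B_1) \;=\; (k_2-1)v_2 + B_2 - (k_2-1)\Big(B_1 - \tfrac{k_1}{k_2} B_2\Big),
\]
obtained by interpolating the boundary values of \pref{prop:utilcont}. This is not correct: the utility is only \emph{piecewise} affine on that region. Already for $k-1=2$ (so $k_1=1$, $k_2=2$), Table~\ref{tab:twoitem} shows two distinct phases with $U^{(1)}_2 = v_2 + 2B_2 - 2B_1$ on one and $U^{(1)}_2 = v_2 + B_2 - \tfrac{1}{2}B_1$ on the other, neither of which matches your formula $v_2 + \tfrac{3}{2}B_2 - B_1$. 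Degree-one homogeneity of the remaining-budget part is true, but homogeneity plus continuity does not give affineness, so two boundary values cannot pin down the function. Consequently your slope calculation $\big(1+(k_2-1)\tfrac{k_1}{k_2}\big)\tfrac{k_2}{k_1}\epsilon-(k_2-1)\epsilon=\tfrac{k_2}{k_1}\epsilon$ is not justified, and the rightmost inequality is unproven.

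The paper avoids this by exploiting only the homogeneity. It chooses the slightly different test price $p''_2(\epsilon) = \price{k}{2} - \epsilon\,\frac{B^*_2 - \price{k}{2}}{B^*_1}$, calibrated so that $\big(B^*_2 - p''_2(\epsilon),\, B^*_1+\epsilon\big)$ is an exact scalar multiple of $\big(B^*_2 - \price{k}{2},\, B^*_1\big)$. Since scaling both budgets by the same factor preserves the allocation sequence and scales the leftover budget, one gets $W^{(k)}_2(B^*_2, B^*_1+\epsilon, p''_2(\epsilon)) \ge W^{(k)}_2(B^*_2, B^*_1, \price{k}{2})$ without any affine formula. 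In the continuous case $W=L$ at $\price{k}{2}$, and since agent $2$ wins $k_2-1$ items in the winning subgame one has $\frac{B^*_2 - \price{k}{2}}{B^*_1} < \frac{k_2}{k_1}$ strictly, which yields $p'_2(\epsilon) \ge p''_2(\epsilon) > \price{k}{2} - \tfrac{k_2}{k_1}\epsilon$. Replacing your affine claim with this scaling argument would repair your proof.
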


    \begin{proof}
        First, let us consider the case when $W^{(k)}_2(B^*_2, B^*_1,
        p)$ is continuous at $p = \price{k}{2}$. In this case, by
        \pref{lem:price} we know that $W^{(k)}_2(B^*_2, B^*_1,
        \price{k}{2}) = L^{(k)}_2(B^*_2, B^*_1, \price{k}{2})$. We
        let $p''_2(\epsilon) = \price{k}{2} - \epsilon \frac{(B^*_2 -
        \price{k}{2})}{B^*_1}$. Then, $p''_2(\epsilon) <
        \price{k}{2}$ and it is easy to verify that $B^*_2 -
        p''_2(\epsilon) = \frac{B^*_1 + \epsilon}{B^*_1} (B^*_2 -
        \price{k}{2})$.

        On the one hand, we have
        \begin{align}
            W^{(k)}_2(B^*_2, B^*_1 + \epsilon, p''_2(\epsilon)) & =
            v_2 + U^{(k-1)}_2(B^*_2 - p''_2(\epsilon), B^*_1 + \epsilon)
            \notag \\
            & = v_2 + U^{(k-1)}_2\left(\frac{B^*_1 + \epsilon}{B^*_1}
            (B^*_2 - \price{k}{2}), \frac{B^*_1 + \epsilon}{B^*_1}
            B^*_1\right) \notag \\ 
            & \ge v_2 + U^{(k-1)}_2(B^*_2 - \price{k}{2}, B^*_1) =
            W^{(k)}_2(B^*_2, B^*_1, \price{k}{2}) \enspace.
            \label{eq:2}
        \end{align}

        Here the last inequality holds because when both agents' budgets
        are multiply by the same factor, we will have the same
        allocation sequences and the remaining budget is multiply by the
        same factor.

        On the other hand, by $p''_2(\epsilon) < \price{k}{2}$ we have
        $B^*_1 - \price{k}{2} < B^*_1 + \epsilon - p''_2(\epsilon)$.
        So we have 
        \begin{align} 
            L^{(k)}_2(B^*_2, B^*_1, \price{k}{2}) & = U^{(k-1)}_2(B^*_2,
            B^*_1 - \price{k}{2}) < U^{(k-1)}_2(B^*_2, B^*_1 + \epsilon
            - p''_2(\epsilon)) \notag \\
            & = L^{(k)}_2(B^*_2, B^*_1 + \epsilon, p''_2(\epsilon))
            \enspace.  \label{eq:3}
        \end{align}

        By \pref{eq:2}, \pref{eq:3}, and $L^{(k)}_2(B^*_2, B^*_1,
        \price{k}{2}) = W^{(k)}_2(B^*_2, B^*_1, \price{k}{2})$, we get
        that $L^{(k)}_2(B^*_2, B^*_1 + \epsilon, p''_2(\epsilon)) <
        W^{(k)}_2(B^*_2, B^*_1, p''_2(\epsilon))$. Thus, $p'_2(\epsilon)
        \ge p''_2(\epsilon)$. 
        
        Note that $W^{(k)}_2(B^*_2, B^*_1, \price{k}{2}) =
        L^{(k)}_2(B^*_2, B^*_1, \price{k}{2}) \in [k_2 v_2, (k_2 +
        1) v_2)$ because agent $2$ gets $k_2$ items in the equilibrium.
        So by $W^{(k)}_2(B^*_2, B^*_1, \price{k}{2}) = v_2 +
        U^{(k-1)}_2(B^*_2 - \price{k}{2}, B^*_1)$, we get that
        $U^{(k-1)}_2(B^*_2 - \price{k}{2}, B^*_1) \in
        [(k_2 - 1) v_2, k_2 v_2)$. Therefore, agent $1$ gets $k_1$ items
        and agent $2$ gets $k_2 - 1$ items in the induced subgame after
        agent $2$ wins the first item. So by $\palloc{k}$ we have
        that $\frac{B^*_2 - \price{k}{2}}{B^*_1} < \frac{k_2}{k_1}$
        and therefore $p'_2(\epsilon) \ge p''_2(\epsilon) =
        \price{k}{2} - \frac{\epsilon(B^*_2 - \price{k}{2})}{B^*_1}
        > \price{k}{2} - \epsilon \frac{k_2}{k_1}$.

        Next, we will consider the case when $W^{(k)}_2(B^*_2, B^*_1,
        p)$ is discontinuous at $p = \price{k}{2}$. In this case, we
        know that $\frac{B^*_2 - \price{k}{2}}{B^*_1} =
        \frac{k_2}{k_1}$ and thus $\price{k}{2} = B^*_2 -
        \frac{k_2}{k_1} B^*_1$. Since for any $p < B^*_2 -
        \frac{k_2}{k_1}(B^*_1 + \epsilon)$ we have $\frac{B^*_2 -
        p}{B^*_1} > \frac{k_2}{k_1}$, we get that $W^{(k)}_2(B^*_2,
        B^*_1 + \epsilon, p) \ge (k_2 + 1) v_2 > L^{(k)}_2(B^*_2,
        B^*_1 + \epsilon, p)$ for any $p < B^*_2 - \frac{k_2}{k_1}(B^*_1
        + \epsilon)$. Therefore, we have $p'_2(\epsilon) \ge B^*_2 -
        \frac{k_2}{k_1} (B^*_1 + \epsilon) = \price{k}{2} - \epsilon
        \frac{k_2}{k_1}$.
    \end{proof}

    Now we are ready to show the continuity of $U^{(k)}_1$ and
    $U^{(k)}_2$ in $B_1$.

    \begin{lemma} \label{lem:16}
        We have
        $$\lim_{\epsilon \rightarrow 0^+} U^{(k)}_1(B^*_1 + \epsilon,
        B^*_2) = U^{(k)}_1(B^*_1, B^*_2) \quad,\quad
        \lim_{\epsilon \rightarrow 0^+} U^{(k)}_2(B^*_2, B^*_1 +
        \epsilon) = U^{(k)}_2(B^*_2, B^*_1) \enspace.$$
    \end{lemma}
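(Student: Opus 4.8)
The plan is to use the two-sided estimate on agent $2$'s critical price from \pref{lem:13} and \pref{lem:15} to control the limiting behaviour, and then reduce both continuity statements to the continuity of the $(k-1)$-item utility functions at a \emph{single} budget profile, which the inductive hypothesis $\putilcont{k-1}$ together with \pref{lem:12} already supplies. Combining $p'_2(\epsilon) < \price{k}{2} + \epsilon$ (\pref{lem:13}) with $p'_2(\epsilon) > \price{k}{2} - \frac{k_2}{k_1}\epsilon$ (\pref{lem:15}) gives $\price{k}{2} - \frac{k_2}{k_1}\epsilon < p'_2(\epsilon) < \price{k}{2} + \epsilon$, so $p'_2(\epsilon) \to \price{k}{2}$ and $B^*_1 + \epsilon - p'_2(\epsilon) \to B^*_1 - \price{k}{2}$ as $\epsilon \to 0^+$. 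Since in the canonical outcome agent $1$ wins the first item at agent $2$'s critical price, the definitions of $W^{(k)}_1$ and $L^{(k)}_2$ give, for all sufficiently small $\epsilon \ge 0$,
$$U^{(k)}_1(B^*_1 + \epsilon, B^*_2) = v_1 + U^{(k-1)}_1(B^*_1 + \epsilon - p'_2(\epsilon), B^*_2), \qquad U^{(k)}_2(B^*_2, B^*_1 + \epsilon) = U^{(k-1)}_2(B^*_2, B^*_1 + \epsilon - p'_2(\epsilon)),$$
where $\epsilon = 0$ recovers $U^{(k)}_1(B^*_1,B^*_2)$ and $U^{(k)}_2(B^*_2,B^*_1)$ because $p'_2(0) = \price{k}{2}$. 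Both limits therefore follow once I know $U^{(k-1)}_1$ and $U^{(k-1)}_2$ are continuous at the post-first-round profile $(B^*_1 - \price{k}{2}, B^*_2)$.

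For $U^{(k-1)}_2$ this is immediate from \pref{lem:12}: it states exactly that $L^{(k)}_2(B^*_2, B^*_1, p) = U^{(k-1)}_2(B^*_2, B^*_1 - p)$ is continuous at $p = \price{k}{2}$. The delicate point, which I expect to be the main obstacle, is the continuity of $U^{(k-1)}_1$ at $(B^*_1 - \price{k}{2}, B^*_2)$, since $U^{(k-1)}_1$ genuinely jumps at Type~I tie-breaking ratios and I must verify the reached profile does not sit on such a boundary. The plan is to argue by contradiction. In the subgame agent $1$ wins $k_1 - 1$ of the remaining items, so by \pref{prop:alloc} for $k-1$ items, $\frac{k_1 - 1}{k_2 + 1} \le \frac{B^*_1 - \price{k}{2}}{B^*_2} < \frac{k_1}{k_2}$; moreover $B^*_1 - \price{k}{2} > 0$ because $\price{k}{2} < \price{k}{1} \le B^*_1$. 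Hence, by $\putilcont{k-1}$, the only ratio in this range where $U^{(k-1)}_1$ could be discontinuous is the left endpoint $\frac{k_1 - 1}{k_2 + 1}$. If $\frac{B^*_1 - \price{k}{2}}{B^*_2} = \frac{k_1 - 1}{k_2 + 1}$, then equivalently $\frac{B^*_2}{B^*_1 - \price{k}{2}} = \frac{k_2 + 1}{k_1 - 1} = \frac{k''_2}{k''_1 + 1}$ with $k''_1 = k_1 - 2$ and $k''_2 = k_2 + 1$ (so $k''_1 + k''_2 = k - 1$); by $\putilcont{k-1}$ this would make $U^{(k-1)}_2$ discontinuous at $(B^*_2, B^*_1 - \price{k}{2})$, contradicting \pref{lem:12}. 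Thus $U^{(k-1)}_1$ is continuous at $(B^*_1 - \price{k}{2}, B^*_2)$ as well.

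Finally, passing to the limit $\epsilon \to 0^+$ in the two displayed identities and invoking the continuity just established yields
$$\lim_{\epsilon \to 0^+} U^{(k)}_1(B^*_1 + \epsilon, B^*_2) = v_1 + U^{(k-1)}_1(B^*_1 - \price{k}{2}, B^*_2) = U^{(k)}_1(B^*_1, B^*_2),$$
$$\lim_{\epsilon \to 0^+} U^{(k)}_2(B^*_2, B^*_1 + \epsilon) = U^{(k-1)}_2(B^*_2, B^*_1 - \price{k}{2}) = U^{(k)}_2(B^*_2, B^*_1),$$
which is \pref{lem:16}. The crux is entirely the middle paragraph: the leverage for ruling out the Type~I boundary comes from \pref{lem:12}, exploiting that the discontinuity loci of $U^{(k-1)}_1$ and $U^{(k-1)}_2$ coincide at interior budget profiles, so continuity of the losing utility transfers to continuity of the winning utility.
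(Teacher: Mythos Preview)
Your proposal is correct and follows essentially the same route as the paper's proof: combine \pref{lem:13} and \pref{lem:15} to get $B^*_1 + \epsilon - p'_2(\epsilon) \to B^*_1 - \price{k}{2}$, then appeal to continuity of $U^{(k-1)}_1$ and $U^{(k-1)}_2$ at the post-first-round profile. The paper simply asserts this continuity from the inductive hypothesis without further comment; your middle paragraph, which uses \pref{lem:12} together with the observation that the Type~I boundary $\frac{k_1-1}{k_2+1}$ is simultaneously a discontinuity point for $U^{(k-1)}_2$, is a genuine (and clean) justification of a step the paper leaves implicit.
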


    \begin{proof}
        Note that
        \begin{align}
            \lim_{\epsilon \rightarrow 0^+} U^{(k)}_1(B^*_1 +
            \epsilon, B^*_2) & = \lim_{\epsilon \rightarrow 0^+}
            \left(v_1 + U^{(k-1)}_1(B^*_1 + \epsilon - p'_2(\epsilon),
            B^*_2)\right) \label{eq:8} \\
            \lim_{\epsilon \rightarrow 0^+} U^{(k)}_2(B^*_2, B^*_1 +
            \epsilon) & = \lim_{\epsilon \rightarrow 0^+}
            U^{(k-1)}_2(B^*_2, B^*_1 + \epsilon - p'_2(\epsilon))
            \label{eq:9}
        \end{align}

        Note that we have inductively assume that
        $U^{(k-1)}_1$ and $U^{(k-1)}_2$ are continuous at $B_1 = B^*_1 -
        \price{k}{2}$. Further, by \pref{lem:13} and \pref{lem:15}, we
        know that $\lim_{\epsilon \rightarrow 0^+} (B^*_1 + \epsilon -
        p'_2(\epsilon)) = B^*_1 - \price{k}{2}$. So from \pref{eq:8}
        we have $\lim_{\epsilon \rightarrow 0^+} U^{(k)}_1(B^*_1 +
        \epsilon, B^*_2) = v_1 + U^{(k-1)}_1(B^*_1 - p'_2(\epsilon),
        B^*_2) = U^{(k)}_1(B^*_1, B^*_2)$ and from \pref{eq:9} we have
        $\lim_{\epsilon \rightarrow 0^+} U^{(k)}_2(B^*_2, B^*_1 +
        \epsilon) = U^{(k-1)}_2(B^*_2, B^*_1 - p'_2(\epsilon)) =
        U^{(k)}_2(B^*_2, B^*_1)$.
    \end{proof}

    Similarly, we could show the desired monotonicity and continuity of
    $U^{(k)}_1$ and $U^{(k)}_2$ as $B_1$ decreases. We will omit
    the tedious calculation in this paper.

    Next, let us move on to how $U^{(k)}_1$ and $U^{(k)}_2$ behaves
    as $B_2$ changes. Again, we will only consider the case that $B_2$
    increases as the opposite case is very similar.

    \begin{lemma} \label{lem:17}
        Suppose a budget profile $(B_1, B_2)$ satisfies
        $\frac{k'_1}{k'_2 + 1} < \frac{B_1}{B_2} < \frac{k'_1 + 1}{k'_2
        + 1}$ for $k'_1 + k'_2 = k - 1$. Then, for sufficiently
        small $\epsilon > 0$, we have $U^{(k-1)}_2(B_2 + \epsilon, B_1)
        \ge U^{(k-1)}_2(B_2, B_1) + \epsilon$.
    \end{lemma}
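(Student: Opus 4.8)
The plan is to induct on $k'_1$, the number of items the opponent (agent~$1$) wins first. By \pref{prop:allocseq}, the hypothesis $\frac{k'_1}{k'_2+1} < \frac{B_1}{B_2} < \frac{k'_1+1}{k'_2+1}$ puts us in the regime where agent~$1$ takes the first $k'_1$ items and agent~$2$ then wins the remaining $k'_2$ items, each at a price equal to agent~$1$'s frozen remaining budget $R_1$. Hence $U^{(k-1)}_2(B_2,B_1) = k'_2 v_2 + B_2 - k'_2 R_1$, and the claim is equivalent to the statement that $R_1$ is non-increasing in $B_2$: a wealthier agent~$2$ forces agent~$1$ to spend at least as much in the first phase. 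The base case $k'_1 = 0$ is immediate: agent~$2$ wins all $k-1$ items at the fixed per-item price $B_1$ (agent~$1$ never wins, so $R_1 = B_1$ is independent of $B_2$), giving $U^{(k-1)}_2(B_2,B_1) = (k-1)v_2 + B_2 - (k-1)B_1$, whose rate in $B_2$ is exactly $1$; for small $\epsilon$ the ratio stays in the same region so the allocation is unchanged.

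For $k'_1 \ge 1$ I would peel off the first round. Agent~$1$ wins it at agent~$2$'s critical price $p := \price{k-1}{2}$, so $U^{(k-1)}_2(B_2,B_1) = L^{(k-1)}_2(B_2,B_1,p) = U^{(k-2)}_2(B_2, B_1 - p)$, and by the two-phase structure of \pref{prop:allocseq} the residual budgets $(B_2, B_1-p)$ lie in the analogous agent-$1$-wins-first region with parameters $(k'_1 - 1, k'_2)$, where the induction hypothesis applies. Writing $\tilde p$ for agent~$2$'s critical price at $(B_2 + \epsilon, B_1)$, the same peeling gives $U^{(k-1)}_2(B_2+\epsilon, B_1) = U^{(k-2)}_2(B_2+\epsilon, B_1 - \tilde p)$. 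I would then chain
\[
U^{(k-2)}_2(B_2+\epsilon, B_1-\tilde p) \;\ge\; U^{(k-2)}_2(B_2+\epsilon, B_1-p) \;\ge\; U^{(k-2)}_2(B_2, B_1-p) + \epsilon = U^{(k-1)}_2(B_2,B_1) + \epsilon,
\]
where the first inequality uses \pref{prop:utilmon} (non-increasing in the opponent's budget) together with $\tilde p \ge p$, and the second is the induction hypothesis at $k-2$.

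The step that carries all the weight, and which I expect to be the main obstacle, is the critical-price monotonicity $\tilde p \ge p$: agent~$2$'s critical price must not decrease when her own budget grows. This is the own-budget analogue of \pref{lem:13} and \pref{lem:15}, which control the critical price as the \emph{opponent}'s budget varies. When $W^{(k-1)}_2$ is discontinuous at $p$ it is easy, since by \pref{lem:wlcont} the critical price then sits at a discontinuity of the form $B_2 - c\,B_1$, so the corresponding point at $(B_2+\epsilon,B_1)$ is $p + \epsilon \ge p$. The delicate case is when agent~$2$ is indifferent at $p$ (both $W^{(k-1)}_2$ and $L^{(k-1)}_2$ continuous there). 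Here $\tilde p \ge p$ is equivalent to $W^{(k-1)}_2(B_2+\epsilon, B_1, p) \ge L^{(k-1)}_2(B_2+\epsilon, B_1, p)$, and subtracting the indifference identity at $(B_2,B_1)$ reduces it to comparing the own-budget sensitivity of $U^{(k-2)}_2$ along the winning continuation $(B_2 - p, B_1)$ against that along the losing continuation $(B_2, B_1 - p)$. I would attack this using the scaling (degree-one homogeneity of the leftover-money part of the utility) that drives the proof of \pref{lem:15}, combined with the jump sizes recorded in \pref{prop:utilcont}: crossing into a region where agent~$2$ secures an extra item shifts her utility by a full $v_2 \gg B$, which dominates the comparison in exactly the right direction. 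Making this airtight across all sub-cases, and in particular handling the approach to the tie-break boundary $\frac{B_1}{B_2}\to\frac{k'_1+1}{k'_2+1}$ where the Type~II tie-breaking of \pref{prop:allocseq} enters, is where the bulk of the technical effort will go.
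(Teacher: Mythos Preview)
You correctly reformulate the claim as ``$R_1$ is non-increasing in $B_2$,'' but then you take a long detour to prove this by inducting on $k'_1$ and chasing the critical-price monotonicity $\tilde p\ge p$. The paper's proof is a single stroke that you miss: in this region $U^{(k-1)}_1(B_1,B_2)=k'_1 v_1 + R_1$, and the inductive hypothesis $\putilmon{k-1}$ already tells you $U^{(k-1)}_1$ is non-increasing in the opponent's budget $B_2$. Hence $R_1$ is non-increasing in $B_2$, so agent~$2$'s total payment $k'_2 R_1$ does not grow, and $U^{(k-1)}_2(B_2+\epsilon,B_1)=k'_2 v_2 + (B_2+\epsilon) - k'_2 R_1(B_2+\epsilon) \ge U^{(k-1)}_2(B_2,B_1)+\epsilon$. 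That is the whole proof.

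Your route, by contrast, puts all the weight on $\tilde p\ge p$, which you do not establish. In the indifferent case you reduce it to comparing the own-budget increments of $U^{(k-2)}_2$ along the winning continuation $(B_2-p,B_1)$ and the losing continuation $(B_2,B_1-p)$; but the induction hypothesis (Lemma~\ref{lem:17} at level $k-2$) only gives a \emph{lower bound} of $\epsilon$ on each increment, not the comparison between them that you need. The scaling/homogeneity idea and the $v_2\gg B$ jump heuristics you sketch do not close this gap without essentially reproving the monotonicity you are trying to establish. So as written the proposal has a genuine hole at precisely the step you flag as the main obstacle, and the fix is to abandon the round-by-round induction and apply $\putilmon{k-1}$ to agent~$1$ directly.
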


    \begin{proof}
        Since $\frac{k'_1}{k'_2 + 1} < \frac{B_1}{B_2} < \frac{k'_1 +
        1}{k'_2 + 1}$, in the canonical outcome of a sequential
        auction with $k-1$ item and budget profile $(B_1, B_2)$, agent
        $1$ will get the first $k'_1$ items and agent $2$ gets the
        remaining $k'_2$ items. Further, for sufficiently small
        $\epsilon$, we have the same allocation sequence in the
        canonical outcome when the budget profile is $(B_1, B_2 +
        \epsilon)$. 
        
        By \pref{prop:utilmon} of the $k-1$ item cases, we have
        $U^{(k-1)}_1(B_1, B_2) \ge U^{(k-1)}_1(B_1, B_2 + \epsilon)$. So
        the total price that agent $1$ pays becomes higher or remains
        the same as agent $2$'s budget changes from $B_2$ to $B_2 +
        \epsilon$. Note that in the equilibrium allocation sequences of
        both cases, agent $2$ pays agent $1$'s remaining budget for each
        item. So the total price that agent $2$ pays decreases and
        remains the same as her budget increases from $B_2$ to $B_2 +
        \epsilon$.  Thus, we have $U^{(k-1)}_2(B_2 + \epsilon, B_1) \ge
        U^{(k-1)}_2(B_2, B_1) + \epsilon$.
    \end{proof}

    \begin{lemma} \label{lem:18} 
        $\frac{k_1}{k_2} \le \frac{B^*_1}{B^*_2 - \price{k}{2}} <
        \frac{k_1 + 1}{k_2}$.
    \end{lemma}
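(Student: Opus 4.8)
The plan is to read off the budget ratio of the subgame that arises when agent $2$ \emph{wins} the first item at her own critical price $\price{k}{2}$, and then bound that ratio from below and from above. I would begin by splitting into the two cases of \pref{lem:price} at $\price{k}{2}$. In the discontinuous case, \pref{lem:15} already shows $\price{k}{2} = B^*_2 - \frac{k_2}{k_1} B^*_1$, so $\frac{B^*_1}{B^*_2 - \price{k}{2}} = \frac{k_1}{k_2}$, which satisfies both asserted inequalities; hence it remains to treat the continuous case, where agent $2$ is indifferent and $W^{(k)}_2(B^*_2, B^*_1, \price{k}{2}) = L^{(k)}_2(B^*_2, B^*_1, \price{k}{2})$. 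In this case I first record that agent $2$ wins exactly $k_2 - 1$ of the remaining items in the winning branch: since agent $2$ gets $k_2$ items in the canonical (losing) outcome we have $L^{(k)}_2(B^*_2,B^*_1,\price{k}{2}) = U^{(k)}_2(B^*_2,B^*_1) \in [k_2 v_2, (k_2+1)v_2)$, so by indifference $W^{(k)}_2(B^*_2,B^*_1,\price{k}{2}) = v_2 + U^{(k-1)}_2(B^*_2 - \price{k}{2}, B^*_1)$ lies in the same range, forcing $U^{(k-1)}_2(B^*_2 - \price{k}{2}, B^*_1) \in [(k_2-1)v_2, k_2 v_2)$. By value-dominance this means agent $1$ takes $k_1$ and agent $2$ takes $k_2 - 1$ items in the $(k-1)$-item subgame.

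The lower bound is then immediate: applying \pref{prop:alloc} to this subgame gives $\frac{k_1}{k_2} \le \frac{B^*_1}{B^*_2 - \price{k}{2}}$.

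For the upper bound I would argue by contradiction, assuming $\frac{B^*_1}{B^*_2 - \price{k}{2}} \ge \frac{k_1+1}{k_2}$, and write $s = B^*_2 - \price{k}{2}$. This threshold is exactly the Case~$4$ boundary of \pref{prop:allocseq} for the subgame (agent $1$ getting $k_1$, agent $2$ getting $k_2 - 1$), so in the winning branch agent $2$ wins her $k_2 - 1$ items first and then agent $1$ wins all $k_1$ of hers at the fixed price equal to agent $2$'s remaining budget $R_W$. Applying \pref{prop:defaultprice} to agent $1$ in the subgame gives $R_W \le \frac{s}{k_2}$ (here the hypothesis of \pref{prop:defaultprice} holds because the contradiction assumption keeps the subgame ratio strictly above the Type~I boundary $\frac{k_1}{k_2}$), so agent $2$'s total payment is $P_W = B^*_2 - R_W \ge B^*_2 - \frac{s}{k_2}$. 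In the losing branch (the canonical outcome) agent $2$ wins $k_2$ items at a fixed price $r_L$ equal to agent $1$'s remaining budget, and \pref{prop:defaultprice} applied to agent $2$ gives $r_L \le \frac{B^*_1}{k_1+1}$, whence $P_L = k_2 r_L \le \frac{k_2 B^*_1}{k_1+1}$. Since indifference makes the two branches yield agent $2$ the same utility and hence the same total payment, $P_W = P_L$, I obtain $B^*_2 - \frac{s}{k_2} \le \frac{k_2 B^*_1}{k_1+1}$. The contradiction hypothesis also gives $s \le \frac{k_2}{k_1+1} B^*_1$, i.e. $\frac{s}{k_2} \le \frac{B^*_1}{k_1+1}$, and combining the two yields $B^*_2 \le \frac{(k_2+1) B^*_1}{k_1+1}$, that is $\frac{B^*_1}{B^*_2} \ge \frac{k_1+1}{k_2+1}$, contradicting the standing assumption $\frac{B^*_1}{B^*_2} < \frac{k_1+1}{k_2+1}$ of this case. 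The same computation at the boundary value $\frac{k_1+1}{k_2}$ gives $\frac{B^*_1}{B^*_2} \ge \frac{k_1+1}{k_2+1}$ as well, so the inequality is strict.

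The main obstacle is the upper bound: \pref{prop:alloc} alone only pins the subgame ratio below $\frac{k_1+1}{k_2-1}$, and sharpening this to $\frac{k_1+1}{k_2}$ forces me to convert the indifference relation $W^{(k)}_2 = L^{(k)}_2$ into the payment identity $P_W = P_L$ and then squeeze it between two applications of \pref{prop:defaultprice} — one to agent $1$ in the winning-branch subgame and one to agent $2$ in the canonical outcome. The delicate points are checking that \pref{prop:defaultprice} genuinely applies under the contradiction hypothesis (the subgame ratio must stay strictly above the Type~I boundary) and confirming that the final inequality remains strict, which I expect to follow by also ruling out the boundary value as above.
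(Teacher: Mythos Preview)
Your proposal is correct and follows essentially the same route as the paper. Both arguments pin down that agent~$2$ receives exactly $k_2-1$ items in the winning-branch subgame (yielding the lower bound via \pref{prop:alloc}), and both obtain the strict upper bound by contradiction through two applications of \pref{prop:defaultprice}: one bounding agent~$2$'s remaining budget in the winning branch by $\frac{s}{k_2}\le\frac{B^*_1}{k_1+1}$, and one bounding it from below in the losing branch by $B^*_2-k_2\frac{B^*_1}{k_1+1}>\frac{B^*_1}{k_1+1}$.

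The only real difference is organizational. The paper works uniformly with the inequality $W^{(k)}_2(B^*_2,B^*_1,\price{k}{2})\ge L^{(k)}_2(B^*_2,B^*_1,\price{k}{2})$ from \pref{lem:price}, so the contradiction is phrased as ``$W<L$'' directly, covering the continuous and discontinuous cases in one stroke. You instead split on \pref{lem:price}: the discontinuous case is read off immediately as $\frac{B^*_1}{B^*_2-\price{k}{2}}=\frac{k_1}{k_2}$, and in the continuous case you convert indifference to the payment identity $P_W=P_L$ and derive the contradiction with the standing hypothesis $\frac{B^*_1}{B^*_2}<\frac{k_1+1}{k_2+1}$. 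This is a valid and slightly more explicit rearrangement of the same inequality chain; your careful check that the subgame stays off the Type~I boundary (so that \pref{prop:defaultprice} applies) is a point the paper leaves implicit.
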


    \begin{proof}
        By $W^{(k)}_2(B^*_2, B^*_1, \price{k}{2}) \ge
        L^{(k)}_2(B^*_2, B^*_1, \price{k}{2}) = U^{(k)}_2(B^*_2,
        B^*_1)$ and that agent $2$ gets $k_2$ items in equilibrium, we
        get that agent $2$ gets at least $k_2 - 1$ items in the induced
        subgame after winning the first item at price $\price{k}{2}$.
        So we have $\frac{k_1}{k_2} \le \frac{B^*_1}{B^*_2 -
        \price{k}{2}} < \frac{k_1 + 1}{k_2 - 1}$.

        Next, we will show that $\frac{B^*_1}{B^*_2 - \price{k}{2}} <
        \frac{k_1 + 1}{k_2}$. Suppose not. Then, by $\pallocseq{k}$ and
        $\pallocseqtie{k}$ we have that agent $2$'s remaining budget at
        the end of the induced subgame after she wins the first round is
        at most $\frac{B^*_1}{k_1 + 1}$. On the other hand, by
        $\pallocseq{k}$ we know that in the equilibrium of the
        $k$-item sequential auction, agent $2$'s remaining budget is
        at least $B^*_2 - k_2 \frac{B^*_1}{k_1 + 1} > \frac{k_2 + 1}{k_1
        + 1} B^*_1 - k_2 \frac{B^*_1}{k_1 + 1} = \frac{B^*_1}{k_1 + 1}$.
        So we deduce that $W^{(k)}_2(B^*_2, B^*_1, \price{k}{2}) <
        L^{(k)}_2(B^*_2, B^*_1, \price{k}{2})$, contradicting
        \pref{lem:price}.
    \end{proof}

    Further, since that agent $1$ gets the first $k_1$ item in the
    equilibrium, we know that agent $1$ gets the first $k_1 - 1$ in the
    induced subgame after she wins the first item at price
    $\price{k}{2}$. So by \pref{prop:allocseq} of the $k-1$ item cases,
    we have the following.

    \begin{lemma} \label{lem:19}
        $\frac{k_1 - 1}{k_2 + 1} < \frac{B^*_1 - \price{k}{2}}{B^*_2} <
        \frac{k_1}{k_2 + 1}$.
    \end{lemma}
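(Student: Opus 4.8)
The plan is to read off both inequalities from \pref{prop:allocseq} for $k-1$ items, applied to the subgame that arises after agent~$1$ wins the first item at price $\price{k}{2}$. As already observed, in this subgame the budgets are $(B^*_1 - \price{k}{2},\, B^*_2)$, there are $k-1 = (k_1-1) + k_2$ items, and agent~$1$ wins exactly $k_1 - 1$ of them. Moreover, by the two-phase winner sequence of the full $k$-item canonical outcome (\pref{prop:allocseq} for $k$ items), agent~$1$ wins all $k_1$ of her items consecutively before agent~$2$ wins any; restricting to the subgame, agent~$1$ wins her $k_1-1$ items first and then agent~$2$ wins her $k_2$ items. Thus \pref{prop:allocseq} applies with agent~$1$ in the role of the first-winner and parameters $(k_1-1,\,k_2)$.

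With these parameters, the hypothesis of \pref{prop:allocseq} reads $\frac{k_1-1}{k_2+1} < \frac{B^*_1 - \price{k}{2}}{B^*_2} \le \frac{(k_1-1)+1}{k_2+1} = \frac{k_1}{k_2+1}$, which is exactly the pair of bounds we want, except that the upper one is only weak. The strictness of the lower bound is immediate from \pref{lem:10}: since $(B^*_1, B^*_2)$ lies in the strict interior $\frac{k_1}{k_2+1} < \frac{B^*_1}{B^*_2} < \frac{k_1+1}{k_2+1}$ and hence is not a Type~I tie, the post-first-round budgets are not a Type~I tie either, so $\frac{B^*_1 - \price{k}{2}}{B^*_2} > \frac{k_1-1}{k_2+1}$ strictly.

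The one genuinely delicate point---and the main obstacle---is upgrading the upper bound to a strict inequality, i.e.\ ruling out the boundary $\frac{B^*_1 - \price{k}{2}}{B^*_2} = \frac{k_1}{k_2+1}$, which is precisely the Type~II-A tie-breaking of the subgame. I would rule it out by a contradiction in the style of \pref{lem:18}. Suppose equality holds. Then in the subgame the common per-item price is $p^* = \frac{B^*_1 - \price{k}{2}}{k_1} = \frac{B^*_2}{k_2+1}$, so agent~$1$'s remaining budget after her $k_1-1$ subgame wins equals $p^* = \frac{B^*_2}{k_2+1}$, and every price in the subgame---both agent~$1$'s wins and the $k_2$ items agent~$2$ subsequently buys at agent~$1$'s remaining budget---equals $\frac{B^*_2}{k_2+1}$. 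By \pref{prop:pricemon} the first-round price satisfies $\price{k}{2} \ge \frac{B^*_2}{k_2+1}$, while \pref{prop:defaultprice} bounds agent~$1$'s average price over all $k_1$ items by $\frac{B^*_2}{k_2+1}$; since the remaining $k_1-1$ prices already equal $\frac{B^*_2}{k_2+1}$, this forces $\price{k}{2} = \frac{B^*_2}{k_2+1}$. But then $B^*_1 = \price{k}{2} + (B^*_1 - \price{k}{2}) = \frac{B^*_2}{k_2+1} + \frac{k_1}{k_2+1}B^*_2 = \frac{k_1+1}{k_2+1}B^*_2$, i.e.\ $\frac{B^*_1}{B^*_2} = \frac{k_1+1}{k_2+1}$, contradicting our standing assumption $\frac{B^*_1}{B^*_2} < \frac{k_1+1}{k_2+1}$. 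Hence the upper bound is strict, which together with the previous paragraph proves $\frac{k_1-1}{k_2+1} < \frac{B^*_1 - \price{k}{2}}{B^*_2} < \frac{k_1}{k_2+1}$.
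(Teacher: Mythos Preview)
Your proof is correct and follows the same core idea as the paper: deduce the bounds from \pref{prop:allocseq} applied to the $(k-1)$-item subgame after agent~$1$ wins the first item. The paper's proof is in fact just the two sentences preceding the lemma statement (``since agent~$1$ gets the first $k_1$ items in the equilibrium, \dots\ by \pref{prop:allocseq} of the $k-1$ item case, we have the following''), and it does not explicitly justify why the upper inequality is \emph{strict} rather than the weak $\le$ that \pref{prop:allocseq} naturally yields at the Type~II-A boundary. You fill in exactly this gap with a clean contradiction argument.

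One small simplification: your detour through \pref{prop:defaultprice} is unnecessary. Once you assume $\frac{B^*_1 - \price{k}{2}}{B^*_2} = \frac{k_1}{k_2+1}$, solving for the first-round price gives $\price{k}{2} = B^*_1 - \frac{k_1}{k_2+1}B^*_2$, and the standing assumption $\frac{B^*_1}{B^*_2} < \frac{k_1+1}{k_2+1}$ immediately yields $\price{k}{2} < \frac{B^*_2}{k_2+1}$. Since the second-round price in the Type~II-A subgame is exactly $\frac{B^*_2}{k_2+1}$, this directly contradicts \pref{prop:pricemon} without needing the average-price bound. Also note that your invocation of $\ppricemon{k}$ is legitimate here: although it is not listed among the formal dependencies of $\putilmon{k}$/$\putilcont{k}$, it is proved earlier in the induction order and involves no circularity.
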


    \begin{lemma} \label{lem:20}
        $U^{(k)}_2(B^*_2 + \epsilon, B^*_1) \ge U^{(k)}_2(B^*_2,
        B^*_1) + \epsilon$ for sufficiently small $\epsilon > 0$.
    \end{lemma}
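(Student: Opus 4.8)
The plan is to reduce the $k$-item statement to the $(k-1)$-item monotonicity already in hand (\pref{lem:17}), paying careful attention to how agent $2$'s first-round critical price moves when her own budget grows. Under the standing assumption $\frac{k_1}{k_2+1} < \frac{B^*_1}{B^*_2} < \frac{k_1+1}{k_2+1}$, by $\pallocseq{k}$ agent $1$ wins the first item at agent $2$'s critical price $\price{k}{2}$, so
\[
U^{(k)}_2(B^*_2, B^*_1) = L^{(k)}_2(B^*_2, B^*_1, \price{k}{2}) = U^{(k-1)}_2(B^*_2, B^*_1 - \price{k}{2}).
\]
For sufficiently small $\epsilon$ the budget ratio stays in the same interval, so the same allocation persists at $(B^*_1, B^*_2 + \epsilon)$ and, writing $p'_2$ for agent $2$'s critical price there,
\[
U^{(k)}_2(B^*_2 + \epsilon, B^*_1) = U^{(k-1)}_2(B^*_2 + \epsilon, B^*_1 - p'_2).
\]

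The two facts I would combine are the following. First, \pref{lem:19} places the residual game $(B^*_1 - \price{k}{2}, B^*_2)$ exactly in the regime where \pref{lem:17} applies, giving
\[
U^{(k-1)}_2(B^*_2 + \epsilon, B^*_1 - \price{k}{2}) \ge U^{(k-1)}_2(B^*_2, B^*_1 - \price{k}{2}) + \epsilon.
\]
Second, $\putilmon{k-1}$ lets me pass from the old critical price to the new one, provided $p'_2 \ge \price{k}{2}$. Granting that inequality, $B^*_1 - p'_2 \le B^*_1 - \price{k}{2}$ together with non-increasingness of $U^{(k-1)}_2$ in the opponent's budget yields
\[
U^{(k)}_2(B^*_2 + \epsilon, B^*_1) = U^{(k-1)}_2(B^*_2 + \epsilon, B^*_1 - p'_2) \ge U^{(k-1)}_2(B^*_2 + \epsilon, B^*_1 - \price{k}{2}) \ge U^{(k)}_2(B^*_2, B^*_1) + \epsilon,
\]
which is the claim.

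The crux is therefore to show that agent $2$'s first-round critical price does not decrease when her own budget increases, i.e.\ $p'_2 \ge \price{k}{2}$. By \pref{lem:12} the losing utility is continuous at $\price{k}{2}$, so it suffices to verify that agent $2$ still weakly prefers winning at price $\price{k}{2}$ once her budget is $B^*_2 + \epsilon$, namely $W^{(k)}_2(B^*_2 + \epsilon, B^*_1, \price{k}{2}) \ge L^{(k)}_2(B^*_2 + \epsilon, B^*_1, \price{k}{2})$. Subtracting off the weak preference that already holds at budget $B^*_2$, this reduces to comparing the marginal value of agent $2$'s extra $\epsilon$ of budget in the two residual games: the \emph{winning} subgame $(B^*_1, B^*_2 - \price{k}{2})$, which by \pref{lem:18} is the regime where agent $2$ obtains $k_2 - 1$ items, against the \emph{losing} subgame $(B^*_1 - \price{k}{2}, B^*_2)$, where by \pref{lem:19} she obtains $k_2$ items.

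I expect this marginal-value comparison to be the main obstacle: one must argue that an extra unit of budget is worth at least as much to agent $2$ in the winning subgame, where she is the weaker, later-winning party facing the larger opponent budget, as in the losing subgame. I would establish it from the inductive hypotheses by expressing each utility through the ``loser pays the winner's residual budget'' structure of $\pallocseq{k-1}$ and invoking $\putilmon{k-1}$ and \pref{lem:17} in each subgame separately; the point is that the $(k_2-1)$-item winning subgame makes up for its one fewer item through the larger per-unit sensitivity of its prices to agent $2$'s budget (the loser's residual budget is fixed when agent $2$ wins all but her share, but strictly shrinks as her budget grows in the competitive phase). The symmetric cases (decreasing $B_2$, and the companion non-increasingness statement for agent $1$) follow along the same lines.
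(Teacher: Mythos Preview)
Your reduction to the $(k-1)$-item case via \pref{lem:17} and \pref{lem:19} is exactly the right input, but the route through ``$p'_2 \ge \price{k}{2}$'' introduces a step you do not actually establish. You correctly isolate the needed inequality as a comparison of marginal values of $\epsilon$ extra budget for agent~$2$ in the winning subgame versus the losing subgame, but the final paragraph only gestures at why the winning-side marginal should dominate; \pref{lem:17} gives a lower bound of $\epsilon$ on \emph{each} marginal and says nothing about their ordering. Without that ordering, $p'_2 \ge \price{k}{2}$ is unproven and the chain
\[
U^{(k-1)}_2(B^*_2 + \epsilon, B^*_1 - p'_2) \ge U^{(k-1)}_2(B^*_2 + \epsilon, B^*_1 - \price{k}{2})
\]
does not go through.

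The paper avoids this comparison entirely. Rather than tracking the new critical price, it applies \pref{lem:17} twice (once with $k'_1 = k_1$, $k'_2 = k_2 - 1$ via \pref{lem:18}, once with $k'_1 = k_1 - 1$, $k'_2 = k_2$ via \pref{lem:19}) to show that \emph{both}
\[
W^{(k)}_2(B^*_2 + \epsilon, B^*_1, \price{k}{2}) \ge U^{(k)}_2(B^*_2, B^*_1) + \epsilon
\quad\text{and}\quad
L^{(k)}_2(B^*_2 + \epsilon, B^*_1, \price{k}{2}) \ge U^{(k)}_2(B^*_2, B^*_1) + \epsilon.
\]
Hence, by bidding the \emph{old} price $\price{k}{2}$, agent~$2$ secures at least $U^{(k)}_2(B^*_2, B^*_1) + \epsilon$ whether she wins or loses the first item (if she loses, the price is $\ge \price{k}{2}$ and $L$ is non-decreasing in $p$). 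Since $\peq{k}$ is already available at this point in the induction, the canonical utility $U^{(k)}_2(B^*_2+\epsilon, B^*_1)$ is at least the value of this deviation, and the lemma follows. There is also a boundary case, when $\frac{B^*_1}{B^*_2 - \price{k}{2}} = \frac{k_1}{k_2}$ so that $W^{(k)}_2$ is discontinuous at $\price{k}{2}$, which the paper handles separately by having agent~$2$ bid $\price{k}{2}+\epsilon$; your write-up does not address it. The upshot: you were one observation away---use $\peq{k}$ and the min of $W,L$ at the old price, rather than trying to pin down $p'_2$.
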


    \begin{proof}
        If $\frac{k_1}{k_2} = \frac{B^*_1}{B^*_2 - \price{k}{2}}$,
        then $W^{(k)}_2(B^*_2, B^*_1, p)$ is discontinuous at $p =
        \price{k}{2}$.  In this case, agent $2$ could have bid
        $\price{k}{2} + \epsilon$ when her budget is $B^*_2 + \epsilon$.
        If she wins the first round, then by $B^*_2 + \epsilon -
        (\price{k}{2} + \epsilon) = B^*_2 - \price{k}{2} =
        \frac{k_2}{k_1} B^*_1$, she has some chance of winning $k_2 + 1$
        items. So the utility would greater than $U^{(k)}_2(B^*_2,
        B^*_1) + \epsilon$, where agent $2$ only gets $k_2$ items. If
        she loses the first round, then her utility is
        $U^{(k-1)}_2(B^*_2 + \epsilon, B^*_1 - \price{k}{2} -
        \epsilon)$. By \pref{lem:17} and the monotonicity of $U^{(k-1)}_2$
        in $B_1$, this utility is greater or equal to $U^{(k-1)}_2(B^*_2,
        B^*_1 - \price{k}{2}) + \epsilon = U^{(k)}_2(B^*_2, B^*_1) +
        \epsilon$.
        If $\frac{k_1}{k_2} \ne \frac{B^*_1}{B^*_2 - \price{k}{2}}$,
        then by \pref{lem:18} we have $\frac{k_1}{k_2} <
        \frac{B^*_1}{B^*_2 - \price{k}{2}} < \frac{k_1 + 1}{k_2}$.
        Moreover, for sufficiently small $\epsilon > 0$, we have
        $\frac{k_1}{k_2} < \frac{B^*_1}{B^*_2 + \epsilon -
        \price{k}{2}} < \frac{k_1 + 1}{k_2}$. Thus, by letting $k'_1 =
        k_1$ and $k'_2 = k_2 - 1$ in \pref{lem:17}, we get that 
        \begin{align}
            W^{(k)}_2(B^*_2 + \epsilon, B^*_1, \price{k}{2}) & = v_2 +
            U^{(k-1)}_2(B^*_2 + \epsilon - \price{k}{2}, B^*_1) &
            \text{(Definition of $W^{(k)}_2$)} \notag \\ 
            & \ge v_2 + U^{(k-1)}_2(B^*_2 - \price{k}{2}, B^*_1) +
            \epsilon & \text{(\pref{lem:17})} \notag \\ 
            & = W^{(k)}_2(B^*_2, B^*_1, \price{k}{2}) + \epsilon &
            \text{(Definition of $W^{(k)}_2$)} \notag \\
            & \ge L^{(k)}_2(B^*_2, B^*_1, \price{k}{2}) + \epsilon
            & \text{($\pprice{k}$)} \notag \\
            & = U^{(k)}_2(B^*_2, B^*_1) + \epsilon \enspace.
            \label{eq:4}
        \end{align}

        Further, by \pref{lem:19} we have $\frac{k_1 - 1}{k_2 + 1} <
        \frac{B^*_1 - \price{k}{2}}{B^*_2} < \frac{k_1}{k_2 + 1}$. So for
        sufficiently small $\epsilon > 0$, we have $\frac{k_1 - 1}{k_2 +
        1} < \frac{B^*_1 - \price{k}{2}}{B^*_2 + \epsilon} <
        \frac{k_1}{k_2 + 1}$. Therefore, by letting $k'_1 = k_1 - 1$ and
        $k'_2 = k_2$ in \pref{lem:17}, we have that
        \begin{align}
            L^{(k)}_2(B^*_2 + \epsilon, B^*_1, \price{k}{2}) & = 
            U^{(k-1)}_2(B^*_2 + \epsilon, B^*_1 - \price{k}{2}) \ge 
            U^{(k-1)}_2(B^*_2, B^*_1 - \price{k}{2}) + \epsilon \notag \\
            & = L^{(k)}_2(B^*_2, B^*_1, \price{k}{2}) + \epsilon
            = U^{(k)}_2(B^*_2, B^*_1) + \epsilon \enspace.
            \label{eq:5}
        \end{align}

        By \pref{eq:4} and \pref{eq:5}, agent $2$ could have bid
        $\price{k}{2}$ and guaranteed at least $U^{(k)}_2(B^*_2,
        B^*_1) + \epsilon$ utility when her budget is $B^*_2 + \epsilon$
        for sufficiently small $\epsilon$. Thus, $U^{(k)}_2(B^*_2 +
        \epsilon, B^*_1) \ge U^{(k)}_2(B^*_2, B^*_1) + \epsilon$.
    \end{proof}

    \begin{lemma} \label{lem:21}
        $U^{(k)}_1(B^*_1, B^*_2 + \epsilon) \le U^{(k)}_1(B^*_1,
        B^*_2)$ for sufficiently small $\epsilon > 0$.
    \end{lemma}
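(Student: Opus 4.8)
The plan is to derive Lemma~\ref{lem:21} directly from the just-established Lemma~\ref{lem:20} by exploiting a \emph{budget-conservation identity} that ties the two agents' utilities together in this regime. Under the standing assumption $\frac{k_1}{k_2+1} < \frac{B^*_1}{B^*_2} < \frac{k_1+1}{k_2+1}$ (with $k_2 \ge 1$, as in Lemmas~\ref{lem:18}--\ref{lem:20}), \pref{prop:allocseq} tells us that agent $1$ wins the first $k_1$ items at agent $2$'s critical prices and agent $2$ then wins the remaining $k_2$ items, each at a price equal to agent $1$'s remaining budget. Let $R_1$ denote agent $1$'s budget after her $k_1$ winning rounds. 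Since agent $1$ loses each of the last $k_2$ rounds, her budget stays fixed at $R_1$, so every one of those items is sold at exactly $R_1$. Hence
\[
U^{(k)}_1(B^*_1, B^*_2) = k_1 v_1 + R_1, \qquad U^{(k)}_2(B^*_2, B^*_1) = k_2 v_2 + B^*_2 - k_2 R_1 .
\]
Eliminating $R_1$ yields the identity
\[
U^{(k)}_1(B^*_1, B^*_2) = k_1 v_1 + v_2 + \frac{B^*_2 - U^{(k)}_2(B^*_2, B^*_1)}{k_2} .
\]

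Next I would observe that, because the inequalities defining the regime are strict, the profile $(B^*_1, B^*_2 + \epsilon)$ lies in the same allocation regime for all sufficiently small $\epsilon > 0$, so the identity holds verbatim there as well. Subtracting the two instances gives
\[
U^{(k)}_1(B^*_1, B^*_2 + \epsilon) - U^{(k)}_1(B^*_1, B^*_2) = \frac{\epsilon - \bigl(U^{(k)}_2(B^*_2 + \epsilon, B^*_1) - U^{(k)}_2(B^*_2, B^*_1)\bigr)}{k_2} .
\]
By \pref{lem:20} the bracketed increment in agent $2$'s utility is at least $\epsilon$, so the numerator is $\le 0$ and hence $U^{(k)}_1(B^*_1, B^*_2 + \epsilon) \le U^{(k)}_1(B^*_1, B^*_2)$, which is exactly the claim. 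Intuitively, raising agent $2$'s budget strengthens her threats, forcing agent $1$ to pay more for her $k_1$ items; the conservation identity converts the quantitative ``gains at least $\epsilon$'' statement for agent $2$ into the desired ``loses at most nothing'' statement for agent $1$.

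The generic case is therefore an essentially immediate corollary of \pref{lem:20}; the real content has been front-loaded into that lemma and into the structural facts of \pref{prop:allocseq} (that agent $1$ wins first and agent $2$ pays agent $1$'s residual budget per item). The one place that needs separate care is the degenerate boundary $k_2 = 0$, where agent $1$ wins every item, agent $2$ retains her full budget, and the identity degenerates. In that sub-case I would instead argue that agent $2$'s critical price $\price{k}{2}$ is non-decreasing in $B^*_2$, so that writing $p'_2$ for the critical price at budget $B^*_2 + \epsilon$ we have $p'_2 \ge \price{k}{2}$, whence
\[
U^{(k)}_1(B^*_1, B^*_2 + \epsilon) = W^{(k)}_1(B^*_1, B^*_2 + \epsilon, p'_2) \le W^{(k)}_1(B^*_1, B^*_2, \price{k}{2}) = U^{(k)}_1(B^*_1, B^*_2),
\]
using that $W^{(k)}_1$ is decreasing in $p$ and non-increasing in $B_{-1}$ (\pref{prop:wlmon}). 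The complementary regime in which agent $2$ wins first is symmetric. Thus the main obstacle is not the computation above but checking these boundary/degenerate cases and confirming the one-regime stability for small $\epsilon$, the latter being where continuity (\pref{prop:utilcont}) is implicitly relied upon.
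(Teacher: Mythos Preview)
Your proof is correct and follows essentially the same approach as the paper. Both arguments use the allocation structure from \pref{prop:allocseq} to observe that agent $2$'s total payment equals $k_2$ times agent $1$'s residual budget, and then invoke \pref{lem:20} to conclude that agent $2$'s total payment weakly decreases (hence agent $1$'s residual budget weakly decreases) when $B_2$ rises by $\epsilon$; you simply make this budget-conservation step explicit as an algebraic identity, whereas the paper states it in words. Your additional treatment of the degenerate case $k_2 = 0$ is a nice touch that the paper's proof, which relies on dividing by $k_2$ implicitly, glosses over.
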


    \begin{proof}
        Let us consider the equilibrium allocation sequences when the
        budget profiles are $(B^*_1, B^*_2)$ and $(B^*_1, B^*_2 +
        \epsilon)$ for sufficiently small $\epsilon > 0$. By our
        assumption, in both cases agent $1$ will gets the first $k_1$
        items and then agent $2$ will get the remaining $k_2$ items
        paying agent $1$'s remaining budget for each item.

        By \pref{lem:20}, we have $U^{(k)}_2(B^*_2 + \epsilon, B^*_1)
        \ge U^{(k)}_2(B^*_2, B^*_1) + \epsilon$. Thus, the total price
        that agent $2$ pays when her budget is $B^*_2 + \epsilon$ is
        lower than that when her budget is $B^*_2$. So we conclude that
        the remaining budget of agent $1$ when agent $2$'s budget is
        $B^*_2 + \epsilon$ is smaller than that when agent $2$'s budget
        is $B^*_2$. In other words, $U^{(k)}_1(B^*_1, B^*_2 +
        \epsilon) \le U^{(k)}_1(B^*_1, B^*_2)$.
    \end{proof}

    By \pref{lem:20} and \pref{lem:21}, we have shown the desired
    monotonicity of $U^{(k)}_1$ and $U^{(k)}_2$ in $B_2$. It remains
    to show the utility functions are continuous in $B_2$ at point
    $(B^*_1, B^*_2)$. 
    We will let $p'_2(\epsilon)$ denote the critical price of agent $2$
    when the budgets are $B^*_1$ and $B^*_2 +
    \epsilon$. In other words, $p'_2(\epsilon)$ is the price that agent
    $1$ pays in the first round. 

    \begin{lemma} \label{lem:22}
        For sufficiently small $\epsilon > 0$, we have $p'_2(\epsilon)
        \le \price{k}{2} + \epsilon$. 
    \end{lemma}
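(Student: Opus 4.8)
The plan is to bound the new critical price $p'_2(\epsilon)$ from above by exhibiting, for every price $p > \price{k}{2} + \epsilon$, that agent $2$ (now with budget $B^*_2 + \epsilon$ against agent $1$'s unchanged budget $B^*_1$) strictly prefers losing the first item to winning it, i.e.\ $W^{(k)}_2(B^*_2 + \epsilon, B^*_1, p) < L^{(k)}_2(B^*_2 + \epsilon, B^*_1, p)$. Since \pref{def:critical}, applied with budgets $(B^*_2 + \epsilon, B^*_1)$, guarantees that agent $2$ strictly prefers \emph{winning} at every price strictly below her critical price $p'_2(\epsilon)$, the existence of prices arbitrarily close above $\price{k}{2} + \epsilon$ at which winning is strictly worse forces $p'_2(\epsilon) \le \price{k}{2} + \epsilon$.

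First I would observe that the winning utility is invariant under simultaneously raising both the budget and the price by $\epsilon$. Indeed, since $W^{(k)}_2(B_2, B_1, p) = v_2 + U^{(k-1)}_2(B_2 - p, B_1)$, agent $2$'s remaining budget after winning is $B^*_2 + \epsilon - p$ in the shifted instance, which equals $B^*_2 - (p - \epsilon)$ in the original one, while agent $1$'s budget is $B^*_1$ in both cases. Hence $W^{(k)}_2(B^*_2 + \epsilon, B^*_1, p) = W^{(k)}_2(B^*_2, B^*_1, p - \epsilon)$. Now for any $p > \price{k}{2} + \epsilon$ we have $p - \epsilon > \price{k}{2}$, so the defining property of $\price{k}{2}$ (the critical price at budgets $(B^*_2, B^*_1)$) yields $W^{(k)}_2(B^*_2, B^*_1, p - \epsilon) < L^{(k)}_2(B^*_2, B^*_1, p - \epsilon)$.

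It then remains to compare the two losing utilities. Since $L^{(k)}_2(B_2, B_1, p) = U^{(k-1)}_2(B_2, B_1 - p)$, the shifted losing utility is $U^{(k-1)}_2(B^*_2 + \epsilon, B^*_1 - p)$ whereas the original one (evaluated at price $p - \epsilon$) is $U^{(k-1)}_2(B^*_2, B^*_1 - p + \epsilon)$. Passing from the latter to the former raises agent $2$'s own budget by $\epsilon$ and lowers agent $1$'s budget by $\epsilon$, both of which can only benefit agent $2$; so the inductive monotonicity $\putilmon{k-1}$ (\pref{prop:utilmon} for $k-1$ items) gives $L^{(k)}_2(B^*_2 + \epsilon, B^*_1, p) \ge L^{(k)}_2(B^*_2, B^*_1, p - \epsilon)$. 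Chaining the three relations gives $W^{(k)}_2(B^*_2 + \epsilon, B^*_1, p) < L^{(k)}_2(B^*_2 + \epsilon, B^*_1, p)$ for all $p > \price{k}{2} + \epsilon$, which is exactly the statement needed to conclude $p'_2(\epsilon) \le \price{k}{2} + \epsilon$.

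The argument is short, and I do not expect a genuine obstacle so much as the need for careful bookkeeping: keeping straight which argument of $U^{(k-1)}_2$ (and of $W^{(k)}_2, L^{(k)}_2$) is agent $2$'s own budget versus agent $1$'s, and verifying that the simultaneous $\epsilon$-shift moves \emph{both} arguments of the losing utility in agent $2$'s favor while leaving the winning utility unchanged. The restriction to sufficiently small $\epsilon$ is only there to keep the budgets in the regime where the critical prices $\price{k}{2}$ and $p'_2(\epsilon)$ are well defined (so the degenerate case of the footnote in \pref{def:critical} is excluded); the inequalities themselves rely on nothing beyond the definitions of $W^{(k)}_2, L^{(k)}_2$ and the inductive hypothesis $\putilmon{k-1}$.
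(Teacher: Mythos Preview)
Your proposal is correct and follows essentially the same approach as the paper: you use the identity $W^{(k)}_2(B^*_2+\epsilon,B^*_1,p)=W^{(k)}_2(B^*_2,B^*_1,p-\epsilon)$, invoke the defining property of $\price{k}{2}$ at $p-\epsilon>\price{k}{2}$, and then compare the two losing utilities via $\putilmon{k-1}$. The paper's proof writes $p=\price{k}{2}+\epsilon+\epsilon'$ and carries out the identical three-step chain, so there is no substantive difference.
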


    \begin{proof}
        Consider a price $\price{k}{2} + \epsilon + \epsilon'$ for
        sufficiently small $\epsilon' > 0$. By the definition of
        $W^{(k)}_2$, we have 
        \begin{align}
            W^{(k)}_2(B^*_2 + \epsilon, B^*_1, \price{k}{2} + \epsilon +
            \epsilon') & ~=~ v_2 + U^{(k-1)}_2(B^*_2 - \price{k}{2} -
            \epsilon', B^*_1) \notag \\
            & ~=~ W^{(k)}_2(B^*_2, B^*_1, \price{k}{2} + \epsilon')
            \enspace. \label{eq:6} 
        \end{align}

        Further, by the monotonicity of $U^{(k-1)}_2$, we have 
        \begin{align}
            L^{(k)}_2(B^*_2 + \epsilon, B^*_1, \price{k}{2} +
            \epsilon + \epsilon') & = U^{(k-1)}_2(B^*_2 + \epsilon,
            B^*_1 - \price{k}{2} - \epsilon - \epsilon') \notag \\
            & > U^{(k-1)}_2(B^*_2, B^*_1 - \price{k}{2} - \epsilon')
            \notag \\
            & = L^{(k)}_2(B^*_2, B^*_1, \price{k}{2} + \epsilon')
            \enspace.  \label{eq:7}
        \end{align} 
        
        Finally, by definition of $\price{k}{2}$, for any $p >
        \price{k}{2}$, we have $W^{(k)}_2(B^*_2, B^*_1, p) <
        L^{(k)}_2(B^*_2, B^*_1, p)$. So by \pref{eq:6}, \pref{eq:7},
        we have $W^{(k)}_2(B^*_2 + \epsilon, B^*_1, \price{k}{2} +
        \epsilon + \epsilon') < L^{(k)}_2(B^*_2 + \epsilon, B^*_1,
        \price{k}{2} + \epsilon + \epsilon')$. Note that this hold for
        any $\epsilon' > 0$. So by the definition of critical prices,
        $p'_2(\epsilon) \le \price{k}{2} + \epsilon$.
    \end{proof}

    \begin{lemma} \label{lem:23}
        We have
        $$\lim_{\epsilon \rightarrow 0^+} U^{(k)}_1(B^*_1, B^*_2 +
        \epsilon) = U^{(k)}_1(B^*_1, B^*_2) \quad\text{and}\quad
        \lim_{\epsilon \rightarrow 0^+} U^{(k)}_2(B^*_2 + \epsilon,
        B^*_1) = U^{(k)}_2(B^*_2, B^*_1) \enspace.$$
    \end{lemma}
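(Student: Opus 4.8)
The plan is to prove each of the two limits by a sandwich argument, using the monotonicity lemmas \pref{lem:20} and \pref{lem:21} (already established for one direction) together with the first-round decomposition and the bound $p'_2(\epsilon) \le \price{k}{2} + \epsilon$ from \pref{lem:22} for the other direction. Since in a neighborhood of $(B^*_1, B^*_2)$ agent $1$ wins the first item at agent $2$'s critical price $p'_2(\epsilon)$ and agent $2$ loses it, I would first record the decompositions
\begin{align*}
    U^{(k)}_1(B^*_1, B^*_2 + \epsilon) &= v_1 + U^{(k-1)}_1(B^*_1 - p'_2(\epsilon), B^*_2 + \epsilon), \\
    U^{(k)}_2(B^*_2 + \epsilon, B^*_1) &= U^{(k-1)}_2(B^*_2 + \epsilon, B^*_1 - p'_2(\epsilon)),
\end{align*}
valid for all sufficiently small $\epsilon > 0$.

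For $U^{(k)}_1$, the upper bound is immediate from monotonicity: \pref{lem:21} gives $U^{(k)}_1(B^*_1, B^*_2 + \epsilon) \le U^{(k)}_1(B^*_1, B^*_2)$, so $\limsup_{\epsilon \to 0^+} U^{(k)}_1(B^*_1, B^*_2 + \epsilon) \le U^{(k)}_1(B^*_1, B^*_2)$. For the matching lower bound I would use $p'_2(\epsilon) \le \price{k}{2} + \epsilon$ together with the monotonicity of $U^{(k-1)}_1$ in agent $1$'s budget to obtain
$$U^{(k-1)}_1(B^*_1 - p'_2(\epsilon), B^*_2 + \epsilon) \ge U^{(k-1)}_1(B^*_1 - \price{k}{2} - \epsilon, B^*_2 + \epsilon),$$
and then send $\epsilon \to 0^+$, invoking the inductive continuity $\putilcont{k-1}$ of $U^{(k-1)}_1$ at the point $(B^*_1 - \price{k}{2}, B^*_2)$. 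Symmetrically, for $U^{(k)}_2$ the lower bound is exactly \pref{lem:20}, namely $U^{(k)}_2(B^*_2 + \epsilon, B^*_1) \ge U^{(k)}_2(B^*_2, B^*_1) + \epsilon$, while the upper bound follows from $B^*_1 - p'_2(\epsilon) \ge B^*_1 - \price{k}{2} - \epsilon$, the monotonicity of $U^{(k-1)}_2$ in agent $1$'s budget, and continuity of $U^{(k-1)}_2$ at $(B^*_2, B^*_1 - \price{k}{2})$. In each case the two one-sided estimates pinch the limit to the asserted value.

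The one point requiring care --- and the step I would treat as the crux --- is the legitimacy of invoking $\putilcont{k-1}$ at these limit points, i.e.\ that $(B^*_1 - \price{k}{2}, B^*_2)$ is \emph{not} on the exceptional ratio locus where $U^{(k-1)}$ jumps. This is precisely what \pref{lem:19} supplies: it places the ratio $\frac{B^*_1 - \price{k}{2}}{B^*_2}$ strictly inside the open interval $\left(\frac{k_1 - 1}{k_2 + 1}, \frac{k_1}{k_2 + 1}\right)$, whose endpoints are consecutive discontinuity ratios of the $(k-1)$-item game, so for all small $\epsilon$ the perturbed points $(B^*_1 - \price{k}{2} - \epsilon, B^*_2 + \epsilon)$ remain in a single allocation region where $U^{(k-1)}$ is given by one (linear) formula and is therefore jointly continuous; the path limit along these points equals the function value. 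Note that this argument never requires a lower bound on $p'_2(\epsilon)$: the upper bound from \pref{lem:22} handles the ``inner'' side of each sandwich and the monotonicity lemmas \pref{lem:20}/\pref{lem:21} handle the ``outer'' side. Finally, I would remark that the symmetric regime $\frac{k_1 + 1}{k_2 + 1} < \frac{B^*_1}{B^*_2} < \frac{k_1 + 1}{k_2}$ (agent $2$ taking the early items) follows by the identical argument with the roles of the agents exchanged, and that the boundary ratio $\frac{B^*_1}{B^*_2} = \frac{k_1 + 1}{k_2 + 1}$ is excluded from this case and falls under the separate tie-breaking analysis.
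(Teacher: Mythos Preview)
Your argument is correct, and for $U^{(k)}_2$ it is essentially identical to the paper's: the sandwich with \pref{lem:20} below and \pref{lem:22} plus monotonicity/continuity of $U^{(k-1)}_2$ above is exactly what the paper does.

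For $U^{(k)}_1$ you take a genuinely different route. The paper does \emph{not} run a second sandwich; instead, having already established continuity of $U^{(k)}_2$ in $B_2$, it invokes the structural fact that in this regime agent~$2$ wins the last $k_2$ items paying agent~$1$'s remaining budget each time. Hence agent~$2$'s utility pins down agent~$1$'s terminal budget, and continuity of $U^{(k)}_2$ immediately yields continuity of agent~$1$'s remaining budget and thus of $U^{(k)}_1$. Your direct sandwich via \pref{lem:21} (upper) and \pref{lem:22} with $\putilcont{k-1}$ at $(B^*_1-\price{k}{2},B^*_2)$ (lower) is equally valid and has the merit of being symmetric with the $U^{(k)}_2$ argument; the paper's approach is shorter because it reuses the work already done for $U^{(k)}_2$ and avoids a second appeal to $\putilcont{k-1}$.

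One small imprecision: you write that the endpoints of the interval $\bigl(\tfrac{k_1-1}{k_2+1},\tfrac{k_1}{k_2+1}\bigr)$ from \pref{lem:19} are ``consecutive discontinuity ratios of the $(k-1)$-item game.'' Only the left endpoint $\tfrac{k_1-1}{k_2+1}$ is actually a type~I discontinuity ratio for the $(k-1)$-item game (with $k'_1=k_1-1$, $k'_2=k_2$); the next one up is $\tfrac{k_1}{k_2}$, not $\tfrac{k_1}{k_2+1}$. This does not affect your conclusion, since $\tfrac{k_1}{k_2+1}<\tfrac{k_1}{k_2}$ and so the interval of \pref{lem:19} still lies strictly inside a single continuity region of $U^{(k-1)}$, but the wording should be corrected.
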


    \begin{proof}
        On the one hand, by \pref{lem:20} we have $U^{(k)}_2(B^*_2 +
        \epsilon, B^*_1) \ge U^{(k)}_2(B^*_2, B^*_1) + \epsilon$,
        which goes to $U^{(k)}_2(B^*_2, B^*_1)$ as $\epsilon$ goes to
        zero. On the other hand, by \pref{lem:22}, we have
        $U^{(k)}_2(B^*_2 + \epsilon, B^*_1) = U^{(k-1)}_2(B^*_2 +
        \epsilon, B^*_1 - p'_2(\epsilon)) \le U^{(k-1)}_2(B^*_2 +
        \epsilon, B^*_1 - \price{k}{2} - \epsilon)$, which also goes to
        $U^{(k-1)}_2(B^*_2, B^*_1 - \price{k}{2}) = U^{(k)}_2(B^*_2,
        B^*_1)$ as $\epsilon$ goes to
        zero due to the continuity of $U^{(k)}_2$. So we have
        $\lim_{\epsilon \rightarrow 0^+} U^{(k)}_2(B^*_2 + \epsilon,
        B^*_1) = U^{(k)}_2(B^*_2, B^*_1)$.

        Note that agent $2$ will get the last $k_2$ items paying agent
        $1$'s remaining budget for each item. So the continuity of agent
        $2$'s utility implies that the remaining budget of agent $1$ and
        and the utility of agent $1$ are continuous in $B_2$ at point
        $(B^*_1, B^*_2)$. 
    \end{proof}

    Finally, it remains to prove the continuity and monotonicity at
    $\frac{B^*_1}{B^*_2} = \frac{k_1 + 1}{k_2 + 1}$. In fact, we can use
    the same arguments as above, except that we will assume the
    allocation sequence is agent $1$ gets item first when we consider
    $B_1$ approaches $B^*_1$ from below or $B_2$ approaches $B^*_2$ from
    above, and assume agent $2$ getting items first for the other two
    cases. 

\begin{proof}
    Since we have proved \pref{lem:14}, \pref{lem:16}, \pref{lem:20},
    \pref{lem:21},  \pref{lem:23}, it remains to verify the
    boundary conditions. We will analyze how $U^{(k)}_1(B_1, B_2)$
    behaves as $B_2$ approaches $\frac{k_2}{k_1} B_1$ from below. The
    proof of the other cases are very similar and we will omit them in
    this extended abstract.

    Suppose $B_2 = \frac{k_2 + 1}{k_1} B_1 - \epsilon$ for 
    sufficiently small $\epsilon > 0$. Then, by \pref{prop:allocseq},
    agent $1$ will get the first $k_1$ items in the canonical
    outcome and then agent $2$ will get the remaining $k_2$ items
    paying agent $1$'s remaining budget. In particular, agent $1$ will
    win the first item paying agent $2$'s critical price $\price{k}{2}$. 

    Next, we will let $p^* = \frac{B_1}{k_1}$ for the sake of expedition
    and argue $\price{k}{2} \ge p^* - \epsilon$. If agent $2$ wins the
    first item at $p^* - \epsilon$, then her remaining budget becomes
    $\frac{k_2}{k_1} B_1$. So by \pref{prop:alloctie}, agent $2$ will
    have non-zero probability of winning $k_2 + 1$ items. Hence, agent
    $2$ shall strictly prefers winning the first item at $p^* -
    \epsilon$. So we have $\price{k}{2} \ge p^* - \epsilon$.

    Therefore, we have $U^{(k)}_1(B_1, B_2) = U^{(k-1)}(B_1 -
    \price{k}{2}, B_2) \le U^{(k-1)}_1(B_1 - p^* + \epsilon, B_2)$. If
    we let $\epsilon$ goes to zero, the limit of the right-hand-size
    goes to $k_1 v_1$ by our inductive hypothesis. So 
    $$\lim_{B_2 \rightarrow \frac{k_2 + 1}{k_1} B_1^-} U^{(k)}_1(B_1,
    B_2) \le k_1 v_1 \enspace.$$

    Further, $U^{(k)}_1(B_1, B_2) \ge k_1 v_1$ for any $\epsilon > 0$.
    So the above holds with equality.

    \bigskip

    Finally, let us consider the utility of agent $i$ when
    $\frac{B_i}{B_{-i}} = \frac{k_i}{k_{-i} + 1}$. We will assume
    w.l.o.g.~that $i=1$ for the sake of presentation. By
    \pref{prop:alloctie}, both agents will keep bidding $p^* \eqdef
    \frac{B_1}{k_1} = \frac{B_2}{k_2 + 1}$ until one of the agent runs
    out of her budget. So agent $1$ will get $k_1 - 1$ items for sure,
    and with probability $\phi(k_2, k_1 - 1)$, agent $1$ will get an
    extra item. For $n = 1, \dots, k_1 - 1$, with probability $\phi(n,
    k_2) - \phi(n - 1, k_2) = {k_2 + n \choose n}
    \left(\frac{1}{2}\right)^{k_2 + n + 1}$, agent $1$ gets exactly $n$
    items before agent $2$ gets $k_2 + 1$ items. So the utility for
    agent $i$ is
    $$U^{(k)}_1(B_1, B_2) = (k_1 - 1) v_1 + \phi(k_2, k_1 - 1) v_1 +
    \sum_{n=1}^{k_1 - 1} {k_2 + n \choose n}
    \left(\frac{1}{2}\right)^{k_2 + n + 1} (B_1 - np^*) \enspace.$$

    Further, 
    \begin{eqnarray*}
        & & \sum_{n=1}^{k_1 - 1} {k_2 + n \choose n}
        \left(\frac{1}{2}\right)^{k_2 + n + 1} (B_1 - np^*) \\ 
        & = & \phi(k_1 - 1, k_2) B_1 - (k_2 + 1) p^* \sum_{n=1}^{k_1 -
        1} {k_2 + n \choose n - 1} \left(\frac{1}{2}\right)^{k_2 + n +
        1} \\
        & = & \phi(k_1 - 1, k_2) B_1 - (k_2 + 1) p^* \phi(k_1 - 2, k_2 +
        1) \enspace.
    \end{eqnarray*}

    Since, $p^* = \frac{B_2}{k_2 + 1}$ by our assumption, we have proved
    the asserted utility.
\end{proof}

\section{Semi-Trembling-Hand-Perfection of the Canonical Outcome} \label{app:unique}

In this section, we will prove that the canonical outcome is the only
stable equilibrium under the refinement of semi-trembling-hand-perfection.

\begin{lemma} \label{lem:refinement1}
    For $i = 1, 2$, suppose $\price{k}{i} > \price{k}{-i}$. Then, the
    canonical outcome is the unique semi-trembling-hand-perfect and
    subgame-perfect equilibrium.
\end{lemma}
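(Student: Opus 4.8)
The plan is to mirror the equilibrium-refinement argument for a single first-price auction, with the critical prices $\price{k}{1}, \price{k}{2}$ playing the role of valuations. Assume without loss of generality that $\price{k}{1} > \price{k}{2}$. By subgame perfection together with induction on $k$ (invoking the uniqueness statement---both this lemma and its companion tie-case---for every proper subgame of $k-1$ items), continuation play in each reached subgame must be the canonical outcome. Consequently the only freedom lies in the first round, which collapses to a single first-price auction whose payoffs to agent $i$ are the winning and losing utilities $W^{(k)}_i(B_i,B_{-i},\cdot)$ and $L^{(k)}_i(B_i,B_{-i},\cdot)$. It therefore suffices to (i) determine all subgame-perfect equilibria of this reduced first round and (ii) apply the semi-trembling-hand refinement to them.

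First I would characterize the first-round Nash equilibria, claiming each has agent $1$ bidding $\price{k}{2}$-style $b+$ and agent $2$ bidding $b$ for some $\price{k}{2} \le b < \price{k}{1}$. Agent $2$ cannot be the winner: if she won at price $b$, agent $1$ could bid $b+$ and win, and for agent $1$ not to prefer this we would need $b \ge \price{k}{1}$; but then agent $2$ would be winning at a price above her own critical price $\price{k}{2}$, and by \pref{prop:wlmon} together with \pref{lem:price} she strictly prefers to lose, a contradiction. So agent $1$ wins; let $b$ be agent $2$'s losing bid and let agent $1$ bid $b+$. Monotonicity of $W^{(k)}_2$ and $L^{(k)}_2$ in $p$ then forces $b \ge \price{k}{2}$ (otherwise agent $2$ strictly prefers winning and raises her bid) and $b < \price{k}{1}$ (otherwise agent $1$ strictly prefers losing and undercuts). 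This is exactly the family analogous to the $b \ge v_2$ equilibria of the single-item auction.

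Next I would use semi-trembling-hand-perfection to discard every $b > \price{k}{2}$. For such $b$, agent $2$'s bid is weakly dominated by a slightly smaller one: the losing outcome is unchanged, while bidding $b$ carries a positive risk of winning at a price exceeding her critical price $\price{k}{2}$, which is strictly harmful. Hence against any completely mixed approximation $\sigma_j$ of agent $1$, agent $2$'s best reply stays bounded strictly below $b$ and cannot converge to $b$; so no such profile is semi-trembling-hand-perfect. This leaves only $b = \price{k}{2}$, namely the canonical outcome, which establishes uniqueness among the first-round equilibria.

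It remains to verify that the canonical outcome itself is semi-trembling-hand-perfect (it is already subgame-perfect by \pref{lem:4a}). I would take completely mixed trembles with full support concentrating near the canonical bids---agent $1$ near $\price{k}{2}+$ and agent $2$ near $\price{k}{2}$---and check the two convergence conditions of \pref{def:sthp}. Against $\tau_j$, agent $1$ strictly prefers winning at every price below $\price{k}{1}$ (\pref{lem:price}, \pref{prop:wlmon}), so her best reply is to outbid the bulk of agent $2$'s mass by an infinitesimal amount, converging to $\price{k}{2}+$; against $\sigma_j$, agent $2$ weakly prefers losing at prices at least $\price{k}{2}$, and crucially $L^{(k)}_2$ is continuous at $\price{k}{2}$ (\pref{lem:3}), so the vanishing probability of accidentally winning from agent $1$'s tremble creates no profitable upward deviation, and her best reply converges to $\price{k}{2}$. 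The main obstacle is precisely this existence-and-convergence step in the continuous bid space: first-price auctions lack exact best replies (the ``bid just above'' phenomenon), which is why the $b+$ formalism is needed, and the technical care lies in showing best replies are well-defined against the trembles and converge correctly, especially in the case of \pref{lem:price} where $W^{(k)}_1$ is discontinuous at $\price{k}{1}$. The continuity of $L^{(k)}_2$ at $\price{k}{2}$ supplied by \pref{lem:3} is exactly what tames agent $2$'s side and makes the argument go through.
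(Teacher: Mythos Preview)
Your proposal is correct and follows essentially the same route as the paper: reduce to the first round via induction and subgame perfection, characterize the one-parameter family of first-round equilibria (agent $1$ bids $b+$, agent $2$ bids $b$, $\price{k}{2}\le b\le\price{k}{1}$), eliminate $b>\price{k}{2}$ by weak domination (so no completely mixed $\sigma_j$ can have a best reply converging to such $b$), and finally exhibit trembles for which best replies converge to the canonical bids. One small imprecision: the dominating strategy for agent $2$ is not ``a slightly smaller bid than $b$'' but rather a bid just above $\price{k}{2}$ (this is what the paper uses in its \pref{lem:refinement5}); with $b-\epsilon$ alone the comparison when agent $1$ bids in $(b-\epsilon,b)$ does not go through because $L^{(k)}_2$ is only non-decreasing in price. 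The paper also makes the positive step concrete by writing down an explicit density for $\sigma_j$ calibrated so that any bid below $\price{k}{2}(1-2^{-j})$ is strictly dominated by $\price{k}{2}(1-2^{-j-1})$, which pins the best replies into a shrinking window around $\price{k}{2}$; your sketch correctly flags this as the delicate part.
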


In the following discussion, let us assume w.l.o.g.~that $i = 1$ in
\pref{lem:refinement1} for the sake of presentation. In order to prove
\pref{lem:refinement1}, we will first show a few lemmas. The first lemma
clarifies the potential equilibrium strategies we need to consider.

\begin{lemma} \label{lem:refinement2}
    Suppose $\price{k}{1} > \price{k}{2}$ and the
    agents follow the canonical outcome in the subgames of $k-1$
    items. Then, the only candidate equilibrium strategy for the first
    round is agent $1$ bidding $p+$ and agent $2$ bidding $p$ for
    $\price{k}{1} \ge p \ge \price{k}{2}$.
\end{lemma}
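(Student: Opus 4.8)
The plan is to exploit the hypothesis that both agents play the canonical outcome in every $(k{-}1)$-item subgame, which collapses the first round into a one-shot game whose payoffs are exactly the winning and losing utilities: if agent $1$ bids $b_1$ and agent $2$ bids $b_2$, the winner pays her own bid and collects $W^{(k)}$ at that price, while the loser collects $L^{(k)}$ evaluated at the winner's payment, and a tie splits the item with probability $\frac{1}{2}$ each. I would then eliminate every bid profile outside the asserted family by producing a strictly profitable deviation, using throughout \pref{prop:wlmon} (so $W^{(k)}_i$ is strictly decreasing in $p$ and $L^{(k)}_i$ is non-decreasing in $p$) together with the defining property of the critical prices $\price{k}{i}$.

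First I would record a no-overbidding reduction: in any equilibrium the winner bids only infinitesimally above the loser, since a winner bidding strictly more than $b+$ (with $b$ the loser's bid) could lower her bid to $b+$, still win, and pay strictly less because $W^{(k)}$ is strictly decreasing. Thus every equilibrium has the form ``winner bids $p+$, loser bids $p$,'' the winner paying $p$. Next I would rule out agent $2$ winning. If she wins at price $p'$ with $p' < \price{k}{1}$, then agent $1$ can bid any $q \in (p',\price{k}{1})$, win at $q$, and obtain $W^{(k)}_1(B_1,B_2,q) > L^{(k)}_1(B_1,B_2,q) \ge L^{(k)}_1(B_1,B_2,p')$ (critical-price inequality, then monotonicity of $L^{(k)}_1$), a profitable deviation; if instead $p' \ge \price{k}{1}$, then $p' > \price{k}{2}$ by the hypothesis $\price{k}{1} > \price{k}{2}$, so $W^{(k)}_2 < L^{(k)}_2$ at $p'$ and agent $2$ profitably underbids to lose, letting agent $1$ win at $p'$. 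The same two facts kill a first-round tie at price $p$: if $p < \price{k}{1}$, agent $1$ bids $p+$ to win outright; if $p \ge \price{k}{1} > \price{k}{2}$, agent $2$ underbids to lose outright. Hence in every equilibrium agent $1$ wins at some price $p$.

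Finally I would pin down the range of $p$. If $p > \price{k}{1}$, then $W^{(k)}_1 < L^{(k)}_1$ at $p$, so agent $1$ prefers losing and deviates by underbidding (agent $2$ then wins at $p$), a contradiction, giving $p \le \price{k}{1}$. If $p < \price{k}{2}$, then for any $q \in (p,\price{k}{2})$ agent $2$ can outbid agent $1$'s bid $p+$ and win at $q$, obtaining $W^{(k)}_2(B_2,B_1,q) > L^{(k)}_2(B_2,B_1,q) \ge L^{(k)}_2(B_2,B_1,p)$, which exceeds her equilibrium payoff, a contradiction, giving $p \ge \price{k}{2}$. Combining, every equilibrium is exactly agent $1$ bidding $p+$ and agent $2$ bidding $p$ with $\price{k}{2} \le p \le \price{k}{1}$.

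I expect the main obstacle to be the bookkeeping around the infinitesimal ``$+$'' bids and the first-price payment rule: a deviator who wants to win must strictly outbid the opponent's $b+$, so her attainable winning payoff is only a supremum that a downward jump of $W^{(k)}$ at $p$ (see the discontinuities in \pref{lem:wlcont}) could in principle spoil. The device I would use to avoid every such continuity worry is to never pass to a limit but to evaluate the deviation at an \emph{interior} point $q$ strictly between the current price and the relevant critical price, where the critical-price inequality is strict and monotonicity of $L^{(k)}$ supplies the comparison to the equilibrium payoff. I would also emphasize that the strict separation $\price{k}{1} > \price{k}{2}$ is precisely what powers the exclusion of agent $2$ winning and of ties, and that the argument as stated characterizes the pure-strategy candidates that the subsequent semi-trembling-hand refinement then narrows to $p = \price{k}{2}$.
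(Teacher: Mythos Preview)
Your proposal is correct and follows essentially the same approach as the paper: both reduce the first round to a one-shot game with payoffs $W^{(k)}_i$ and $L^{(k)}_i$, then eliminate all profiles outside the asserted family by exhibiting a profitable deviation built from \pref{prop:wlmon} and the defining inequalities of the critical prices. Your case decomposition (first rule out agent~$2$ winning, then rule out ties, then bound $p$) is organized a bit differently from the paper's (first rule out strict bid gaps, then rule out prices outside $[\price{k}{2},\price{k}{1}]$, then rule out ties in that range), and your version is arguably more explicit about the case where agent~$2$ bids $p+$ with $p\in[\price{k}{2},\price{k}{1}]$, but the substance is identical.
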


\begin{proof}
    First, we note that one of the agents bidding strictly greater than
    the other ($p+$ is not considered greater than $p$)
    cannot be an equilibrium because by the monotonicity of $W^{(k)}_1$
    and $W^{(k)}_2$, the winner will make a lower bid in order to get
    the first item with a lower price. Further, both agents bidding
    strictly greater than $\price{k}{1}$ cannot be an equilibrium
    because the winner (in case of tie, both agents) will prefer losing
    the first item at such a high price. Moreover, both agents bidding
    strictly smaller then $\price{k}{2}$ cannot be an equilibrium
    either, because the loser (in case of tie, both agents) will prefer
    bid slightly higher and wins the first item. Finally, both agents
    bidding $p$ for $\price{k}{1} \ge p \ge \price{k}{2}$, that is,
    agent $1$ did not use the privilege of bidding $p+$, cannot be an
    equilibrium, because either agent $1$ would strictly prefer bidding
    $p+$ (if $p < \price{k}{1}$) or agent $2$ would strictly prefer
    underbids and losing the item (if $p > \price{k}{2}$). In sum, the
    only candidate equilibrium strategy for the first round is agent $1$
    bidding $p+$ and agent $2$ bidding $p$ for $\price{k}{1} \ge p \ge
    \price{k}{2}$.
\end{proof}

\begin{lemma} \label{lem:refinement5}
    Then, bidding $p > \price{k}{2}$ in the first round (and follows
    the unique equilibrium in the induced subgame) is weakly dominated
    for agent $2$.
\end{lemma}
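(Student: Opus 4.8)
The plan is to produce a single explicit strategy that weakly dominates bidding $p$, namely the strategy in which agent $2$ bids exactly her critical price $\price{k}{2}$ in the first round and then follows the (unique) equilibrium of the induced subgame. For brevity I will write $W_2(x)$ and $L_2(x)$ for $W^{(k)}_2(B_2,B_1,x)$ and $L^{(k)}_2(B_2,B_1,x)$. The whole argument is a case analysis on agent $1$'s first-round bid $b_1$, comparing agent $2$'s payoff when she bids $\price{k}{2}$ versus when she bids $p$, and showing the former is never worse and strictly better on a range of $b_1$. This mirrors the classical fact that bidding above one's value is weakly dominated in a first-price auction, with $\price{k}{2}$ playing the role of the value.

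I would organize the comparison into five regimes. (i) If $b_1 > p$, both bids lose, the two strategies reach the identical subgame, and the payoffs are equal. (ii) If $b_1 = p$, bidding $\price{k}{2}$ loses and yields $L_2(p)$, while bidding $p$ ties and yields $\tfrac12\bigl(W_2(p)+L_2(p)\bigr)$; since $p>\price{k}{2}$ forces $W_2(p)<L_2(p)$ by the defining property of the critical price (\pref{def:critical}), the tie payoff is strictly below $L_2(p)$, so $\price{k}{2}$ is strictly better. (iii) If $\price{k}{2}<b_1<p$, bidding $p$ wins at price $p$ (payoff $W_2(p)$) while bidding $\price{k}{2}$ loses (payoff $L_2(b_1)$); here $b_1>\price{k}{2}$ gives $L_2(b_1)>W_2(b_1)$, and since $W_2$ is strictly decreasing (\pref{prop:wlmon}) with $p>b_1$ we get $W_2(b_1)>W_2(p)$, so again $\price{k}{2}$ is strictly better. (iv) If $b_1<\price{k}{2}$, both bids win, and as $\price{k}{2}<p$ and $W_2$ is strictly decreasing, winning at $\price{k}{2}$ strictly beats winning at $p$.

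The one delicate regime, which I expect to be the main obstacle, is (v) $b_1=\price{k}{2}$: bidding $p$ wins outright with payoff $W_2(p)$, whereas bidding $\price{k}{2}$ ties with payoff $\tfrac12\bigl(W_2(\price{k}{2})+L_2(\price{k}{2})\bigr)$, and I must show this average is at least $W_2(p)$. Here I would invoke \pref{lem:price}. If $W_2$ and $L_2$ are both continuous at $\price{k}{2}$, agent $2$ is indifferent, so $W_2(\price{k}{2})=L_2(\price{k}{2})$, the tie payoff equals $W_2(\price{k}{2})>W_2(p)$, and we are done. If instead $W_2$ is discontinuous at $\price{k}{2}$, then $\price{k}{2}$ is the tie-point $B_2-\tfrac{k_2}{k_1}B_1$; winning it drops agent $2$ into the Type I tie-breaking subgame, and \pref{lem:7} gives $W_2(\price{k}{2})\ge k_2 v_2+\phi(k_1-1,k_2-1)v_2$, which exceeds $W_2(p)\le k_2 v_2+B_1$ by a $\Theta(v_2)$ margin. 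Since $L_2(\price{k}{2})\ge k_2 v_2$ (agent $2$ still secures $k_2$ items), the value-dominance assumption $v_2\gg 2^k B$ forces $\tfrac12\bigl(W_2(\price{k}{2})+L_2(\price{k}{2})\bigr)>W_2(p)$.

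Assembling the five regimes, bidding $\price{k}{2}$ yields at least the payoff of bidding $p$ against every $b_1$ and strictly more for every $b_1\in(\price{k}{2},p]$ and every $b_1<\price{k}{2}$; this is exactly the assertion that bidding $p>\price{k}{2}$ is weakly dominated for agent $2$. The only estimate beyond pure monotonicity and the critical-price definition is the value-dominance bound in the tie case $b_1=\price{k}{2}$, so I would isolate that computation and leave the remaining regimes as immediate consequences of \pref{prop:wlmon} and \pref{def:critical}.
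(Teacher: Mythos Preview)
Your proof is correct and follows essentially the same case-analysis strategy as the paper, but with one notable difference in the choice of dominating strategy. The paper dominates bidding $p$ by bidding $\price{k}{2}+\epsilon$ for small $\epsilon>0$, whereas you use $\price{k}{2}$ itself. This choice is what forces you into the delicate regime (v): when $b_1=\price{k}{2}$, your alternative strategy ties and you must bound the mixed payoff $\tfrac12\bigl(W_2(\price{k}{2})+L_2(\price{k}{2})\bigr)$ from below using \pref{lem:price} and the value-dominance assumption. The paper's $\epsilon$-shift sidesteps this entirely, since against $b_1=\price{k}{2}$ (or $\price{k}{2}+$) the bid $\price{k}{2}+\epsilon$ simply wins outright at a price below $p$, and the comparison reduces to monotonicity of $W_2$. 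Your route is a bit heavier but yields the same conclusion; the paper's is shorter because the only tie it encounters is at $\hat p=\price{k}{2}+\epsilon$, where $W_2<L_2$ already holds by the definition of the critical price.

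One small point to tighten: your regime (iii) as written does not cleanly cover $b_1=\price{k}{2}+$, since then agent $1$ pays exactly $\price{k}{2}$ and you need $L_2(\price{k}{2})\ge W_2(p)$ rather than $L_2(b_1)>W_2(b_1)$ for $b_1>\price{k}{2}$. This still goes through (from the proof of \pref{lem:price}, $L_2(\price{k}{2})\ge \lim_{q\to\price{k}{2}^+}W_2(q)>W_2(p)$), but it is worth isolating. Also, your bound $W_2(p)\le k_2 v_2+B_1$ in regime (v) should be $W_2(p)\le k_2 v_2+\tfrac{k_2}{k_1}B_1$ (or more crudely $k_2 v_2+B_2$); this does not affect the conclusion since the value-dominance assumption absorbs any budget term.
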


\begin{proof}
    Let us consider the alternative strategy of bidding $\price{k}{2} +
    \epsilon < p$ for sufficiently small $\epsilon > 0$. We will show
    this strategy weakly dominates bidding $p$ for agent $2$.
    
    If agent $1$ bids $\hat{p} > p$, then both strategies lose the
    first item at $\hat{p}$ and therefore yield the same payoff. 
    
    If agent $1$ bids $\hat{p} = p$ or $p+$, then by $p >
    \price{k}{2}$ we have $L^{(k)}_2(B_2, B_1, p) > W^{(k)}_2(B_2, B_1,
    p)$. So the utility of bidding $p$ is at most $L^{(k)}_2(B_2, B_1,
    p)$, which equals the utility of bidding $\price{k}{2} + \epsilon$
    and losing the first item. 
    
    If agent $1$ bids $\hat{p}$ s.t.~$\price{k}{2} + \epsilon \le
    \hat{p} < p$, then the utility of bidding $p$ is $W^{(k)}_2(B_2,
    B_1, p)$.  By the monotonicity of $W^{(k)}_2$, this is less than
    $W^{(k)}_2(B_2, B_1, \hat{p})$. Further, by $\hat{p} > \price{k}{2}
    + \epsilon \ge \price{k}{2}$, we get that $W^{(k)}_2(B_2, B_1,
    \hat{p}) \le L^{(k)}_2(B_2, B_1, \hat{p})$. So bidding
    $\price{k}{2} + \epsilon$ and losing the first item at $\hat{p}$ is
    strictly better.  

    Finally , if agent $1$ bids $\hat{p} < \price{k}{2} + \epsilon$,
    then the both strategies wins the first item. So bidding
    $\price{k}{2}$ is strictly better for that $W^{(k)}_2$ is decreasing
    as the price increases.
\end{proof}

\begin{lemma} \label{lem:refinement3}
    Suppose $\price{k}{1} > \price{k}{2}$ and the agents follow the
    canonical outcome in the subgames of $k-1$ items. Then, 
    for any $p$ s.t.~$\price{k}{1} \ge p > \price{k}{2}$, agent $1$
    bidding $p+$ and agent $2$ bidding $p$ is not a
    semi-trembling-hand-perfect equilibrium.
\end{lemma}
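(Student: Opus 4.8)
The plan is to derive the failure of semi-trembling-hand-perfection directly from the weak-domination fact in \pref{lem:refinement5}. Recall from \pref{def:sthp} that to show the profile ``agent $1$ bids $p+$, agent $2$ bids $p$'' (for $\price{k}{1} \ge p > \price{k}{2}$) is \emph{not} semi-trembling-hand-perfect, it suffices to exhibit a violation for \emph{every} sequence of completely mixed profiles $(\sigma_j, \tau_j)$ converging to it; I will show that for every such sequence the best reply to $\sigma_j$ fails to converge to agent $2$'s strategy $\tau$ of bidding $p$. In fact I will show the best reply stays uniformly bounded away from $p$, which is strictly stronger than mere non-convergence.

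The first step is to upgrade \pref{lem:refinement5} slightly: by the identical case analysis, \emph{any} first-round bid $b > \price{k}{2}$ is weakly dominated for agent $2$ by the bid $\price{k}{2} + \epsilon$ for every $\epsilon$ small enough that $\price{k}{2} + \epsilon < b$. The key point is that this domination becomes \emph{strict} against any completely mixed strategy $\sigma_j$ of agent $1$. Indeed, whenever agent $1$'s realized bid $\hat p$ satisfies $\hat p < b$, bidding $\price{k}{2}+\epsilon$ does strictly better than bidding $b$ (it either wins the item more cheaply when $\hat p < \price{k}{2}+\epsilon$, or it loses an item that $b$ would have won at an over-high price when $\price{k}{2}+\epsilon \le \hat p < b$, using $\hat p > \price{k}{2}$ so that $W^{(k)}_2 \le L^{(k)}_2$), while on the complementary event $\{\hat p \ge b\}$ the two bids yield the same payoff. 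Since $\sigma_j$ is completely mixed, the event $\{\hat p < b\}$ — and in particular the differentiating region $\hat p \in [\price{k}{2}+\epsilon, b)$ — carries positive probability, so the expected payoff of $\price{k}{2}+\epsilon$ strictly exceeds that of $b$ against $\sigma_j$.

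The second step is to conclude that, for each $j$, agent $2$'s best reply to $\sigma_j$ is supported entirely on $[0, \price{k}{2}]$: no bid $b > \price{k}{2}$ can be part of a best reply, as it is strictly dominated by $\price{k}{2}+\epsilon$. Hence every best reply assigns zero mass to the neighborhood of radius $p - \price{k}{2} > 0$ around the pure action $p$. Since this gap $p-\price{k}{2}$ is independent of $j$, the best reply to $\sigma_j$ cannot converge to agent $2$'s strategy of bidding $p$; as this holds for every completely mixed sequence $\sigma_j \to \sigma$, the convergence requirement of \pref{def:sthp} is violated and the profile is not semi-trembling-hand-perfect (together with \pref{lem:refinement2}, this leaves only $p = \price{k}{2}$ as a candidate). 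The main delicate point is making rigorous that weak domination upgrades to strict domination against completely mixed strategies over the \emph{continuum} of bids: one must pin down what ``completely mixed'' means as full support on the continuous bid space so that the differentiating region indeed receives positive probability, and check that the measure-zero tie events $\hat p = b$ and $\hat p = \price{k}{2}+\epsilon$ do not disturb the strict inequality in expectation.
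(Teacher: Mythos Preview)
Your proposal is correct and follows essentially the same approach as the paper: both arguments invoke \pref{lem:refinement5} to conclude that any bid above $\price{k}{2}$ is weakly dominated for agent~$2$, upgrade this to strict suboptimality against any completely mixed $\sigma_j$, and deduce that the best replies to $\sigma_j$ cannot converge to the pure bid $p > \price{k}{2}$. You spell out more carefully why the domination becomes strict (positive mass on the differentiating region) and flag the continuum-support technicality that the paper leaves implicit, but the underlying logic is the same.
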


\begin{proof}
    Consider any sequence $\{\sigma_j\}_j$ of completely mixed
    strategies of agent $1$ that converges to bidding $p+$. We will
    argue bidding strictly greater than $\price{k}{2}$ is sub-optimal
    for agent $2$ when agent $1$ use strategy $\sigma_j$ for any $j$
    because it is weakly dominated. Therefore, the best responses of
    $\{\sigma_j\}_j$ cannot converges to bidding $p$ since $p >
    \price{k}{2}$. So it is not a semi-trembling-hand-perfect
    equilibrium.
\end{proof}

\begin{lemma} \label{lem:refinement4}
    Suppose $\price{k}{1} > \price{k}{2}$ and the agents follow the
    canonical outcome in the subgames of $k-1$ items. Then, agent
    $1$ bidding $\price{k}{2}+$ and agent $2$ bidding $\price{k}{2}$ is
    a semi-trembling-hand-perfect equilibrium.
\end{lemma}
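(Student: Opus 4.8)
The plan is to invoke \pref{lem:4a} for subgame-perfection and then verify \pref{def:sthp} directly by exhibiting a suitable trembling sequence. As in the surrounding discussion I take $i=1$, so the candidate profile $(\sigma,\tau)$ is agent $1$ bidding $\price{k}{2}+$ and agent $2$ bidding $\price{k}{2}$, with both agents following the canonical outcome in every $(k-1)$-item subgame. Since \pref{lem:4a} already shows this is a subgame-perfect equilibrium, it remains only to construct completely mixed profiles $(\sigma_j,\tau_j)\to(\sigma,\tau)$ for which the best reply to $\sigma_j$ converges to $\tau$ (agent $2$ bidding $\price{k}{2}$) and the best reply to $\tau_j$ converges to $\sigma$ (agent $1$ bidding $\price{k}{2}+$). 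I would take each $\sigma_j$ to place mass $1-\eta_j$ on $\price{k}{2}+$ and spread the remaining $\eta_j\to 0$ over all first-round bids, and symmetrically each $\tau_j$ to place mass $1-\eta_j$ on $\price{k}{2}$; in each subgame both trembled profiles follow the canonical outcome, so continuation payoffs are exactly the functions $W^{(k)}_i, L^{(k)}_i, U^{(k)}_i$.

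For the best reply to $\tau_j$ (agent $1$): because $\price{k}{1}>\price{k}{2}$, the definition of the critical price together with the monotonicity of $W^{(k)}_1$ in $p$ (\pref{lem:wlmon}) gives that agent $1$ strictly prefers winning the first item at every price $p\le\price{k}{2}$. Against the main mass of $\tau_j$ at $\price{k}{2}$, the cheapest winning bid is $\price{k}{2}+$; bidding lower forfeits this mass (undesirable, since winning is strictly preferred), while bidding higher to also capture the $O(\eta_j)$ upper tail raises the price paid on the probability-$(1-\eta_j)$ main mass by a non-infinitesimal amount and is therefore not worthwhile for small $\eta_j$. Hence agent $1$'s best reply equals $\price{k}{2}+$ for all large $j$ and converges to $\sigma$.

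For the best reply to $\sigma_j$ (agent $2$): first, \pref{lem:refinement5} shows every bid $p>\price{k}{2}$ is weakly dominated, and since $\sigma_j$ is completely mixed the dominating strategy is \emph{strictly} better against $\sigma_j$; thus agent $2$'s best reply never bids above $\price{k}{2}$. Among bids $b_2\le\price{k}{2}$, agent $2$ loses the main mass (winner pays $\price{k}{2}$) with the fixed losing payoff $L^{(k)}_2(B_2,B_1,\price{k}{2})$ regardless of $b_2$, so the only payoff differences come from the tail of $\sigma_j$ below $b_2$, where agent $2$ wins and, because these prices lie below the critical price, strictly prefers winning to losing. To force the best reply up to $\price{k}{2}$ I would arrange the $\eta_j$-tail of $\sigma_j$ to be concentrated (while keeping $\sigma_j$ completely mixed) in the interval $[\price{k}{2}-1/j,\ \price{k}{2})$, so that capturing it requires $b_2\ge\price{k}{2}-1/j$; since winning there is strictly beneficial, the best reply satisfies $b_2^{(j)}\ge\price{k}{2}-1/j$ and, being $\le\price{k}{2}$, converges to $\price{k}{2}=\tau$.

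The main obstacle is this last step: pinning the limit of agent $2$'s best reply to exactly $\price{k}{2}$ rather than to some value strictly below it. Because agent $2$ loses the dominant mass no matter how she bids in $[0,\price{k}{2}]$, all payoff differences there are of order $\eta_j$, and one must verify that the (positive) marginal gain from capturing more of the beneficial sub-critical tail dominates the (negative) inframarginal effect that $W^{(k)}_2(B_2,B_1,b_2)$ is decreasing in $b_2$ — i.e.\ that winning additional low trembles outweighs paying slightly more on trembles already won. The construction above, which places the tail mass immediately below $\price{k}{2}$, is what makes this comparison come out in favour of bidding near $\price{k}{2}$; the case where $W^{(k)}_2$ is discontinuous at $\price{k}{2}$ (so that agent $2$ strictly prefers winning at the critical price, per \pref{lem:price}) must be checked separately, using that this extra value cannot be realized by bidding strictly above $\price{k}{2}$, where agent $2$ prefers losing. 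Together with \pref{lem:refinement2}, \pref{lem:refinement3} and \pref{lem:refinement5}, this identifies the canonical outcome as the unique semi-trembling-hand-perfect equilibrium.
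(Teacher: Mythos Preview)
Your approach is essentially the same as the paper's: invoke \pref{lem:4a} for subgame-perfection, construct trembling sequences, use \pref{lem:refinement5} to cap agent~$2$'s best reply at $\price{k}{2}$, and concentrate the tremble mass of $\sigma_j$ just below $\price{k}{2}$ to force agent~$2$'s best reply upward. The paper carries out the last step more explicitly by introducing parameters $\alpha_j$ (the maximum possible gain from winning cheaper on the far-below mass), $\beta_j$ (the gain from winning rather than losing in the concentrated interval), and then scaling the density of $\sigma_j$ outside the concentration window by $\gamma_j=\min\{1,\beta_j/\alpha_j\}$; this is precisely the quantitative device that resolves the tradeoff you flag as the ``main obstacle,'' and you should make a comparable explicit choice rather than leave it as a qualitative remark.
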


\begin{proof}
    Let us first consider the stability of agent $2$'s strategy.
    Consider the following sequence $\{\sigma_j\}_j$ of completely mixed
    strategies of agent $1$ that converges to bidding $\price{k}{2}+$.
    Let
    $$\alpha_j = W^{(k)}_2(B_2, B_1, \price{k}{2}(1 - 2^{-j})) -
    W^{(k)}_2(B_2, B_1, 0)$$
    denote the gain of winning the first item for free instead of
    $\price{k}{2}(1 - 2^{-j-1})$ for agent $2$.
    Let $$\beta_j = W^{(k)}_2(B_2, B_1, \price{k}{2}(1 - 2^{-j-1})) -
    L^{(k)}_2(B_2, B_1, \price{k}{2}(1 - 2^{-j-1}))$$
    denote the gain of winning the first item rather than losing it for
    agent $2$ when the price is $\price{k}{2}(1 - 2^{-j-1})$. We will
    let $\gamma_j = \min\{1, \frac{\beta_j}{\alpha_j}\}$ and define
    $\sigma_j$ such that the probability density of agent $1$ bidding
    $p$ when she uses $\sigma_j$ is:
    $$f_{\sigma_j}(p) = \left\{\begin{aligned}
        & \frac{1}{2^j p_2} & & \text{ , if $\abs{p - \price{k}{2}}
        \le \frac{p_2}{2^j}$} \\
        & \frac{\gamma_j}{2^{2j+2} p_2} & & \text{ , if $\abs{p -
        \price{k}{2}} > \frac{p_2}{2^j}$} 
    \end{aligned}\right. \enspace,$$
    and we will choose the probability of bidding $\price{p}{2}+$
    properly such that the probability sum up to $1$. It is easy to
    verify this sequence of completely mixed strategies converges to
    bidding $\price{p}{2}$. 
    
    Further, by \pref{lem:refinement5} we get that the best response
    must be bids smaller or equal to $\price{p}{2}$. 
    
    Finally, we claim any bid $p$ that is smaller than
    $\price{p}{2}(1 - 2^{-j})$ is strictly worse off comparing to
    bidding $\price{p}{2}(1 - 2^{-j-1})$. When agent $1$ bids above
    $\price{k}{2}(1 - 2^{-j-1})$, both strategy yields the same payoff.
    When agent $1$ bids between $\price{k}{2}(1 - 2^{-j})$ and
    $\price{k}{2}(1 - 2^{-j-1})$, which happens with prabability
    $2^{-2j-1}$ by our choice of $\sigma_j$, bidding $\price{k}{2}(1 -
    2^{-j-1})$ is better by at least $\beta_j$. When agent $1$ bids
    below $\price{k}{2}(1 - 2^{-j})$, which happens with probability at
    most $\gamma_j 2^{-2j-2}$ by our choice of $\sigma_j$, bidding
    $p$ could be better off by at most $\alpha_j$. So by $\gamma_j \le
    \frac{\beta_j}{\alpha_j}$, we get that bidding $\price{k}{2}(1 -
    2^{-j-1})$ is better for agent $2$. Therefore, we get that any best
    response bid to $\sigma_j$ must be at least $\price{k}{2}(1 -
    2^{-j-1})$.

    Summing up the above upper and lower bounds on the best response
    bids of agent $2$, we get that the best responses of $\sigma_j$
    converges to bidding $\price{p}{2}$ as $j$ increases.

    The stability of agent $1$'s strategy can be proved similarly. So
    the canonical outcome is semi-trembling-hand-perfect.
\end{proof}

Summarizing \pref{lem:refinement2}, \pref{lem:refinement5},
\pref{lem:refinement3}, and \pref{lem:refinement4}, we have proved
\pref{lem:refinement1}. Via simlar analysis, we can show the canonical
outcome is ``stable'' as well when the critical prices are the same
in the first round. We will omit the details here.

\begin{lemma} \label{lem:refinement6}
    Suppose $\price{k}{1} = \price{k}{2}$. Then, the canonical
    outcome is the unique semi-trembling-hand-perfect and
    subgame-perfect equilibrium.
\end{lemma}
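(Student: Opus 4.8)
The plan is to prove \pref{lem:refinement6} by mirroring the proof of \pref{lem:refinement1}, specializing its four moves — characterizing the candidate equilibria, establishing weak domination, ruling out the undesired candidates by semi-trembling-hand-perfection, and exhibiting the surviving one — to the degenerate configuration $\price{k}{1} = \price{k}{2} =: p^*$. Write $p^*$ for this common critical price. By \pref{lem:5} exactly one of two regimes occurs: either all four of $W^{(k)}_1, L^{(k)}_1, W^{(k)}_2, L^{(k)}_2$ are continuous at $p^*$, in which case \pref{lem:price} shows both agents are \emph{indifferent} between winning and losing the first item at $p^*$; or all four are discontinuous at $p^*$, in which case \pref{lem:6} shows both agents \emph{strictly prefer} winning at $p^*$ and $p^* = \frac{B_i}{k_i+1} = \frac{B_{-i}}{k_{-i}}$. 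I would carry the two regimes in parallel.

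First I would redo the monotonicity bookkeeping of \pref{lem:refinement2} with the two critical prices coinciding. One agent bidding strictly above the other is killed because the winner shades down (monotonicity of $W^{(k)}_i$, \pref{lem:wlmon}); both bidding strictly above $p^*$ is killed because each then prefers to lose; both bidding strictly below $p^*$ is killed because each then prefers to raise and win. The interval $\price{k}{1}\ge p\ge \price{k}{2}$ of \pref{lem:refinement2} collapses to the single point $p^*$, and, crucially, because $p^* = \price{k}{1} = \price{k}{2}$, the profile in which neither agent uses the $+$ privilege (both bidding exactly $p^*$) is no longer eliminated. Hence the only candidate first-round profiles consistent with the canonical continuation in the $(k-1)$-item subgames are: (i) both agents bidding $p^*$ (the canonical outcome), and (ii) the two asymmetric profiles in which agent $i$ bids $p^*+$ while agent $-i$ bids $p^*$.

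Next I would show that the symmetric profile is semi-trembling-hand-perfect and that the asymmetric ones are not. For the symmetric profile I would mimic \pref{lem:refinement4}, building for each agent a completely mixed sequence concentrating mass near $p^*$ (with the same weight ratio $\gamma_j$) and checking that the best replies converge to $p^*$: in the continuous regime this uses indifference at $p^*$ plus monotonicity of $W^{(k)}_i$, and in the discontinuous regime it uses \pref{lem:6}, \pref{lem:7}, and \pref{lem:8} to make winning at exactly $p^*$ beat both over- and underbidding in the limit. For the asymmetric profiles the required limit best reply is the bid $p^*+$, and I would argue this limit is unreachable. In the continuous regime agent $i$ is indifferent at $p^*$ but strictly prefers winning below $p^*$, so against any tremble $\tau_j$ of the opponent the marginal wins gained by bidding above $p^*$ are worth nothing to first order while shading below $p^*$ yields strictly cheaper wins; every best reply is therefore a bid at most $p^*$, never $p^*+$ — exactly the opposite of the strict case of \pref{lem:refinement4}, where the analogous asymmetric bid survives precisely because there the would-be winner \emph{strictly} prefers winning. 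In the discontinuous regime bidding $p^*+$ wins against the opponent's low trembles at the discontinuously worse price (the value $k_i v_i + \frac{k_i}{k_{-i}}B_{-i}$ of \pref{lem:wlcont}), whereas bidding exactly $p^*$ wins those same realizations at the good price $p^*$, so again the best reply is $p^*$. Either way the best reply to $\tau_j$ fails to converge to agent $i$'s prescribed bid $p^*+$, so the asymmetric profile violates \pref{def:sthp}; combining this with the surviving symmetric profile gives uniqueness.

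I expect the main obstacle to be the third move, and specifically the interaction between the discontinuity of $W^{(k)}_i$ at $p^*$ and the infinitesimal $+$ bookkeeping. One must argue carefully that a winning bid of $p^*+$ is scored on the lower (discontinuous) branch of $W^{(k)}_i$ while a winning bid of exactly $p^*$ is scored on the higher branch, so that no choice of trembling sequence can drive the best reply up to $p^*+$; simultaneously, in the continuous regime one must verify that the best replies, although they shade strictly below $p^*$, cannot reconverge to the distinct point $p^*+$ above them. Pinning down the best replies to the completely mixed $\tau_j$ precisely enough to separate the limits $p^*$ and $p^*+$ is where the real work lies; the remaining steps are routine specializations of the lemmas already proved for \pref{lem:refinement1}.
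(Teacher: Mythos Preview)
The paper does not actually prove \pref{lem:refinement6}: after proving \pref{lem:refinement1} in full it simply writes ``Via similar analysis, we can show the canonical outcome is `stable' as well when the critical prices are the same in the first round. We will omit the details here.'' Your proposal is precisely the specialization of the \pref{lem:refinement1} argument that the paper gestures at, so the approach matches.

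One point in your sketch is garbled, however. In the discontinuous regime, a winning bid of $p^*+$ is charged $p^*$, not a ``discontinuously worse price,'' so your stated reason for preferring $p^*$ to $p^*+$ there is incorrect. The cleaner route is to observe that in the discontinuous regime the asymmetric profile is not even a Nash equilibrium of the stage game: by \pref{lem:6} agent $-i$ strictly prefers winning to losing at $p^*$, so deviating from the bid $p^*$ to the bid $p^*+$ (forcing a tie with agent $i$ and winning with probability $\tfrac12$ at price $p^*$) is a strict improvement. This disposes of the discontinuous asymmetric case already at the subgame-perfection level, without invoking the refinement. Only the continuous-regime asymmetric profile then needs to be killed by semi-trembling-hand-perfection, and there your shading argument --- that against any completely mixed $\tau_j$ the best reply undercuts $p^*$ to capture the low-bid realizations more cheaply, so best replies lie at or below $p^*$ and hence cannot converge to the distinct strategy $p^*+$ --- is the right idea.
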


\end{document}